\definecolor{darkgreen}{rgb}{0,0.5,0}
\definecolor{orange}{rgb}{1,0.733,0}
\definecolor{darkorange}{rgb}{0.8,0.4,0}
\definecolor{gray}{rgb}{0.157,0.157,0.157}
\newcommand{\lessspace}{\vspace{-6pt}\setlength{\topsep}{0pt}\setlength{\parskip}{0pt}\setlength{\itemsep}{0pt}}
\let\O\relax
\DeclareMathOperator{\O}{O}
\let\originalleft\left
\let\originalright\right
\renewcommand{\left}{\mathopen{}\mathclose\bgroup\originalleft}
\renewcommand{\right}{\aftergroup\egroup\originalright}
\newcommand{\appendixref}[1]{\hyperref[#1]{Appendix~\ref*{#1}}}
\newcommand{\backlink}[2]{\hyperref[#1]{\UrlFont #2 Extended version of \autoref*{#1} (page \pageref*{#1}).}}
\newenvironment{restate}[2]{
	\def\localcountername{#1}%
	\def\localrefname{#2}%
	\noindent\backlink{#2}{$\downarrow$}%
	\expandafter\def\csname the#1\endcsname{\ref*{#2}}%
}{
	\addtocounter{\localcountername}{-1}%
	\noindent\backlink{\localrefname}{$\uparrow$}%
}
\newcommand{\bisimilar}{\mathrel{\sim}}
\newcommand{\wbisimilar}{\mathrel{\approx}}
\begin{document}

\title{Encoding Petri Nets into CCS\thanks{This is an extended version of the paper with the same title published at COORDINATION 2024.}}
\subtitle{Technical Report}

\titlerunning{Encoding Petri Nets into CCS (Technical Report)}

\author{Benjamin Bogø\inst{1}\orcidID{0009-0000-2192-3291} \and Andrea Burattin \inst{1}\orcidID{0000-0002-0837-0183} \and Alceste Scalas \inst{1}\orcidID{0000-0002-1153-6164}}

\authorrunning{B.~Bogø, A.~Burattin, and A.~Scalas}

\institute{Technical University of Denmark, Denmark\\
\email{\{bbog,andbur,alcsc\}@dtu.dk}}

\maketitle

\begin{abstract}
This paper explores the problem of determining which classes of Petri nets can be encoded into behaviourally-equivalent CCS processes. Most of the existing related literature focuses on the inverse problem (i.e., encoding process calculi belonging to the CCS family into Petri nets), or extends CCS with Petri net-like multi-synchronisation (Multi-CCS). In this work, our main focus are \emph{free-choice} and \emph{workflow} nets (which are widely used in process mining to describe system interactions) and our target is \emph{plain} CCS. We present several novel encodings, including one from free-choice workflow nets (produced by process mining algorithms like the $\alpha$-miner) into CCS processes, and we prove that our encodings produce CCS processes that are weakly bisimilar to the original net. Besides contributing new expressiveness results, our encodings open a door towards bringing analysis and verification techniques from the realm of process calculi into the realm of process mining.

\keywords{Petri nets \and CCS \and Encoding \and Bisimulation \and Free-choice workflow nets.}
\end{abstract}

\section{Introduction}\label{sec:introduction}

Process calculi and Petri nets are among the most successful tools for the modelling and verification of concurrent systems. The two models have significantly different designs: Petri nets have a more \emph{semantic} flavour, whereas process calculi have a more \emph{syntactic} flavour. This has resulted in significantly different approaches and application fields. In particular, Petri nets have found considerable success in the area of \emph{Workflow Management}, as the theoretical foundation for several \emph{Business Process Management} languages, and in \emph{process mining}, whereas the syntactic nature of process calculi has fostered a rich literature on the static verification of behavioural properties (e.g. via type checking or the axiomatisation of bisimulation relations), often connected to programming languages.

This different focus on semantics-vs.-syntax has naturally encouraged the study of Petri nets as a possible semantic model for process calculi, through the development of various encodings and results of the form: \emph{Petri nets (of the class $X$) are at least as expressive as the encoded calculus $Y$}. (For more details, see \autoref{sec:related-work}.)
In this paper we investigate the opposite problem: \emph{Which flavour of Petri nets can be encoded in Milner's Calculus of Communicating Systems (CCS)?} A reason for this investigation is the observation that applications of Petri nets in process mining (e.g.~via the $\alpha$-miner algorithm~\cite{alpha-miner}) often result in rather structured nets (in particular, \emph{free-choice workflow nets}~\cite{free-choice-nets,alpha-miner}) which are reminiscent of what is expressible in CCS. Therefore, we aim at proving whether this intuition is correct. Moreover, besides producing novel expressiveness results, developing an encoding from (selected classes of) Petri nets into CCS could also open new doors towards directly using process calculi in process mining, or applying analysis and verification techniques and tools originally developed for process calculi (e.g.~model checkers) to the realm of process mining.

\paragraph{Contributions and structure.}
\autoref{fig:overview} gives an overview of the relation and conversion between Petri nets classes and CCS considered in this paper. We start by presenting related work in \autoref{sec:related-work} and preliminaries in \autoref{sec:preliminaries}. Then, we present an encoding of \emph{free-choice workflow nets} into weakly bismilar CCS processes (\autoref{thm:free-wf-full}) in \autoref{sec:encoding}. Here, we also introduce a new class of Petri nets called \emph{group-choice nets} (which include free-choice nets) and show how to encode them into weakly bisimilar CCS processes (\autoref{thm:group}). We conclude and outline future work in \autoref{sec:conclusion-future}.
A software tool \cite{pn2ccs} has been created based on the results of this paper: a web application to import/draw Petri nets, classify them (according to \autoref{fig:overview}) and encode them into CCS. Proofs for lemmas and theorems are available in the appendices.

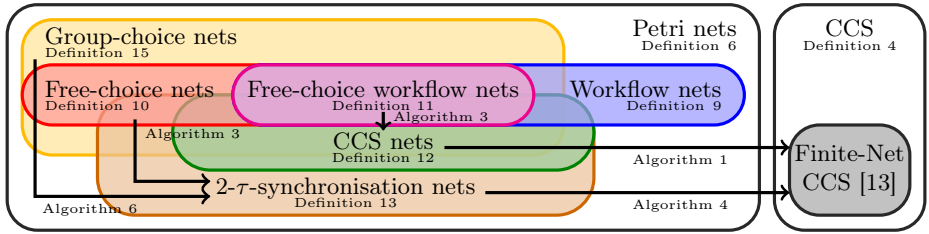
\begin{figure}[!t]
    \centering
    \begin{tikzpicture}[very thick]
        \filldraw[rounded corners=4mm, color=gray, fill=white!40, fill opacity=0.7](0,0) rectangle (10,3);
        \filldraw[rounded corners=4mm, color=gray, fill=white!40, fill opacity=0.7](10.2,0) rectangle (12.2,3);

        \filldraw[rounded corners=4mm, color=orange, fill=orange!40, fill opacity=0.7](0.2,1) rectangle (7.4,2.8);
        \filldraw[rounded corners=5mm, color=darkorange, fill=darkorange!40, fill opacity=0.7](1.2,0.2) rectangle (7.8,1.8);
        \filldraw[rounded corners=5mm, color=darkgreen, fill=darkgreen!40, fill opacity=0.7](2.2,0.8) rectangle (7.8,1.8);

        \filldraw[rounded corners=4mm, color=red, fill=red!40, fill opacity=0.7](0.2,1.4) rectangle (7,2.2);
        \filldraw[rounded corners=4mm, color=blue, fill=blue!40, fill opacity=0.7](3,1.4) rectangle (9.8,2.2);
        \filldraw[rounded corners=4mm, color=magenta, fill=magenta!40, fill opacity=0.7](3,1.4) rectangle (7,2.2);

        \filldraw[rounded corners=4mm, color=gray, fill=gray!40, fill opacity=0.7](10.4,1.4) rectangle (12,0.2);

        \node[align=left, text width=30mm] (gcn) at (2,2.5) {\hyperref[def:group-choice-net]{Group-choice nets\\[-6pt]\tiny{\autoref{def:group-choice-net}}}};
        \node[align=right, text width=24mm] (pn) at (8.5,2.6) {\hyperref[def:petri-net]{Petri nets}\\[-6pt]\tiny{\autoref{def:petri-net}}};
        \node[align=center] (c) at (11.2,2.6) {\hyperref[def:ccs-syntax]{CCS}\\[-6pt]\tiny{\autoref{def:ccs-syntax}}};
        \node[align=left, text width=24mm] (fcn) at (1.7,1.8) {\hyperref[def:free-choice-net]{Free-choice nets}\\[-6pt]\tiny{\autoref{def:free-choice-net}}};
        \node[align=center] (fcwn) at (5,1.8) {\hyperref[def:free-choice-workflow-net]{Free-choice workflow nets}\\[-6pt]\tiny{\autoref{def:free-choice-workflow-net}}};
        \node[align=right, text width=24mm] (wn) at (8.3,1.8) {\hyperref[def:workflow-net]{Workflow nets}\\[-6pt]\tiny{\autoref{def:workflow-net}}};
        \node[align=center] (cn) at (5,1.1) {\hyperref[def:ccs-net]{CCS nets}\\[-6pt]\tiny{\autoref{def:ccs-net}}};
        \node[align=center] (2tsn) at (4.5,0.5) {\hyperref[def:2-tau-synchronisation-net]{2-$\tau$-synchronisation nets}\\[-6pt]\tiny{\autoref{def:2-tau-synchronisation-net}}};
        \node[align=center, text width=20mm] (ccs) at (11.2,0.8) {Finite-Net CCS~\cite{lts-book}};
        \coordinate[] (2tsnt) at (2.7,0.65) {};
        \coordinate[] (2tsnb) at (2.7,0.45) {};
        \coordinate[] (ccs11) at (7.5,1.1) {};
        \coordinate[] (ccs12) at (10.4,1.1) {};
        \coordinate[] (ccs21) at (7.5,0.5) {};
        \coordinate[] (ccs22) at (10.4,0.5) {};

        \draw [->] (fcwn.south) ++(0,0.1) -- node [midway,right] {\tiny\autoref{alg:free-wf-full}} (cn.north) -- ++(0,-0.1);
        \draw [->] (fcn.south) |- node [near start,above right] {\tiny\autoref{alg:free-wf-full}} (2tsnt);
        \draw [->] (gcn.south west) ++(0,+0.1) |- node [midway,below right=-1pt] {\tiny\autoref{alg:group-full}} (2tsnb);
        \draw [->] (cn.east) -- (ccs11) -- node [midway,below=-1pt] {\tiny\autoref{alg:ccs-net}} (ccs12.west);
        \draw [->] (2tsn.east) -- (ccs21) -- node [midway,below=-1pt] {\tiny\autoref{alg:free}} (ccs22.west);
    \end{tikzpicture}
    \caption{Overview of the relation between Petri net classes and CCS considered in this paper. The arrows show algorithms for converting one class into another class.}
    \label{fig:overview}
\end{figure}

\section{Related Work}\label{sec:related-work}

Many process mining algorithms (like the $\alpha$-miner~\cite{alpha-miner}) take a log of traces of \emph{visible} actions and turn it into a \emph{workflow net}~\cite{process-mining-book}. Workflow nets are Petri nets used to describe how systems interact during a process. These kinds of interactions can also be described by process calculi like CCS with labeled semantics to capture the visible actions. There has been some debate about whether graphs like Petri nets or process calculi like the $\pi$-calculus are best for process mining~\cite{petri-net-vs-ccs}. Most of the existing work builds on Petri nets~\cite{process-mining-survey}, but there is also work on how to represent patterns in process calculi~\cite{workflow-patterns-in-ccs}.

Most existing work about encodings between process calculi and Petri nets focus on encoding the former into the latter: e.g.,~there are encodings from variants of CCS~\cite{DBLP:conf/litp/Goltz90,async-ccs-to-open-pn,ccs-with-replication-to-pn}, CSP~\cite{csp-to-labeled-pn}, and finite-control $\pi$-calculus \cite{safe-petri-net-to-ccs} to various classes of Petri nets. \cite{safe-petri-net-to-ccs} also briefly describes an encoding from \emph{unlabeled} safe (1-bounded) Petri nets into CCS with reduction semantics; the result of the encoding is claimed weakly bisimilar to the original net. However, applications in process mining require \emph{labelled} semantics.

To our knowledge, encodings of Petri nets into process calculi are less explored. Gorrieri and Versari~\cite{multi-ccs} present an extension of CCS called \emph{Multi-CCS}, with the purpose of having a one-to-one correspondence between unsafe P/T Petri nets and Multi-CCS; crucially, Multi-CCS can synchronize multiple processes at a time (like Petri nets) whereas CCS is limited to two synchronizing processes at a time. \cite{multi-ccs} also presents an encoding from Petri nets into strongly bisimilar Multi-CCS processes; they also show that encoding a restricted class of Petri nets (called \emph{CCS nets}) yields strongly bisimilar \emph{plain} CCS process. In this paper we start from this last result, and explore encodability beyond CCS nets --- targeting \emph{plain} CCS only (to enable reusing its well-established techniques e.g.~for model checking and axiomatic reasoning), and with an eye torward classes of Petri nets relevant for process mining.

\section{Preliminaries: LTSs, Bisimulations, CCS, Petri Nets}\label{sec:preliminaries}

This section contains the basic standard definitions used in the rest of the paper.

\paragraph{LTSs and bisimulations.} We adopt standard definitions of strong and weak bisimulation between LTS states (\autoref{def:lts}, \ref{def:strong-bisimulation}, and \ref{def:weak-bisimulation}, based on \cite{lts-book}).

\begin{definition}[Labeled transition system]
\label{def:lts}
    A \emph{labeled transition system (LTS)} is a triple $(Q, A, \xrightarrow{\cdot})$ where $Q$ is a set of \emph{states}, $A$ is a set of \emph{actions}, and $\mathop{\xrightarrow{\cdot}} \subseteq Q \times A \times Q$ is a \emph{labelled transition relation}. The set $A$ may contain a distinguished \emph{internal action} $\tau$, and we dub any other action as \emph{visible}. We write:
    \begin{itemize}
        \lessspace
        \item $q \xrightarrow{\mu} q'$ \,iff\, $(q, \mu, q') \in \mathop{\xrightarrow{\cdot}}$
        \item $q \xrightarrow{a} q'$ \,iff\, $a \neq \tau$ and $(q, a, q') \in \mathop{\xrightarrow{\cdot}}$ \hfill(note that the action $a$ is not silent)
        \item $q \xRightarrow{\epsilon} q'$ \,iff\, $q = q'$ or $q \xrightarrow{\tau}\cdots\xrightarrow{\tau} q'$ \hfill(i.e., $q$ can reach $q'$ in 0 or more $\tau$-steps)%
        \item $\smash{q \xRightarrow{a} q'}$ \,iff\, $\smash{q \xRightarrow{\epsilon}\xrightarrow{a}\xRightarrow{\epsilon} q'}$
        \hfill($q$ can reach $q'$ via one $a$-step + 0 or more $\tau$-steps)%
        \lessspace
    \end{itemize}
    We say that $q$ is a \emph{deadlock} if there are no transitions from $q$. A \emph{divergent path} is an infinite sequence of LTS states $q_1,q_2,\ldots$ such that $q_i \xrightarrow{\tau} q_{i+1}$.
\end{definition}

\begin{definition}[Strong bisimulation]
\label{def:strong-bisimulation}
    A \emph{strong bisimulation} between two LTSs $(Q_1, A_1, \xrightarrow{\cdot}_1)$ and $(Q_2, A_2, \xrightarrow{\cdot}_2)$ is a relation $\mathcal{R} \subseteq Q_1 \times Q_2$ where, if $(q_1, q_2) \in \mathcal{R}$:

    \smallskip\centerline{$\begin{array}{c}%
        \forall q_1': q_1 \mathrel{\xrightarrow{\mu}_1} q_1' \;\;\text{implies}\;\; \exists q_2': q_2 \mathrel{\xrightarrow{\mu}_2} q_2' \;\text{and}\; (q_1', q_2') \in \mathcal{R}
        \\
        \forall q_2': q_2 \mathrel{\xrightarrow{\mu}_2} q_2' \;\;\text{implies}\;\; \exists q_1': q_1 \mathrel{\xrightarrow{\mu}_1} q_1' \;\text{and}\; (q_1', q_2') \in \mathcal{R}
    \end{array}$}\smallskip

    \noindent%
    We say that $q$ and $q'$ are \emph{strongly bisimilar} or simply \emph{bisimilar} (written $q \bisimilar q'$) if there exists a bisimulation $\mathcal{R}$ with $(q, q') \in \mathcal{R}$.
\end{definition}

\begin{definition}[Weak bisimulation]
\label{def:weak-bisimulation}
    A \emph{weak bisimulation} between two LTSs $(Q_1, A_1, \xrightarrow{\cdot}_1)$ and $(Q_2, A_2, \xrightarrow{\cdot}_2)$ is a relation $\mathcal{R} \subseteq Q_1 \times Q_2$ where, if $(q_1, q_2) \in \mathcal{R}$:

    \smallskip\centerline{$\begin{array}{c}%
        \forall q_1': q_1 \mathrel{\xrightarrow{a}_1} q_1' \;\;\text{implies}\;\; \exists q_2': q_2 \mathrel{{\xRightarrow{a}}{}_2} q_2' \;\text{and}\; (q_1', q_2') \in \mathcal{R}
        \\
        \forall q_1': q_1 \mathrel{\xrightarrow{\tau}_1} q_1' \;\;\text{implies}\;\; \exists q_2': q_2 \mathrel{{\xRightarrow{\epsilon}}{}_2} q_2' \;\text{and}\; (q_1', q_2') \in \mathcal{R}
        \\
        \forall q_2': q_2 \mathrel{\xrightarrow{a}_2} q_2' \;\;\text{implies}\;\; \exists q_1': q_1 \mathrel{{\xRightarrow{a}}{}_1} q_1' \;\text{and}\; (q_1', q_2') \in \mathcal{R}
        \\
        \forall q_2': q_2 \mathrel{\xrightarrow{\tau}_2} q_2' \;\;\text{implies}\;\; \exists q_1': q_1 \mathrel{{\xRightarrow{\epsilon}}{}_1} q_1' \;\text{and}\; (q_1', q_2') \in \mathcal{R}
    \end{array}$}\smallskip

    \noindent%
    We say that $q$ and $q'$ are \emph{weakly bisimilar} (written $q \wbisimilar q'$) if there is a weak bisimulation $\mathcal{R}$ with $(q, q') \in \mathcal{R}$.
\end{definition}

\paragraph{CCS.} We adopt a standard version of CCS with LTS semantics, including restrictions and defining equations (\autoref{def:ccs-syntax} and \ref{def:ccs-semantics}, based on~{\cite{lts-book}).

\begin{definition}[CCS syntax]\label{def:ccs-syntax}
    The syntax of CCS is:

    \smallskip\centerline{$\begin{array}{c}%
        \mu \mathrel{\;\Coloneqq\;} \tau \ |\ a \ | \ \overline{a}
        \qquad
        P \mathrel{\;\Coloneqq\;} \mathbf{0} \ | \ \mu.Q \ | \ P + P'
        \qquad
        Q \mathrel{\;\Coloneqq\;} P \,\ | \,\ Q \mathbin{|} Q' \,\ | \,\ (\nu a)Q \,\ | \,\ X
    \end{array}$}\smallskip
\end{definition}

By \autoref{def:ccs-syntax}, an \emph{action} $\mu$ can be the silent action $\tau$, a visible action $a$, or its \emph{co-action} $\overline{a}$. A \emph{sequential CCS process} $P$ can do nothing ($\mathbf{0}$), perform an \emph{action prefix} $\mu$ followed by $Q$ ($\mu.Q$), or perform a \emph{choice} ($P + P'$). A \emph{process $Q$} can be a sequential process $P$, a \emph{parallel composition} of two processes ($Q \mathbin{|} Q'$), a \emph{restriction} of action $a$ to scope $Q$ ($(\nu a)Q$), or a \emph{process name} $X$.

The LTS semantics of CCS is formalised in \autoref{def:ccs-semantics} below, where it is assumed that there is a partial map of \emph{defining equations} $\mathcal{D}$ from process names to processes, i.e.,~$\mathcal{D}(X) = Q$ means that $\mathcal{D}$ defines the name $X$ as process $Q$.

\begin{definition}[LTS semantics of CCS]
\label{def:ccs-semantics}
    The LTS of a CCS process $Q$ with defining equations $\mathcal{D}$, written $LTS(Q, \mathcal{D})$, has the least transition relation $\xrightarrow{\cdot}$ induced by the rules below:
    \setlength{\abovedisplayskip}{0pt}
    \setlength{\belowdisplayskip}{0pt}
    \begin{gather*}
        \begin{alignedat}{8}
            \begin{split}
                \text{\emph{\textsc{Pref}}} & \frac{}{\mu.Q \xrightarrow{\mu} Q}
                \\
                \text{\emph{\textsc{Cons}}} & \frac{Q \xrightarrow{\mu} Q'}{X \xrightarrow{\mu} Q'} \ \mathcal{D}(X) = Q
            \end{split}
            & \quad &
            \begin{split}
                \text{\emph{\textsc{Sum1}}} & \frac{Q_1 \xrightarrow{\mu} Q_1'}{Q_1 + Q_2 \xrightarrow{\mu} Q_1'}
                \\
                \text{\emph{\textsc{Sum2}}} & \frac{Q_2 \xrightarrow{\mu} Q_2'}{Q_1 + Q_2 \xrightarrow{\mu} Q_2'}
            \end{split}
            & \quad &
            \begin{split}
                \text{\emph{\textsc{Par1}}} & \frac{Q_1 \xrightarrow{\mu} Q_1'}{Q_1 \mathbin{|} Q_2 \xrightarrow{\mu} Q_1' \mathbin{|} Q_2}
                \\
                \text{\emph{\textsc{Par2}}} & \frac{Q_2 \xrightarrow{\mu} Q_2'}{Q_1 \mathbin{|} Q_2 \xrightarrow{\mu} Q_1 \mathbin{|} Q_2'}
            \end{split}
        \end{alignedat}
        \\
        \begin{alignedat}{5}
            \text{\emph{\textsc{Com}}} & \frac{Q_1 \xrightarrow{a} Q_1' \quad Q_2 \xrightarrow{\overline{a}} Q_2'}{Q_1 \mathbin{|} Q_2 \xrightarrow{\tau} Q_1' \mathbin{|} Q_2'}
            & \qquad &
            \text{\emph{\textsc{Res}}} & \frac{Q \xrightarrow{\mu} Q'}{(\nu a)Q \xrightarrow{\mu} (\nu a)Q'} \ \mu \neq a, \overline{a}
        \end{alignedat}
    \end{gather*}
\end{definition}

By rule \textsc{Pref} in \autoref{def:ccs-semantics},
actions ($a$), co-actions ($\overline{a}$), and internal actions ($\tau$) can be executed by consuming the prefix of a sequential process: for example, we have $a.\mathbf{0} \xrightarrow{a} \mathbf{0}$. The rules \textsc{Sum1} and \textsc{Sum2} allow for executing either the left or right branch of a choice: for example, we have $Q \xleftarrow{a} a.Q + b.Q' \xrightarrow{b} Q'$. By rule \textsc{Res}, actions and co-actions cannot be executed when restricted; for example, we have $(\nu b)(a.\mathbf{0}) \xrightarrow{a} (\nu b)\mathbf{0}$, whereas $b$ cannot be executed in $(\nu b)(b.\mathbf{0})$. By rule \textsc{Com}, an action can \emph{synchronize} with its co-action, producing an internal $\tau$-action; for example, $b$ can synchronize with $\overline{b}$, so we have $b.\mathbf{0} \mathbin{|} \overline{b}.\mathbf{0} \xrightarrow{\tau} \mathbf{0} \mathbin{|} \mathbf{0}$; this also works under restriction (by rule \textsc{Res}), so we have $(\nu b)(b.\mathbf{0} \mathbin{|} \overline{b}.\mathbf{0}) \xrightarrow{\tau} (\nu b)(\mathbf{0} \mathbin{|} \mathbf{0})$.

\paragraph{Petri nets.}
We adopt standard definitions of labelled Petri nets (that we simply call \emph{Petri nets}), marking, and firing rules (\autoref{def:petri-net}, \ref{def:marking}, and \ref{def:firing-rule}, based on \cite{alpha-miner}).
We also highlight two classes of Petri nets commonly used in process mining literature: \emph{workflow nets} (\autoref{def:workflow-net}) and \emph{free-choice nets} (\autoref{def:free-choice-net}).

\begin{definition}[Labelled Petri net]
\label{def:petri-net}
    A \emph{labelled Petri net} is a tuple $(P, T, F, A, \sigma)$ where $P$ is a finite set of \emph{places}, $T$ is a finite set of \emph{transitions} such that $P \cap T = \emptyset$, and $F \subseteq (P \times T) \cup (T \times P)$ is a set of directed \emph{edges} from places to transitions or \emph{vice versa}; moreover, $A$ is a set of \emph{actions} and $\sigma: T \to A$ assigns an action to each transition.
\end{definition}

Notably, \autoref{def:petri-net} only allows for at most one (unweighted) edge between each pair of places and transitions, and does \emph{not} allow co-actions in $A$: we keep co-actions exclusive to CCS (\autoref{def:ccs-syntax}) to avoid renaming in our encodings.

\begin{definition}[Marking]
\label{def:marking}
    A \emph{marking} of a Petri net $(P, T, F, A, \sigma)$ is a mapping $M: P \to \mathbb{N}$ from each place $p \in P$ to the number of \emph{tokens} in $p$ (may be 0).
\end{definition}

\begin{definition}[Firing rule]
\label{def:firing-rule}
    Given a Petri net $(P, T, F, A, \sigma)$ and a marking $M: P \to \mathbb{N}$, a transition $t \in T$ is \emph{enabled} if all places with an edge to $t$ have tokens in $M$. Transition $t$ can \emph{fire} if enabled, and this firing consumes one token from all places with an edge to $t$, emits a label $\sigma(t)$, and produces one token for all places with an edge from $t$. This results in an updated marking $M'$.
\end{definition}

A Petri net $N = (P, T, F, A, \sigma)$ and an initial marking $M_0$ yields $LTS(N, M_0) = (Q, A, \xrightarrow{\cdot})$ where the states ($Q$) and the transition relation ($\xrightarrow{\cdot}$) are derived using the firing rule (\autoref{def:firing-rule}) until all enabled transitions $t$ in all reachable markings are added to $\xrightarrow{\cdot}$, with $\sigma(t)$ as transition label.

Process mining algorithms like the $\alpha$-miner typically produce \emph{workflow nets} (\autoref{def:workflow-net}) that are used to describe end-to-end processes with a clear start and completion. In practical applications, such workflow nets are often \emph{free-choice}\footnote{
    The $\alpha$-miner actually returns a further subclass of free-choice workflow nets called \emph{structured workflow nets}~\cite{alpha-miner}.
} (\autoref{def:free-choice-net} and \ref{def:free-choice-workflow-net}) where all choices are made by a single place --- meaning that transitions can at most fight for one token in order to fire. \autoref{fig:preliminaries-none}--\ref{fig:preliminaries-free-choice-workflow} show examples of Petri nets with small differences resulting in different classes.

\begin{definition}[Workflow net~{\cite[Definition 2.8]{alpha-miner}}]\label{def:workflow-net}
    A Petri net $(P, T, F, A, \sigma)$ is a \emph{workflow net} iff it satisfies the following three properties:
    \begin{itemize}
        \lessspace
        \item \textbf{Object creation}: $P$ has an \emph{input place} $i$ with no ingoing edges.
        \item \textbf{Object completion}: $P$ has an \emph{output place} $o$ with no outgoing edges.
        \item \textbf{Connectedness}: For every $v \in P \cup T$, there exists a directed path of edges from $i$ to $o$ that goes through $v$.
    \end{itemize}
\end{definition}

\begin{definition}[Free-choice net~{\cite{free-choice-nets}}]\label{def:free-choice-net}
    A Petri net $(P, T, F, A, \sigma)$ is a \emph{free-choice net} iff it satisfies the following two properties:
    \begin{itemize}
        \lessspace
        \item \textbf{Unique choice}: All places $p \in P$ with more than one outgoing edge only have edges to transitions with exactly one ingoing edge (edge from $p$).
        \item \textbf{Unique synchronisation}: All transitions $t \in T$ with more than one ingoing edge only have edges from places with exactly one outgoing edge (edge to $t$).
        \lessspace
    \end{itemize}
\end{definition}

\begin{definition}[Free-choice workflow net]\label{def:free-choice-workflow-net}
    If a Petri net is both a workflow net and free-choice, we call it a \emph{free-choice workflow net}.
\end{definition}

\begin{figure}[tb!]
    \centering
    \begin{minipage}[b][][b]{0.49\textwidth}
        \centering
        \includegraphics[scale=0.30]{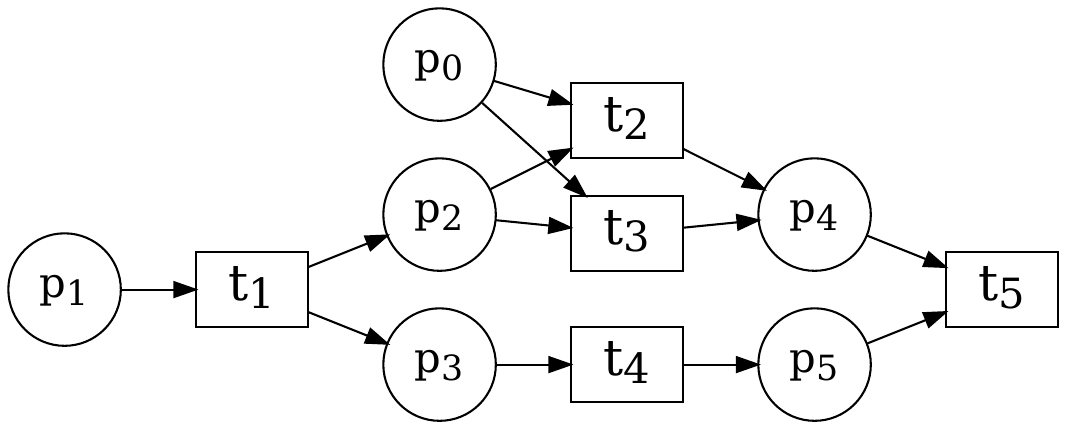}
        \caption{Neither free-choice or workflow net.}
        \label{fig:preliminaries-none}
    \end{minipage}\hfill%
    \begin{minipage}[b][][b]{0.49\textwidth}
        \centering
        \makebox[\textwidth][c]{\includegraphics[scale=0.30]{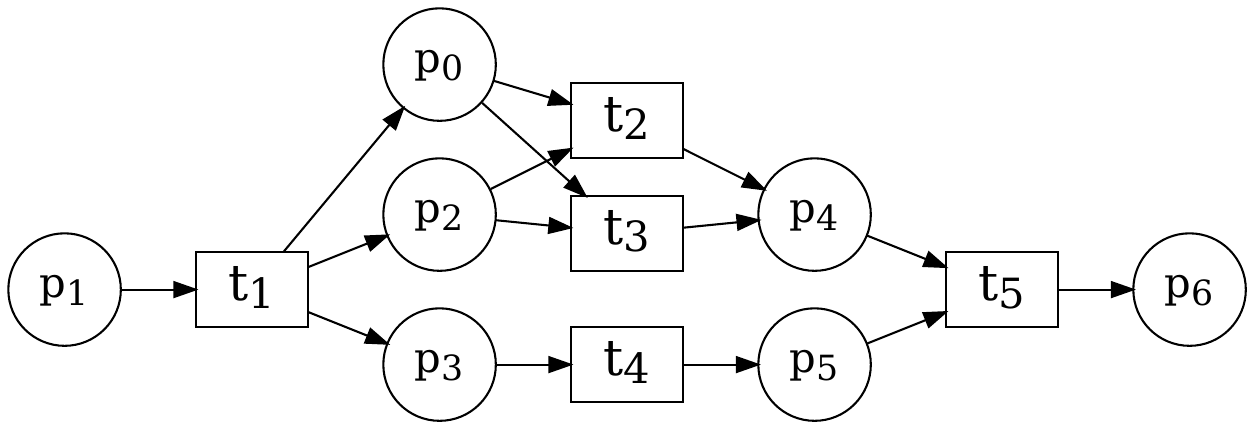}}
        \caption{Workflow but not free-choice net.}
        \label{fig:preliminaries-workflow}
    \end{minipage}
\end{figure}
\begin{figure}[tb!]
    \centering
    \begin{minipage}[b][][b]{0.49\textwidth}
        \centering
        \includegraphics[scale=0.30]{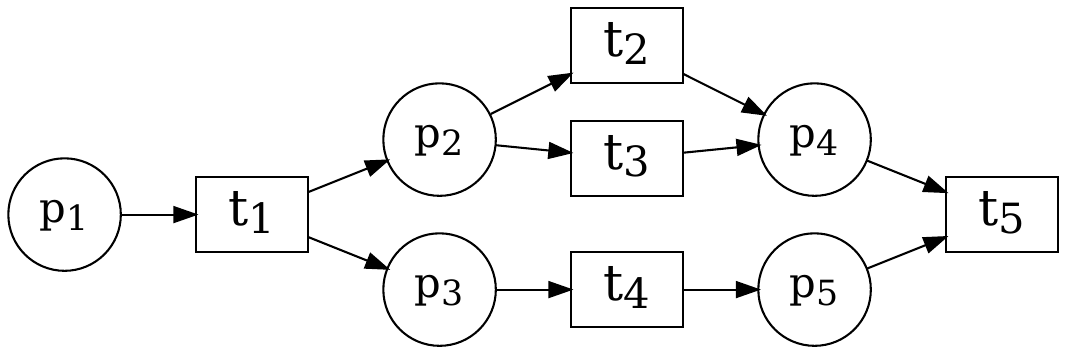}
        \caption{Free-choice but not workflow net.}
        \label{fig:preliminaries-free-choice}
    \end{minipage}\hfill%
    \begin{minipage}[b][][b]{0.49\textwidth}
        \centering
        \makebox[\textwidth][c]{\includegraphics[scale=0.30]{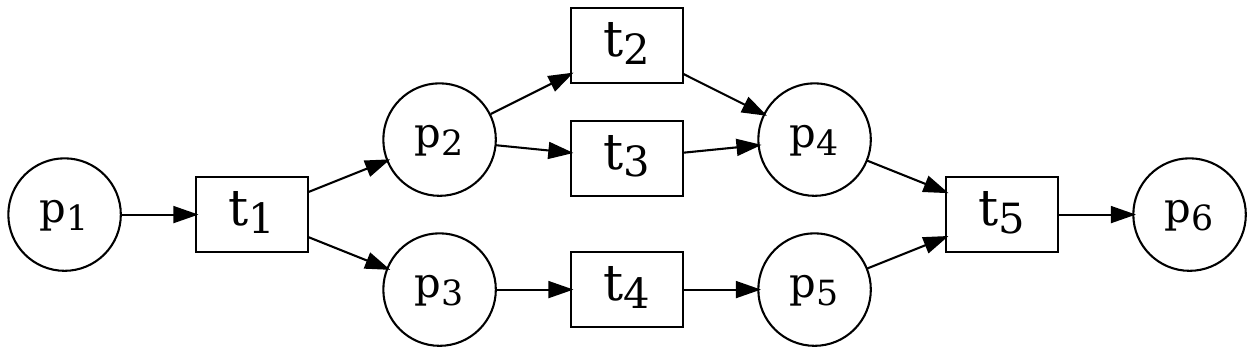}}
        \caption{Free-choice net and workflow net.}
        \label{fig:preliminaries-free-choice-workflow}
    \end{minipage}
\end{figure}

The \emph{connectedness} property in \autoref{def:workflow-net} implies that all transitions in free-choice workflow nets have at least one ingoing edge and at least one outgoing edge. The same applies to all places except the special places $i$ and $o$ that respectively has no ingoing edges ($i$) and no outgoing edges ($o$). The \emph{unique choice and synchronisation} properties in \autoref{def:free-choice-net} ensure that every choice is separated from all synchronisations and \emph{vice versa}. This does \emph{not} rule out cycles but \emph{unique choice} restricts all outgoing edges \emph{leaving} a cycle to lead to transitions with exactly one ingoing edge (because there is always one edge \emph{continuing} the cycle). For instance, adding a new transition $t_7$ in \autoref{fig:preliminaries-free-choice-workflow} to form the cycle $(p_3, t_4, p_5, t_7, p_3)$ would violate \emph{unique choice} (and \emph{unique synchronisation}) because the edge $(p_5, t_5)$ leaves the cycle but $t_5$ has two ingoing edges. Adding the cycle $(p_3, t_7, p_3)$ is allowed because $t_4$ only has one ingoing edge.

\section{Encoding Petri Nets into CCS, Step-by-Step}\label{sec:encoding}

This section introduces our main contribution: an encoding into CCS of a superclass of free-choice nets, that we call \emph{group-choice nets} (\autoref{def:group-choice-net}); we prove that our encoding is correct, i.e., a Petri net and its encoding are weakly bisimilar and without added divergent states (\autoref{thm:group}). To illustrate the encodings and result, we proceed through a series of steps: a series of encoding algorithms into CCS for progressively larger classes of Petri nets. (See \autoref{fig:overview} for an outline.)

We begin (in~\autoref{sec:ccs-net}) with the class of \emph{CCS nets} (\autoref{def:ccs-net}), and \autoref{alg:ccs-net} that encodes such nets into strongly bisimilar CCS processes (\autoref{thm:ccs-net-strong-bisimulation}).\footnote{%
The results in~\autoref{sec:ccs-net} can be also derived from \cite{multi-ccs}, but here we provide a direct encoding algorithm, statements, and proofs for plain CCS, without using Multi-CCS.}%
Then (in~\autoref{sec:free-wf}) we develop a novel transformation from \emph{free-choice workflow nets} (\autoref{def:free-choice-workflow-net}) to weakly bismilar CCS nets using \autoref{alg:free-wf-full} (\autoref{thm:free-wf-full-weak-bisimulation}). The composition of \autoref{alg:free-wf-full} and \autoref{alg:ccs-net} then encodes free-choice workflow nets into weakly bismilar CCS processes (\autoref{thm:free-wf-full}). In \autoref{sec:free}, we generalise CCS nets into \emph{2-$\tau$-synchronisation nets} (\autoref{def:2-tau-synchronisation-net}, allowing for transitions with no ingoing edges) and we present \autoref{alg:free} to encode such nets into strongly bisimilar CCS processes (\autoref{thm:free}). Finally (cf.~\autoref{sec:group}), we generalize free-choice nets to a new class called \emph{group-choice nets} (\autoref{def:group-choice-net}) and present \autoref{alg:group-full} that, composed with \autoref{alg:free}, encodes group-choice nets into weakly bisimilar CCS processes (\autoref{thm:group}).

\subsection{Encoding CCS Nets into CCS Processes}\label{sec:ccs-net}

A challenge in encoding Petri nets into CCS processes is that a transition in a Petri net can consume tokens from any number of places in a single step --- whereas the semantics of CCS (\autoref{def:ccs-semantics}) only allow for executing a single action or a synchronisation between two processes in each step. In other words, Petri nets are able to perform $n$-ary synchronisation, while CCS can only perform $2$-ary synchronisation.
Therefore, as a stepping stone towards our main result, we adopt \emph{CCS nets} from~\cite{multi-ccs}, whose synchronisation capabilities match CCS.

\begin{definition}[CCS net~\cite{multi-ccs}]\label{def:ccs-net}
 A Petri net $(P, T, F, A, \sigma)$ is a \emph{CCS net} iff each transition $t \in T$ has one or two ingoing edges --- and in the latter case, $\sigma(t) = \tau$.
\end{definition}

The key insight behind \autoref{def:ccs-net} is that transitions with two ingoing edges must be labeled with $\tau$ to have a 1-to-1 correspondence to synchronisation in CCS (that also emits a $\tau$). \autoref{fig:ccs-net-example-pn} shows a CCS net that later will be encoded.

\begin{figure}[b]
    \setlength{\abovedisplayskip}{0pt}
    \setlength{\belowdisplayskip}{0pt}
    \centering
    \begin{minipage}[b][][b]{0.32\textwidth}
        \centering
        \vfill
        \includegraphics[scale=0.30]{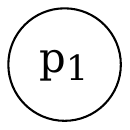}
        \vspace{4pt}
        \vfill
        \begin{align*}
            X_{p_1} \mathrel{:=} \mathbf{0}
        \end{align*}
        \caption{Place with $0$ outgoing edges and its encoding.}
        \label{fig:ccs-net-place-0}
    \end{minipage}\hfill%
    \begin{minipage}[b][][b]{0.32\textwidth}
        \centering
        \vfill
        \includegraphics[scale=0.30]{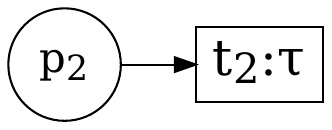}
        \vspace{4pt}
        \vfill
        \begin{align*}
            X_{p_2} \mathrel{:=} Y_{t_2}
        \end{align*}
        \caption{Place with $1$ outgoing edge and its encoding.}
        \label{fig:ccs-net-place-1}
    \end{minipage}\hfill%
    \begin{minipage}[b][][b]{0.32\textwidth}
        \centering
        \includegraphics[scale=0.30]{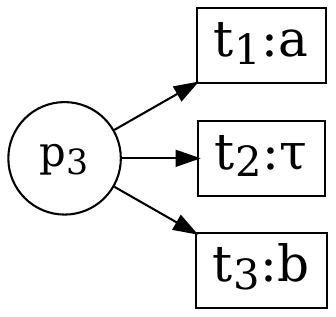}
        \begin{align*}
            X_{p_3} \mathrel{:=} Y_{t_1} + Y_{t_2} + Y_{t_3}
        \end{align*}
        \caption{Place with $3$ outgoing edges and its encoding.}
        \label{fig:ccs-net-place-3}
    \end{minipage}
\end{figure}

\begin{figure}[t]
    \setlength{\abovedisplayskip}{0pt}
    \setlength{\belowdisplayskip}{0pt}
    \centering
    \begin{minipage}[b][][b]{0.32\textwidth}
        \centering
        \vfill
        \includegraphics[scale=0.30]{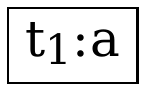}
        \vspace{5pt}
        \vfill
        \begin{align*}
            Y_{t_1} \mathrel{:=} a.\mathbf{0}
        \end{align*}
        \caption{Encoding of transition with $0$ outgoing edges.}
        \label{fig:ccs-net-transition-0}
    \end{minipage}\hfill%
    \begin{minipage}[b][][b]{0.32\textwidth}
        \centering
        \vfill
        \includegraphics[scale=0.30]{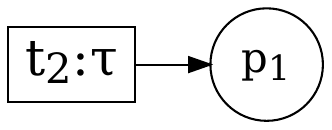}
        \vspace{2pt}
        \vfill
        \begin{align*}
            Y_{t_2} \mathrel{:=} \tau.X_{p_1}
        \end{align*}
        \caption{Encoding of transition with $1$ outgoing edge.}
        \label{fig:ccs-net-transition-1}
    \end{minipage}\hfill%
    \begin{minipage}[b][][b]{0.32\textwidth}
        \centering
        \includegraphics[scale=0.30]{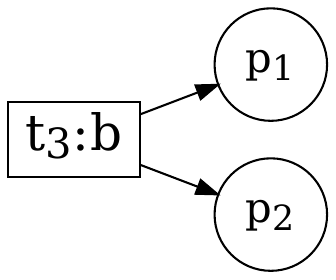}
        \begin{align*}
            Y_{t_3} \mathrel{:=} b.(X_{p_1} \mathbin{|} X_{p_2})
        \end{align*}
        \caption{\hspace{-1pt}Encoding of transition with $2$ outgoing edges.}
        \label{fig:ccs-net-transition-2}
    \end{minipage}
\end{figure}

\begin{algorithm}[!b]
    \DontPrintSemicolon
    \SetKwInOut{Input}{Input}
    \SetKwInOut{Output}{Output}
    \SetKw{Replace}{substitute}
    \SetKw{In}{in}
    \SetKw{With}{with}
    \SetKw{Where}{where}

    \Input{CCS net $(P, T, F, A, \sigma)$ and marking $M_0: P \to \mathbb{N}$}
    \Output{CCS process $Q$ and partial mapping of defining equations $\mathcal{D}$}

    $\mathcal{D} \gets \text{Empty mapping of defining equations}$\;
    \For{$p \in P$}{
        $\mathcal{D}(X_p) \gets \left(Y_{t_1} + Y_{t_2} + \dots + Y_{t_k}\right)$ \Where $\{t_1, t_2, \dots, t_k\} = \{t \,|\, (p, t) \in F\}$\;
    }
    \For{$t \in \left\{t \;\middle|\; t \in T \text{\emph{ and }} t \text{\emph{ has 1 ingoing edge}}\right\}$}{
        \Replace $Y_t$ \In $\mathcal{D}(X_{p^*})$ \With $\sigma(t).(X_{p_1} \mathbin{|} X_{p_2} \mathbin{|} \dots \mathbin{|} X_{p_k})$ \Where $(p^*, t) \in F \text{ and } \{p_1, p_2, \dots, p_k\} = \{p \,|\, (t, p) \in F\}$\;
    }
    $A' \gets \emptyset$\;
    \For{$t \in \left\{t \;\middle|\; t \in T \text{\emph{ and }} t \text{\emph{ has 2 ingoing edges}}\right\}$}{
        $A' \gets A' \cup \{s_t\}$ \Where $s_t \text{ is a fresh action}$\;
        \Replace $Y_t$ \In $\mathcal{D}(X_{p^*})$ \With $s_t.(X_{p_1} \mathbin{|} X_{p_2} \mathbin{|} \dots \mathbin{|} X_{p_k})$\;
        \Replace $Y_t$ \In $\mathcal{D}(X_{p^{**}})$ \With $\overline{s_t}.\mathbf{0}$ \Where $(p^*, t), (p^{**}, t) \in F \text{ and } p^* \neq p^{**} \text{ and } \{p_1, p_2, \dots, p_k\} = \left\{p \,\middle|\, (t, p) \in F\right\}$\;
    }
    $\{s_1, s_2, \dots, s_n\} \gets A'$\;
    $Q \gets (\nu s_1) (\nu s_2) \dots (\nu s_n) \left({X_{p_1}^{M_0(p_1)}} \mathbin{|} {X_{p_2}^{M_0(p_2)}} \mathbin{|} \dots \mathbin{|} {X_{p_{|P|}}^{M_0(p_{|P|})}}\right)$\;
    \Return{$\left(Q,\; \mathcal{D}\right)$}
    \caption{Encoding from CCS net to CCS process}
    \label{alg:ccs-net}
\end{algorithm}

\begin{figure}[t]
    \setlength{\abovedisplayskip}{0pt}
    \setlength{\belowdisplayskip}{0pt}
    \centering
    \begin{minipage}[b][][b]{0.49\textwidth}
        \centering
        \includegraphics[scale=0.30]{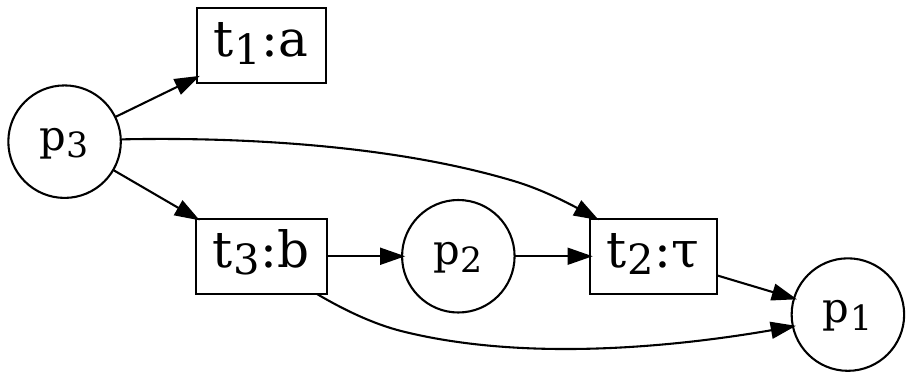}
        \begin{align*}
            M_0(p_1) = 1,\; M_0(p_2) = 0,\; M_0(p_3) = 2
        \end{align*}
        \caption{CCS net $N$ and initial marking $M_0$ with three tokens.}
        \label{fig:ccs-net-example-pn}
    \end{minipage}\hfill%
    \begin{minipage}[b][][b]{0.49\textwidth}
        \centering
        \begin{align*}
            \begin{array}{rcl}
                Q     & \mathrel{\coloneqq} & (\nu s_{t_2})(X_{p_1} \mathbin{|} X_{p_3} \mathbin{|} X_{p_3}) \\
                \mathcal{D}(X_{p_1}) & \mathrel{=} & \mathbf{0} \\
                \mathcal{D}(X_{p_2}) & \mathrel{=} & \overline{s_{t_2}}.\mathbf{0} \\
                \mathcal{D}(X_{p_3}) & \mathrel{=} & s_{t_2}.X_{p_1} + a.\mathbf{0} + b.(X_{p_1} \mathbin{|} X_{p_2})
            \end{array}
        \end{align*}
        \caption{Encoding of $N$ and $M_0$ in \autoref{fig:ccs-net-example-pn} into $Q$ and $\mathcal{D}$ produced by \autoref{alg:ccs-net}.}
        \label{fig:ccs-net-example-ccs}
    \end{minipage}
\end{figure}

\autoref{alg:ccs-net} encodes a CCS net $(P, T, F, A, \sigma)$ and marking $M_0$ into a CCS process $Q$ and its defining equations $\mathcal{D}$, where $\mathcal{D}$ defines a process named $X_p$ for each place $p \in P$. The idea is that each token at place $p$ is encoded as a parallel replica of the process $X_p$. We illustrate the algorithm in the next paragraphs.

On line 2--4, each place $p \in P$ is encoded as a \emph{choice process} named $X_p$ in $\mathcal{D}$: The choice is among placeholder processes named $Y_t$, for each transition $t$ with an edge from $p$. (Notice that the placeholders $Y_t$ are not in the domain of $\mathcal{D}$, but are substituted with sequential CCS processes in the next steps of the algorithm.) The choice process $X_p$ models a token at $p$ that \emph{chooses} which transition it is used for. \autoref{fig:ccs-net-place-0}--\ref{fig:ccs-net-place-3} show examples.

On line 5--7, each transition $t \in T$ with one ingoing edge from a place $p^*$ is encoded as a process with an action prefix (obtained via the labelling function $\sigma(t)$) followed by the \emph{parallel composition} of all processes named $X_{p_i}$, where place $p_i$ has an ingoing edge from $t$. \autoref{fig:ccs-net-transition-0}--\ref{fig:ccs-net-transition-2} show examples. The resulting process $\sigma(t).(X_{p_1} \mathbin{|} \dots \mathbin{|} X_{p_k})$ (line 6) models the execution of the action $\sigma(t)$ followed by the production of tokens for places $p_1, \dots, p_k$, and such a process is used to substitute the placeholder $Y_t$ in $\mathcal{D}(X_{p^*})$.

On line 8--13, each transition $t \in T$ with two ingoing edges from $p^*$ and $p^{**}$ (where $t$ has label $\sigma(t) = \tau$, by \autoref{def:ccs-net}) is encoded as a synchronisation between a fresh action $s_t$ (which prefixes $(X_{p_1} \mathbin{|} \dots \mathbin{|} X_{p_k})$ on line 11 to model production of tokens), and its co-action $\overline{s_t}$ (line 12). The two resulting processes are respectively used to substitute the placeholder $Y_t$ in $\mathcal{D}(X_{p^*})$ and $\mathcal{D}(X_{p^{**}})$.

Finally, on line 15, the initial marking $M_0$ is encoded into the result CCS process $Q$, which is the \emph{parallel composition} of one place process $X_{p_i}$ per token at place $p_i$, under a restriction of each fresh action $s_i$ produced on line 10. Note that we write $Q^n$ for the parallel composition of $n$ replicas of $Q$, so $Q^3 = Q \mathbin{|} Q \mathbin{|} Q$, and $Q^1 = Q$, and $Q^0 = \mathbf{0}$. The algorithm also returns the partial mapping $\mathcal{D}$ with the definition of each process name $X_p$ (for each $p \in P$) occurring in $Q$.

\autoref{fig:ccs-net-example-ccs} shows the encoding of the CCS net in \autoref{fig:ccs-net-example-pn}.
Observe that in $X_{p_3}$, the transitions $t_1$ and $t_3$ (which have one ingoing edge) yield respectively the sub-processes $a.\mathbf{0}$ and $b.(X_{p_1} \mathbin{|} X_{p_2})$ (produced by line 5--7 in \autoref{alg:ccs-net}); instead, $t_2$ (which has two ingoing edges and label $\tau$) yields both sub-processes $s_{t_2}.X_{p_1}$ in $X_{p_3}$ and $\overline{s_{t_2}}.\mathbf{0}$ in $X_{p_2}$ (produced by line 9--13 in \autoref{alg:ccs-net}).

\begin{theorem}[Correctness of \autoref{alg:ccs-net}]\label{thm:ccs-net-strong-bisimulation}
    Given a CCS net $N = (P, T, F, A, \sigma)$ and an initial marking $M_0$, let the result of applying \autoref{alg:ccs-net} to $N$ and $M_0$ be the CCS process $Q$ and defining equations $\mathcal{D}$. Then, $LTS(N, M_0) \bisimilar LTS(Q, \mathcal{D})$. The translation time and the size of $(Q, \mathcal{D})$ are bound by $\O(|N| + \sum_{p \in P} M_0(p))$.
\end{theorem}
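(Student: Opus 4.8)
The plan is to prove $LTS(N,M_0)\bisimilar LTS(Q,\mathcal{D})$ by exhibiting an explicit strong bisimulation $\mathcal{R}$ between the markings of $N$ reachable from $M_0$ (the states of $LTS(N,M_0)$) and the CCS terms reachable from $Q$, built from the intended reading ``a token at place $p$ is one parallel copy of the process $X_p$''; and then to read off the complexity bound from a line-by-line inspection of \autoref{alg:ccs-net}. First I would characterise the CCS states reachable from $Q$: I claim each has the form $(\nu s_1)\cdots(\nu s_n)(R)$, where $s_1,\dots,s_n$ are precisely the fresh names introduced on line~10 and $R$ is a parallel composition --- in an arbitrary order and bracketing --- of occurrences of process names $X_p$ ($p\in P$) and of $\mathbf{0}$. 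This holds for $Q$ itself (each factor $X_p^{M_0(p)}$ is, as a term, $M_0(p)$ parallel copies of $X_p$, or $\mathbf{0}$ when $M_0(p)=0$), and it is preserved by transitions: since $N$ is a CCS net every placeholder $Y_t$ was substituted, so $\mathcal{D}$ defines every name occurring in $Q$, and using the standard inversion lemmas for \textsc{Res}, \textsc{Par}, \textsc{Com}, \textsc{Cons}, \textsc{Sum}, \textsc{Pref}, any transition of such a term is caused either by a single component $X_p$ unfolding $\mathcal{D}(X_p)$ and selecting one summand $\mu.P''$ (the rest of $R$ untouched), or by two distinct components synchronising; in both cases the residual $P''$ is again a parallel composition of $X_{p_i}$'s and $\mathbf{0}$'s, so the invariant is maintained.

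For such a term write $\#_p(R)$ for the number of occurrences of $X_p$ in $R$, put $\mathrm{State}(M)=\{(\nu\vec{s})(R)\mid \forall p\in P:\ \#_p(R)=M(p)\}$, and take
\[
\mathcal{R}=\{(M,C)\mid M \text{ reachable from } M_0 \text{ in } N,\ C\in\mathrm{State}(M)\},
\]
so that $(M_0,Q)\in\mathcal{R}$. The forward direction is routine: if a transition $t$ is enabled at $M$ and fires to $M'$, then its input place(s) are marked, so any $C\in\mathrm{State}(M)$ contains the matching $X_p$ component(s); if $t$ has one ingoing edge from $p^{\ast}$, that copy of $X_{p^{\ast}}$ can select the summand $\sigma(t).(X_{p_1}\mathbin{|}\cdots\mathbin{|}X_{p_k})$ inserted by lines~5--7, and \textsc{Par}/\textsc{Res} lift this to $C\xrightarrow{\sigma(t)}C'$ (using $\sigma(t)\notin\{s_i,\overline{s_i}\}$); if $t$ has two ingoing edges from $p^{\ast}\neq p^{\ast\ast}$ (distinct by \autoref{def:petri-net}, with $\sigma(t)=\tau$ by \autoref{def:ccs-net}), the copies of $X_{p^{\ast}}$ and $X_{p^{\ast\ast}}$ select the summands $s_t.(X_{p_1}\mathbin{|}\cdots\mathbin{|}X_{p_k})$ and $\overline{s_t}.\mathbf{0}$ inserted by lines~8--13 and synchronise via \textsc{Com} into $C\xrightarrow{\tau}C'$; in both cases $C'\in\mathrm{State}(M')$ (the $\mathbf{0}$ residual of the $\overline{s_t}.\mathbf{0}$ branch is harmless since $\mathrm{State}$ ignores $\mathbf{0}$), and $M'$ is reachable from $M_0$.

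The converse direction --- matching every $C\xrightarrow{\mu}C'$ by a firing of $N$ --- is where the real work lies and is the step I expect to be the main obstacle, because it amounts to ruling out \emph{spurious} transitions of the encoding. By the inversion analysis above, either a lone component $X_p$ fires a summand $\mu.P''$ with $\mu\notin\{s_i,\overline{s_i}\}$ --- which, inspecting the three kinds of summand produced by the algorithm, must be a summand $\sigma(t).(\cdots)$ for a one-ingoing transition $t$ with input place $p$ (the $s_t.(\cdots)$ and $\overline{s_t}.\mathbf{0}$ summands are killed by the \textsc{Res} side condition) --- or two components synchronise on a complementary pair $\alpha/\overline{\alpha}$. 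For the latter, the only complementary pairs occurring in the right-hand sides of $\mathcal{D}$ are $s_t/\overline{s_t}$ for a two-ingoing transition $t$: no $\sigma(\cdot)$ label is a co-action (the remark after \autoref{def:petri-net}) and no $s_t$ equals some $\sigma(t')$ (the $s_t$ are fresh); moreover $s_t$ occurs exactly once, as $s_t.(\cdots)$ in $\mathcal{D}(X_{p^{\ast}})$, and $\overline{s_t}$ occurs exactly once, as $\overline{s_t}.\mathbf{0}$ in $\mathcal{D}(X_{p^{\ast\ast}})$ with $p^{\ast}\neq p^{\ast\ast}$, so the synchronisation is forced to occur between a copy of $X_{p^{\ast}}$ and a copy of $X_{p^{\ast\ast}}$ --- i.e.\ it models the firing of $t$ and nothing else. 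In every case the moved components witness that the corresponding transition $t$ is enabled at $M$, it fires (emitting the same label $\mu$) to some $M'$ with $C'\in\mathrm{State}(M')$, and $M'$ is reachable. Hence $\mathcal{R}$ is closed under transitions in both directions, so it is a strong bisimulation with $(M_0,Q)\in\mathcal{R}$, giving $LTS(N,M_0)\bisimilar LTS(Q,\mathcal{D})$.

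Finally, the complexity bound follows from inspecting \autoref{alg:ccs-net}: lines~2--4 build $\mathcal{D}$ with one summand $Y_t$ per edge $(p,t)\in F$; lines~5--13 replace each $Y_t$ by a term of size $\O(1+|\{p\mid(t,p)\in F\}|)$, contributing $\O(|T|+|F|)$ overall; line~15 builds $Q$ of size $\O(|P|+|T|+\sum_{p\in P}M_0(p))$ (one factor per place plus one restriction per two-ingoing transition). With a standard representation of the net, every line runs in time proportional to the output it produces plus a scan of $P$, $T$, $F$, so both the running time and the size of $(Q,\mathcal{D})$ are $\O(|P|+|T|+|F|+\sum_{p\in P}M_0(p))=\O(|N|+\sum_{p\in P}M_0(p))$.
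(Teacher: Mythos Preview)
Your proposal is correct and follows essentially the same approach as the paper: both exhibit the bisimulation ``a token at $p$ is one parallel copy of $X_p$'' and then perform the two-directional case analysis on transitions (one ingoing edge vs.\ two ingoing edges), with the same complexity argument. The only notable presentational difference is that the paper relates each marking $M$ to the single canonical term $(\nu\vec{s})(X_{p_1}^{M(p_1)}\mathbin{|}\cdots\mathbin{|}X_{p_{|P|}}^{M(p_{|P|})})$ and tacitly works modulo associativity/commutativity of~$\mathbin{|}$ and the unit~$\mathbf{0}$, whereas you make this explicit by letting $\mathrm{State}(M)$ range over all reorderings and bracketings; your formulation is slightly more rigorous on this point but otherwise the arguments coincide.
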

\begin{proof}
    \emph{(Sketch, detailed proof in \appendixref{sec:appendix-ccs-net-strong-bisimulation}.)} We define the following relation between markings and processes, and we prove it is a bisimulation:
    \begin{align*}
        \mathcal{R} \mathrel{\;\coloneqq\;} \left\{\left(M,\ (\nu s_1)\dots(\nu s_n)\left({X_{p_1}^{M(p_1)}} \mathbin{|} \dots \mathbin{|} {X_{p_{|P|}}^{M(p_{|P|})}}\right)\right) \;\middle|\; M \in LTS(N, M_0)\right\}
    \end{align*}
    where $M$ is a reachable marking from the initial marking $M_0$ in $N$ (i.e.~$M$ is a state in $LTS(N, M_0)$), and $s_1, \dots, s_n$ are the restricted names in $Q$. The intuition is as that for every place $p$ containing $M(p)$ tokens, the CCS process contains $M(p)$ replicas of the process named $X_p$ (written $X_p^{M(p)}$).
    \qed
\end{proof}

The strong bisimulation result in \autoref{thm:ccs-net-strong-bisimulation} ensures that there are no observable differences between the original Petri net and its encoding, so \autoref{alg:ccs-net} does not introduce new deadlocks nor divergent paths. The encoding is also linear in the size of the CCS net $N$ and the size of the initial marking $M_0$.

\subsection{Encoding Free-Choice Workflow Nets into CCS Processes}\label{sec:free-wf}

We now build upon the result in \autoref{sec:ccs-net} to prove that \emph{free-choice workflow nets} (\autoref{def:free-choice-workflow-net}) are encodable into weakly bisimilar CCS processes. Specifically, we present a stepwise transformation procedure (\autoref{alg:free-wf-stepwise} and \ref{alg:free-wf-full}) from a free-choice workflow net into weakly bisimilar CCS net (\autoref{thm:free-wf-full-weak-bisimulation}), and then apply \autoref{alg:ccs-net} to get a weakly bisimilar CCS process (\autoref{thm:free-wf-full}).

It should be noted (as illustrated in \autoref{fig:overview}) that free-choice (workflow) nets does \emph{not} contain the class of CCS nets: \autoref{fig:ccs-net-example-pn} is not a free-choice net since $p_3$ has multiple outgoing edges of which one leads to a transition with multiple ingoing edges. However, as will be shown in this section, a free-choice workflow net can be transformed into a CCS net by (non-deterministically) making a binary synchronisation order for each $n$-ary synchronisation.

\begin{algorithm}[b]
    \DontPrintSemicolon
    \SetKwInOut{Input}{Input}
    \SetKwInOut{Output}{Output}
    \SetKw{Select}{select}
    \SetKw{Where}{where}

    \Input{Petri net $(P, T, F, A, \sigma)$, marking $M_0: P \to \mathbb{N}$ and chosen transition $t^* \in T$ with at least two ingoing edges}
    \Output{Petri net $(P^\prime, T^\prime, F^\prime, A^\prime, \sigma^\prime)$ and marking $M_0^\prime:P^\prime \to \mathbb{N}$}

    \Select $(p^*, t^*), (p^{**}, t^*) \in F$ \Where $p^* \neq p^{**}$\;
    $P' \gets P \cup \{p^+\}$ \Where $p^+ \notin P \cup T$\;
    $T' \gets T \cup \{t^+\}$ \Where $t^+ \notin P' \cup T$\;
    $F' \gets (F \setminus \{(p^*, t^*), (p^{**}, t^*)\}) \cup \{(p^*, t^+), (p^{**}, t^+), (t^+, p^+), (p^+, t^*)\}$\;
    $N' \gets (P', T', F', A \cup \{\tau\}, \sigma[t^+ \mapsto \tau])$\;
    \Return{$(N', M_0[p^+ \mapsto 0])$}\;

    \caption{Petri net transition preset reduction}
    \label{alg:free-wf-stepwise}
\end{algorithm}

\begin{figure}[!b]
    \centering
    \begin{minipage}[b][][b]{0.44\textwidth}
        \centering
        \includegraphics[scale=0.30]{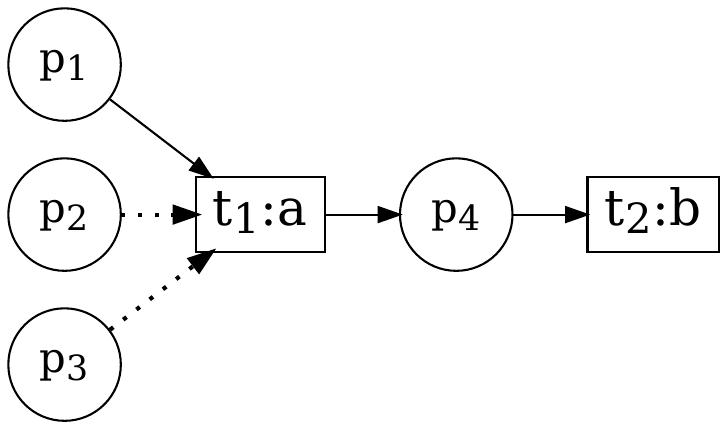}
        \caption{Free-choice net where $t_1$ and the dotted edges $(p_2, t_1)$ and $(p_3, t_1)$ are chosen for transformation.}
        \label{fig:free-wf-example-before}
    \end{minipage}\hfill%
    \begin{minipage}[b][][b]{0.54\textwidth}
        \centering
        \includegraphics[scale=0.30]{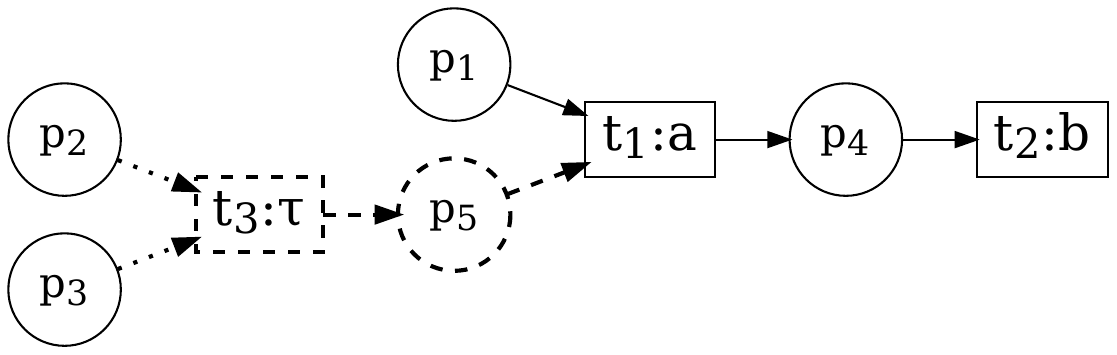}
        \caption{Free-choice net obtained from \autoref{fig:free-wf-example-before} where the dotted edges have been moved and the dashed elements $t_3$ and $p_5$ have been added.}
        \label{fig:free-wf-example-after}
    \end{minipage}
\end{figure}

An application of \autoref{alg:free-wf-stepwise} transforms an input Petri net by reducing the number of ingoing edges of the selected transition $t^*$ (which must have two or more ingoing edges): it selects two distinct ingoing edges (line 1), creates a new $\tau$-transition $t^+$ with an edge to a new place $p^+$, and connects the new place $p^+$ to the selected transition $t^*$ (line 2--5). The cost of the transformation is that a new $\tau$-transition $t^+$ with two ingoing edges is created. \autoref{fig:free-wf-example-before} and \ref{fig:free-wf-example-after} show an example application on a free-choice net, where $t_1$ is selected for transformation with the edges $(p_2, t_1)$ and $(p_3, t_1)$. \autoref{alg:free-wf-stepwise} can be iterated until all the original transitions only have one ingoing edge left; transitions with no ingoing edges are left untouched. Hence, transformed free-choice workflow nets will have no transitions with zero ingoing edges, hence \autoref{alg:ccs-net} could be applied.

\autoref{fig:free-wf-example-before} and \ref{fig:free-wf-example-after} show that \autoref{alg:free-wf-stepwise} does \emph{not} produce a Petri net that is strongly bisimilar to its input: if we only have one token in both $p_2$ and $p_3$, then \autoref{fig:free-wf-example-after} can fire the $\tau$-transition while \autoref{fig:free-wf-example-before} is in a deadlock. However, \autoref{thm:free-wf-stepwise-weak-bisimulation} below proves that, when applied to free-choice nets (and thereby also free-choice workflow nets), \autoref{alg:free-wf-stepwise} produces a weakly bisimilar Petri net. Moreover, by \autoref{lem:free-wf-stepwise-invariant}, the returned net remains a free-choice net.

\begin{theorem}[Correctness of \autoref{alg:free-wf-stepwise}]\label{thm:free-wf-stepwise-weak-bisimulation}
    Given a free-choice net $N = (P, T, F, A, \sigma)$, a marking $M_0: P \to \mathbb{N}$, and a transition $t^* \in T$ (with at least two ingoing edges), the result of applying \autoref{alg:free-wf-stepwise} on $N$, $M_0$, and $t^*$ is a Petri net $N'$ and marking $M_0'$ such that $LTS(N, M_0) \wbisimilar LTS(N', M_0')$ and $LTS(N', M_0')$ contains a divergent path iff $LTS(N, M_0)$ contains a divergent path. The transformation time and the increase in size are amortized $\O(1)$.
\end{theorem}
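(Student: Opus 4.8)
The plan is to exhibit an explicit weak bisimulation $\mathcal{R}$ between $LTS(N, M_0)$ and $LTS(N', M_0')$, and then separately argue about divergence. The key observation is that $N'$ differs from $N$ only in that the $n$-ary consumption at $t^*$ has been "staged": the token that used to sit in $p^*$ (resp.\ $p^{**}$) and be consumed in one shot by $t^*$ can now instead be moved — via the fresh $\tau$-transition $t^+$ — into the fresh intermediate place $p^+$, from where $t^*$ later completes. So a marking $M$ of $N$ should be related to every marking $M'$ of $N'$ that "looks like $M$ after possibly firing $t^+$", i.e.\ $M'$ agrees with $M$ on all places of $P$ except that some tokens from $p^*$ and $p^{**}$ may have been shifted into $p^+$. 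Concretely I would define, for each reachable $M$, the family of markings
\begin{align*}
  \Phi(M) \;=\; \bigl\{\, M' : P' \to \mathbb{N} \;\big|\; \exists j \in \mathbb{N}.\; &M'(p^+) = j,\; M'(p^*) = M(p^*) - j,\; M'(p^{**}) = M(p^{**}) - j, \\
  &\text{and } M'(q) = M(q) \text{ for all other } q \in P \,\bigr\},
\end{align*}
and set $\mathcal{R} = \{(M, M') \mid M \in LTS(N,M_0),\ M' \in \Phi(M)\}$. Note $(M_0, M_0') \in \mathcal{R}$ since $M_0' = M_0[p^+ \mapsto 0] \in \Phi(M_0)$ with $j = 0$.

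The bulk of the work is the four matching conditions of \autoref{def:weak-bisimulation}, checked by a case analysis on which transition fires. Firing any transition $t \notin \{t^*, t^+\}$ of $N$ is matched by firing the same $t$ in $N'$: here the free-choice hypothesis is used to argue that $t$'s preset is disjoint from $\{p^*, p^{**}\}$ in the relevant way, or that shifting tokens into $p^+$ does not disable $t$ — this is where \emph{unique synchronisation} matters, since $t^*$ having $\geq 2$ ingoing edges forces $p^*$ and $p^{**}$ to have $t^*$ as their \emph{only} outgoing edge, so no other transition competes for those tokens. Firing $t^*$ in $N$ (which requires a token in every place of its preset, including $p^*, p^{**}$) is matched in $N'$ by the two-step sequence: first fire $t^+$ ($\tau$-step, moving tokens $p^*,p^{**} \to p^+$), then fire $t^*$; the resulting $N'$-marking lies in $\Phi$ of the resulting $N$-marking. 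Conversely: a $t^+$-step in $N'$ is a silent move staying inside the same $\Phi(M)$ (increment $j$), matched by $M \xRightarrow{\epsilon} M$; a $t$-step for $t \notin \{t^*,t^+\}$ in $N'$ is matched by the same $t$ in $N$ (one must check $M$ enables $t$ whenever some $M' \in \Phi(M)$ does — again using that only $p^*,p^{**}$ differ and these feed only $t^*$); and a $t^*$-step in $N'$ (which needs a token in $p^+$, so $j \geq 1$) is matched in $N$ by a single $t^*$-step, whose enabledness in $M$ follows because $M(p^*) = M'(p^*) + j \geq 1$ and likewise for $p^{**}$, and the other preset places of $t^*$ carry the same tokens in $M$ and $M'$.

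For the divergence claim, the extra $\tau$-transition $t^+$ cannot by itself create an infinite $\tau$-path: each firing of $t^+$ strictly decreases $M'(p^*) + M'(p^{**})$ (equivalently increases $j$), which is bounded, and $p^+$ feeds only $t^*$ whose firing is not silent only if... — more precisely, along any $\tau$-only run in $N'$, $t^*$ never fires unless $\sigma(t^*) = \tau$, in which case a firing of $t^*$ corresponds (under $\mathcal{R}$) to a firing of the same $\tau$-transition $t^*$ in $N$. So every divergent path of $N'$ projects, by deleting the $t^+$-steps, to a divergent path of $N$ (the projection is infinite because only finitely many consecutive $t^+$-steps are possible), and conversely any divergent path of $N$ lifts to one of $N'$ by simulation. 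Finally, the $\O(1)$ amortised cost is immediate from inspection of \autoref{alg:free-wf-stepwise}: it adds one place, one transition, and a constant number of edges, and touches only $t^*$'s incident structure.

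The main obstacle I anticipate is the "weak-to-strong" direction of the matching — specifically, proving that whenever $M' \in \Phi(M)$ enables a transition in $N'$, the corresponding transition is enabled by $M$ in $N$. This is exactly the place where the free-choice structure is essential (without it, shifting tokens from $p^*$ to $p^+$ could enable or disable other transitions, breaking the argument), and it is also where one must be careful that $p^*$ and $p^{**}$ might coincide with preset places of $t^*$ other than via $p^+$ — the side condition $p^* \neq p^{**}$ in line 1 of \autoref{alg:free-wf-stepwise} is what rules out the degenerate overlap. I would isolate this as a small lemma ("$\Phi$-enabledness transfer") before assembling the bisimulation.
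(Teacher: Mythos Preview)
Your proposal is correct and follows essentially the same route as the paper: your relation $\Phi$ is exactly the paper's $\mathcal{R}$ (with $j$ playing the role of their $i$, bounded by $\min(M(p^*),M(p^{**}))$), the case analysis on the firing transition matches theirs, and the free-choice hypothesis is invoked for the same reason --- unique synchronisation forces $p^*\!\bullet = p^{**}\!\bullet = \{t^*\}$, so no other transition's enabledness depends on those two places. One small refinement: when matching a $t^*$-firing of $N$ in $N'$, your ``always fire $t^+$ then $t^*$'' recipe can fail (if $j = M(p^*)$ then $M'(p^*)=0$ and $t^+$ is disabled), so the paper splits on whether $M'(p^+)=0$ and, when $M'(p^+)>0$, fires $t^*$ directly --- an easy fix once you write out the details.
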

\begin{proof}
    \emph{(Sketch, detailed proof in  \appendixref{sec:appendix-free-wf-stepwise-weak-bisimulation}.)} We define the following relation between pair of markings, and we prove it is a bisimulation:
    \begin{align*}
        \mathcal{R} \mathrel{\;\coloneqq\;} \left\{\left(M,\ M[p^* \mapsto M(p^*) - i][p^{**} \mapsto M(p^{**}) - i][p^+ \mapsto i]\right) \middle|
        \begin{smallmatrix}
            M \in LTS(N, M_0) \\
            k = \min(M(p^*), M(p^{**})) \\
            0 \leq i \leq k
        \end{smallmatrix}
        \right\}
    \end{align*}
    where $M$ is a reachable marking from the initial marking $M_0$ in the free-choice net $N$ ($M$ is a state in $LTS(N, M_0)$). The intuition behind the relation is: when $i = 0$, it covers the direct extension of a marking in $N$ to one in $N'$ where the new place $p^+$ has no tokens; when $i > 0$, it covers the cases where the new transition $t^+$ has been fired $i$ times without firing $t^*$ afterwards.

    For instance, consider \autoref{fig:free-wf-example-before} and \ref{fig:free-wf-example-after} where $t^* = t_1$, $p^* = p_2$, $p^{**} = p_3$, $t^+ = t_3$ and $p^+ = p_5$. Clearly, the initial markings $(M_0, M_0')$ are in $\mathcal{R}$ since \autoref{alg:free-wf-stepwise} just extends $M_0$ to $M_0'$ where $p^+$ has zero tokens. Now consider any pair $(M, M') \in \mathcal{R}$. The transition $t_2$ is not part of the transformed part, so $t_2$ can be fired in $N$ iff it can be fired in $N'$. Firing $t_1$ in $N$ can be replied in $N'$ by firing $t_1$ when there is a token in $p_5$ ($i > 0$) and otherwise by firing $t_3$ followed by $t_1$ ($i = 0$). Firing $t_1$ in $N'$ can be replied in $N$ by also firing $t_1$. If $t_3$ is fired in $N'$ then $N$ should do nothing as this simply merges the tokens from $p_2$ and $p_3$ into one token in $p_5$ by a $\tau$-action (increases $i$ by one). All these cases result in a new pair in $\mathcal{R}$ which completes the proof of $LTS(N, M_0) \wbisimilar LTS(N', M_0')$.

    In general, weak bisimulation alone does not ensure the absence of new divergent paths~\cite{lts-book}. However, \autoref{alg:free-wf-stepwise} only extends existing paths to $t^*$ by one $\tau$-transition such that no divergent paths are changed, so $LTS(N', M_0')$ contains a divergent path iff $LTS(N, M_0)$ contains a divergent path.
    \qed
\end{proof}

\begin{lemma}[Invariant of \autoref{alg:free-wf-stepwise}]\label{lem:free-wf-stepwise-invariant}
    If \autoref{alg:free-wf-stepwise} is given a free-choice net $N$, it returns a free-choice net $N'$ that has no additional or changed transitions with no ingoing edges compared to $N$. \emph{(Proof available in \appendixref{sec:appendix-free-wf-stepwise-invariant}.)}
\end{lemma}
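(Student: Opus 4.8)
The plan is to verify that \autoref{alg:free-wf-stepwise} preserves both \emph{unique choice} and \emph{unique synchronisation} (\autoref{def:free-choice-net}), and to track the ingoing-edge count of every transition. I would start by fixing notation: let $N = (P,T,F,A,\sigma)$ be the input free-choice net with selected transition $t^*$ and selected ingoing edges $(p^*,t^*), (p^{**},t^*)$, and let $N' = (P', T', F', A \cup \{\tau\}, \sigma[t^+ \mapsto \tau])$ be the output, with the new place $p^+$ and the new $\tau$-transition $t^+$. The only edge changes are: the two edges $(p^*,t^*), (p^{**},t^*)$ are removed, and four edges $(p^*,t^+), (p^{**},t^+), (t^+,p^+), (p^+,t^*)$ are added. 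Everything else in $F'$ coincides with $F$.

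First I would handle the claim about transitions with no ingoing edges. The transition $t^*$ had at least two ingoing edges in $N$; in $N'$ it still has at least one ingoing edge (namely $(p^+,t^*)$ replaces the two removed ones), so it does not become a zero-ingoing-edge transition. The new transition $t^+$ has exactly two ingoing edges $(p^*,t^+)$ and $(p^{**},t^+)$, so it is not a zero-ingoing-edge transition either. No other transition has its ingoing edges modified. Hence $N'$ has no additional or changed zero-ingoing-edge transitions compared to $N$.

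Next I would check \emph{unique synchronisation} for $N'$: every transition with $\geq 2$ ingoing edges has all its ingoing edges from places with exactly one outgoing edge. For $t^*$: it now has fewer ingoing edges than in $N$, so either it drops to one ingoing edge (nothing to check) or it still has $\geq 2$; in the latter case its remaining ingoing edges are a subset of those it had in $N$ together with the new edge $(p^+,t^*)$, and $p^+$ has exactly one outgoing edge (to $t^*$) by construction, while the other source places inherit the property from $N$ — and crucially removing edges can only \emph{decrease} a place's out-degree, so the property is not broken. For $t^+$: its two ingoing edges come from $p^*$ and $p^{**}$; I must show each of $p^*,p^{**}$ has exactly one outgoing edge in $N'$. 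Here I use that $N$ is free-choice: since $t^*$ had $\geq 2$ ingoing edges in $N$, \emph{unique synchronisation} forces $p^*$ and $p^{**}$ to have exactly one outgoing edge in $N$, namely to $t^*$; in $N'$ that edge is redirected to $t^+$, so they still have exactly one outgoing edge. No other transition changes its ingoing edge set, so it retains unique synchronisation from $N$.

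Finally I would check \emph{unique choice}: every place with $\geq 2$ outgoing edges only has edges to transitions with exactly one ingoing edge. The new place $p^+$ has exactly one outgoing edge, so it is vacuously fine. For $p^*$ and $p^{**}$: by the argument above they have exactly one outgoing edge in $N'$, so vacuously fine. For any other place $p$, its outgoing-edge set is unchanged, so if it has $\geq 2$ outgoing edges they lead to the same transitions as in $N$; the only one of those targets whose ingoing-edge count changed is $t^*$, but $t^*$ had $\geq 2$ ingoing edges in $N$, so by \emph{unique choice} in $N$ no place with $\geq 2$ outgoing edges pointed to $t^*$ in the first place — hence $p$ did not point to $t^*$, and its targets in $N'$ all still have exactly one ingoing edge. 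The main subtlety, and the step I would present most carefully, is this interplay between the two free-choice conditions of $N$: unique synchronisation at $t^*$ is what guarantees $p^*,p^{**}$ are single-output places (so moving their edges is harmless), and unique choice at $t^*$ is what guarantees no multi-output place was feeding $t^*$ (so raising/lowering $t^*$'s in-degree cannot violate unique choice). Everything else is a routine edge-count bookkeeping argument, which I would summarise rather than belabour.
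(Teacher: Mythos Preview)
Your proof is correct and follows essentially the same approach as the paper: verify the free-choice conditions for the modified elements $p^*, p^{**}, p^+, t^*, t^+$ and observe that everything else is unchanged. Your version is considerably more detailed than the paper's --- you explicitly handle the zero-ingoing-edge claim and spell out why unchanged places with $\geq 2$ outgoing edges cannot have pointed to $t^*$, whereas the paper's appendix proof is a terse four-sentence summary that leaves these checks implicit.
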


\begin{algorithm}[t]
    \DontPrintSemicolon
    \SetKwInOut{Input}{Input}
    \SetKwInOut{Output}{Output}
    \SetKw{Where}{where}

    \Input{Free-choice net $N = (P, T, F, A, \sigma)$ and marking $M_0: P \to \mathbb{N}$}
    \Output{Petri net $(P', T', F', A', \sigma')$ and marking $M_0': P' \to \mathbb{N}$}

    $(N', M_0') \gets (N, M_0)$\;
    \While{$\exists t^* \in T'$ \Where $|\{p \,|\, (p, t^*) \in F\}| > (\text{\leIf*{$\sigma(t^*) = \tau$}{$2$}{$1$}})$}{
        $(N', M_0') \gets \text{\autoref{alg:free-wf-stepwise}}(N', M_0', t^*)$\;
    }
    \Return{$(N', M_0')$}\;

    \caption{Iterative Petri net transition preset reduction}
    \label{alg:free-wf-full}
\end{algorithm}

To achieve our encoding of free-choice workflow into CCS nets we use \autoref{alg:free-wf-full}, which applies \autoref{alg:free-wf-stepwise} (by non-deterministically picking a transition $t^*$) until all transitions have at most one ingoing edge (or two for $\tau$-transitions). If \autoref{alg:free-wf-full} is applied to a free-choice workflow net, it returns a CCS net (\autoref{lem:free-wf-full-correctness}) that is weakly bisimilar and has no new divergent paths (\autoref{thm:free-wf-full-weak-bisimulation}).

\begin{lemma}[\autoref{alg:free-wf-full} output]\label{lem:free-wf-full-correctness}
    If applied to a free-choice workflow net $(P, T, F, A, \sigma)$, \autoref{alg:free-wf-full} returns a CCS net.
    \emph{(Proof: \appendixref{sec:appendix-free-wf-full-correctness}.)}
\end{lemma}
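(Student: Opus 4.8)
The plan is to prove \autoref{lem:free-wf-full-correctness} by showing that \autoref{alg:free-wf-full} terminates and that its output satisfies the two defining conditions of a CCS net (\autoref{def:ccs-net}): every transition has one or two ingoing edges, and transitions with two ingoing edges are labelled $\tau$. I would split the argument into three parts: termination of the \textbf{while} loop, the postcondition guaranteed by the loop exit, and the lower bound ``at least one ingoing edge'' coming from the workflow structure.

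First I would establish termination. Each iteration invokes \autoref{alg:free-wf-stepwise} on some transition $t^*$ with too many ingoing edges; by inspecting \autoref{alg:free-wf-stepwise}, the chosen $t^*$ loses two ingoing edges and gains one (the edge $(p^+, t^*)$), so its in-degree strictly decreases, while the new transition $t^+$ is created with exactly two ingoing edges and is labelled $\tau$, so $t^+$ never satisfies the loop guard. Hence a suitable termination measure is, e.g., $\sum_{t \in T'} \max(0, |{}^\bullet t| - 2)$ (using $|{}^\bullet t|$ for the number of ingoing edges of $t$), which strictly decreases at every iteration: the contribution of $t^*$ drops by one (or the net contribution drops, since $t^+$ contributes $0$), and no other transition's preset is touched. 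Alternatively one can argue directly with a multiset of in-degrees ordered lexicographically. This shows the loop runs only finitely many times, so \autoref{alg:free-wf-full} returns.

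Next, the postcondition. When the loop exits, the guard fails, so for every $t^* \in T'$ we have $|{}^\bullet t^*| \leq 2$ if $\sigma(t^*) = \tau$ and $|{}^\bullet t^*| \leq 1$ otherwise. Thus any transition with two ingoing edges in the output must be a $\tau$-transition, which is exactly the second clause of \autoref{def:ccs-net}. For the first clause I still need the lower bound, i.e.\ every transition has at least one ingoing edge. Here I would use \autoref{lem:free-wf-stepwise-invariant}: starting from a free-choice \emph{workflow} net, \autoref{alg:free-wf-stepwise} returns a free-choice net with no additional or changed transitions having zero ingoing edges; applying this inductively along the loop, the output net has no transition with zero ingoing edges that was not already such in the original net. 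But in a workflow net the \emph{connectedness} property forces every transition to lie on a directed path from $i$ to $o$, and since $i$ has no ingoing edges and transitions cannot be adjacent to each other in a Petri net, the first edge out of $i$ on such a path is to a transition; more to the point, every transition $v$ on such a path has an incoming edge from the place preceding it on the path --- so the original net already has no zero-in-degree transitions (as remarked in the text just after \autoref{def:free-choice-workflow-net}). Combining, the output has every transition with in-degree $1$ or $2$, and in-degree $2$ only for $\tau$-transitions, which is precisely the definition of a CCS net.

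I expect the main obstacle to be the bookkeeping around \autoref{lem:free-wf-stepwise-invariant} and the connectedness argument: I must be careful that ``no transition with zero ingoing edges'' is preserved across \emph{all} iterations (a straightforward induction, since the lemma says each single step adds/changes no such transition, and the newly created $t^+$ always has two ingoing edges), and that the \emph{free-choice} property is likewise maintained (again by \autoref{lem:free-wf-stepwise-invariant} applied inductively) so that \autoref{lem:free-wf-stepwise-invariant} remains applicable at the next iteration. The termination measure itself is routine once the effect of \autoref{alg:free-wf-stepwise} on in-degrees is spelled out; the only subtlety is noting that the loop guard is stated in terms of $F$ (not $F'$) in the pseudocode, so I would first observe this is a typo or harmless under the convention that $F$ denotes the current edge set, and phrase the invariant in terms of the current net $N'$ throughout.
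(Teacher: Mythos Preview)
Your proposal is correct and follows essentially the same approach as the paper's proof: termination via the observation that each step strictly reduces the in-degree of $t^*$ while the freshly created $t^+$ (in-degree~$2$, label~$\tau$) never satisfies the guard; the upper bound on in-degrees from the negated loop guard; and the lower bound from \autoref{lem:free-wf-stepwise-invariant} together with the connectedness of workflow nets (the paper leaves this last part implicit, merely citing the invariant). One small slip: your measure $\sum_t \max(0,|{}^\bullet t|-2)$ does \emph{not} strictly decrease when $t^*$ is a non-$\tau$ transition with exactly two ingoing edges (both before and after the step its contribution is~$0$, and $t^+$ also contributes~$0$); the fix is immediate---use the label-dependent threshold, e.g.\ $\sum_t \max(0,|{}^\bullet t| - (\text{$2$ if $\sigma(t)=\tau$ else $1$}))$, or argue as the paper does that only transitions of the original $T$ ever satisfy the guard and each can be selected only finitely often.
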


\begin{theorem}[Correctness of \autoref{alg:free-wf-full}]\label{thm:free-wf-full-weak-bisimulation}
    Given a free-choice net $N = (P, T, F, A, \sigma)$ and a marking $M_0: P \to \mathbb{N}$, the result of applying \autoref{alg:free-wf-full} on $N$ and $M_0$ is a Petri net $N'$ and marking $M_0'$ such that $LTS(N, M_0) \wbisimilar LTS(N', M_0')$ and $LTS(N, M_0)$ has a divergent path iff $LTS(N', M_0')$ has. Both the transformation time and the size of $(N', M')$ are $\O(|N|)$.
\end{theorem}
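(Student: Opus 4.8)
\emph{(Sketch.)}
The plan is to lift the two single-step results for \autoref{alg:free-wf-stepwise} --- correctness (\autoref{thm:free-wf-stepwise-weak-bisimulation}) and the free-choice invariant (\autoref{lem:free-wf-stepwise-invariant}) --- to the loop of \autoref{alg:free-wf-full}, in three moves: first show the loop terminates and count its iterations (which will also give the size and time bounds); then show that every call to \autoref{alg:free-wf-stepwise} satisfies the preconditions of \autoref{thm:free-wf-stepwise-weak-bisimulation}; and finally chain the resulting weak bisimulations and divergence equivalences by transitivity. The only genuinely new ingredient is the first move, which I comment on last.

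\textbf{Termination and iteration count.} Writing $|{}^{\bullet}t|$ for the number of places with an edge to $t$, and $c(t) = 2$ if $\sigma(t) = \tau$ and $c(t) = 1$ otherwise, I would use the measure $\Phi = \sum_{t} \max\!\bigl(0,\, |{}^{\bullet}t| - c(t)\bigr)$, the sum ranging over the transitions of the current net. Each iteration picks a transition $t^{*}$ with $|{}^{\bullet}t^{*}| > c(t^{*})$ and, by \autoref{alg:free-wf-stepwise}, replaces two of its ingoing edges by the single edge $(p^{+}, t^{*})$, so $|{}^{\bullet}t^{*}|$ --- and hence its summand --- drops by exactly one; the fresh transition $t^{+}$ has exactly two ingoing edges and label $\tau$, so its summand is $\max(0, 2 - 2) = 0$ and the loop guard never re-selects it; the fresh place $p^{+}$ does not contribute to $\Phi$; and all other transitions are untouched. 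Hence $\Phi$ strictly decreases by one per iteration, and since $\Phi \geq 0$ the loop performs exactly $\Phi(N)$ iterations. Because $\Phi(N) \leq |F| = \O(|N|)$, and each iteration runs in amortized $\O(1)$ time and increases the size of the net and marking by $\O(1)$ (\autoref{thm:free-wf-stepwise-weak-bisimulation}), both the total running time and the size of $(N', M_0')$ are $\O(|N|)$.

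\textbf{Validity of each step, then chaining.} By induction on the iterations, using \autoref{lem:free-wf-stepwise-invariant}: the input net is free-choice by hypothesis, and each application of \autoref{alg:free-wf-stepwise} again returns a free-choice net, so the input to the next iteration is again free-choice. Moreover the loop guard picks $t^{*}$ with $|{}^{\bullet}t^{*}| > c(t^{*}) \geq 1$, i.e.\ with at least two ingoing edges, which --- since \autoref{def:petri-net} allows at most one edge between a place and a transition --- come from two distinct places; this is exactly the precondition of \autoref{alg:free-wf-stepwise} and of \autoref{thm:free-wf-stepwise-weak-bisimulation}. So, letting $N = N_{0}, N_{1}, \dots, N_{m} = N'$ (with markings $M_{0} = M^{(0)}, \dots, M^{(m)} = M_0'$) be the sequence produced ($m = \Phi(N)$), \autoref{thm:free-wf-stepwise-weak-bisimulation} gives $LTS(N_{j}, M^{(j)}) \wbisimilar LTS(N_{j+1}, M^{(j+1)})$ and ``$LTS(N_{j}, M^{(j)})$ has a divergent path iff $LTS(N_{j+1}, M^{(j+1)})$ has'' for every $j < m$. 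Since weak bisimilarity is transitive (the relational composition of two weak bisimulations is again a weak bisimulation), we conclude $LTS(N, M_0) \wbisimilar LTS(N', M_0')$; and composing the biconditionals yields that $LTS(N, M_0)$ has a divergent path iff $LTS(N', M_0')$ has.

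\textbf{Main obstacle.} The one step that is not a mechanical induction or a direct invocation of \autoref{thm:free-wf-stepwise-weak-bisimulation} is termination: one must choose a measure that correctly balances the edges \autoref{alg:free-wf-stepwise} deletes against those it introduces, and check that the new $\tau$-transition is created already ``saturated'' (exactly two ingoing edges), so that it cannot re-enter the loop and cause non-termination. Once $\Phi$ is in place, the remaining arguments are routine.
\qed
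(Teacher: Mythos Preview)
Your proposal is correct and follows essentially the same approach as the paper: induction over the iterations of \autoref{alg:free-wf-full}, invoking \autoref{lem:free-wf-stepwise-invariant} to maintain the free-choice precondition, \autoref{thm:free-wf-stepwise-weak-bisimulation} for each step, and transitivity of $\wbisimilar$ and of the divergence biconditional to chain the results. Your explicit decreasing measure $\Phi$ is precisely the counting the paper does informally (``count the total number of ingoing edges to transitions, subtract two for each $\tau$-transition and one for all other transitions''), so there is no substantive difference in strategy.
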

\begin{proof}
    \emph{(Sketch, detailed proof in \appendixref{sec:appendix-free-wf-full-weak-bisimulation}.)} Follows from induction on the number of applications of \autoref{alg:free-wf-stepwise} and transitivity of bisimulation~\cite{lts-book}.
    \qed
\end{proof}

The full encoding from a free-choice workflow net into a weakly bisimilar CCS process is obtained by composing \autoref{alg:free-wf-full} and \autoref{alg:ccs-net}, leading to \autoref{thm:free-wf-full}. \autoref{alg:free-wf-stepwise} can at most be applied $|F|$ times. It runs in constant time and increases the size by a constant amount such that \autoref{alg:free-wf-full} produces a linear-sized CCS net. Hence, the resulting CCS process is also linear in size of the original Petri net and marking.

\begin{theorem}[Correctness of encoding from free-choice workflow net to CCS]\label{thm:free-wf-full}
    Let $N = (P, T, F, A, \sigma)$ be a free-choice workflow net and $M_0: P \to \mathbb{N}$ be a marking. $N$ and $M_0$ can be encoded into a weakly bisimilar CCS process $Q$ with defining equations $\mathcal{D}$ such that $LTS(N, M_0)$ has a divergent path iff $LTS(Q, \mathcal{D})$ has. The encoding time and the size of $(Q, \mathcal{D})$ are $\O(|N| + \sum_{p \in P} M_0(p))$.
\end{theorem}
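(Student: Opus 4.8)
The plan is to obtain \autoref{thm:free-wf-full} by simply \emph{composing} the two transformations already established, namely \autoref{alg:free-wf-full} (free-choice workflow net $\to$ CCS net) and \autoref{alg:ccs-net} (CCS net $\to$ CCS process), and then chaining their correctness statements. Concretely: given a free-choice workflow net $N$ with marking $M_0$, first apply \autoref{alg:free-wf-full} to obtain a Petri net $N'$ with marking $M_0'$. By \autoref{lem:free-wf-full-correctness}, $N'$ is a CCS net, so \autoref{alg:ccs-net} is applicable to $(N', M_0')$; let $(Q,\mathcal{D})$ be its output. This $(Q,\mathcal{D})$ is the claimed encoding.

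For the behavioural equivalence, I would invoke \autoref{thm:free-wf-full-weak-bisimulation} to get $LTS(N, M_0) \wbisimilar LTS(N', M_0')$, and \autoref{thm:ccs-net-strong-bisimulation} to get $LTS(N', M_0') \bisimilar LTS(Q, \mathcal{D})$. Since strong bisimilarity implies weak bisimilarity, and weak bisimilarity is transitive (see \cite{lts-book}), these combine to $LTS(N, M_0) \wbisimilar LTS(Q, \mathcal{D})$. For the divergence clause: \autoref{thm:free-wf-full-weak-bisimulation} already states that $N'$ has a divergent path iff $N$ does; and since $LTS(N', M_0') \bisimilar LTS(Q, \mathcal{D})$ is a \emph{strong} bisimulation, every $\tau$-step on one side is matched by exactly one $\tau$-step on the other, so an infinite $\tau$-sequence exists in one LTS iff it exists in the other. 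Composing these two bi-implications yields that $LTS(N, M_0)$ has a divergent path iff $LTS(Q, \mathcal{D})$ does.

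For the complexity bound, I would argue in two stages. By \autoref{thm:free-wf-full-weak-bisimulation}, $N'$ has size $\O(|N|)$, and the transformation takes time $\O(|N|)$; note also that \autoref{alg:free-wf-stepwise} only adds places with zero tokens, so $\sum_{p \in P'} M_0'(p) = \sum_{p \in P} M_0(p)$. Then, by \autoref{thm:ccs-net-strong-bisimulation}, applying \autoref{alg:ccs-net} to $(N', M_0')$ takes time and produces output of size $\O(|N'| + \sum_{p \in P'} M_0'(p)) = \O(|N| + \sum_{p \in P} M_0(p))$. Adding the $\O(|N|)$ cost of the first stage leaves the overall bound at $\O(|N| + \sum_{p \in P} M_0(p))$.

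Honestly, there is no deep obstacle here: the theorem is essentially a corollary packaging the two main building blocks, and all the real work lives in \autoref{thm:free-wf-full-weak-bisimulation}, \autoref{thm:ccs-net-strong-bisimulation}, and \autoref{lem:free-wf-full-correctness}. The only point requiring a sliver of care is the divergence preservation through \autoref{alg:ccs-net} — but since that step is a strong bisimulation (not merely weak), $\tau$-steps are matched one-for-one and divergence transfers immediately, so even this is routine. I would therefore keep the proof short, explicitly citing each earlier result in sequence.
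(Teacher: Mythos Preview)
Your proposal is correct and follows essentially the same approach as the paper: the paper's proof is the one-liner ``By \autoref{lem:free-wf-stepwise-invariant}+\ref{lem:free-wf-full-correctness}, \autoref{thm:free-wf-full-weak-bisimulation}+\ref{thm:ccs-net-strong-bisimulation} and transitivity of bisimulation,'' which is exactly the chain you spell out (plus the extra citation of \autoref{lem:free-wf-stepwise-invariant}, which is already subsumed by \autoref{lem:free-wf-full-correctness}). Your expanded treatment of the divergence clause and the complexity bound is more explicit than the paper's, but the underlying argument is identical.
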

\begin{proof}
    By \autoref{lem:free-wf-stepwise-invariant}+\ref{lem:free-wf-full-correctness}, \autoref{thm:free-wf-full-weak-bisimulation}+\ref{thm:ccs-net-strong-bisimulation} and transitivity of bisimulation~\cite{lts-book}.
    \qed
\end{proof}

\subsection{Encoding Any Free-Choice Net into CCS}\label{sec:free}

Although the focus of \autoref{sec:free-wf} is encoding free-choice \emph{workflow} nets into CCS, many definitions and results therein can be applied to \emph{any} free-choice net (\autoref{alg:free-wf-stepwise}, \autoref{thm:free-wf-stepwise-weak-bisimulation}, \autoref{lem:free-wf-stepwise-invariant}, \autoref{alg:free-wf-full}, \autoref{thm:free-wf-full-weak-bisimulation}). In this section we build upon them to develop an encoding from \emph{any} free-choice net into CCS.

To achieve this result, we define a superclass of CCS nets (\autoref{def:ccs-net}) called \emph{2-$\tau$-synchronisation nets} (\autoref{def:2-tau-synchronisation-net}, below), which may have transitions with no ingoing edges. Such transitions can be seen as \emph{token generators}: they can always fire and produce tokens; they do not occur in workflow nets, and they would be dropped if given to \autoref{alg:ccs-net} because they do not originate from any place. For this reason, token generator transitions are handled separately in \autoref{alg:free} (which extends \autoref{alg:ccs-net}).

\begin{definition}[2-$\tau$-synchronisation net]\label{def:2-tau-synchronisation-net}
    A Petri net $(P, T, F, A, \sigma)$ is a \emph{2-$\tau$-synchronisation net} iff each transition $t \in T$ has at most two ingoing edges --- and in the latter case, $\sigma(t) = \tau$.
\end{definition}

\begin{algorithm}[t]
    \DontPrintSemicolon
    \SetKwInOut{Input}{Input}
    \SetKwInOut{Output}{Output}
    \SetKw{Where}{where}

    \Input{2-$\tau$-synchronisation net $N = (P, T, F, A, \sigma)$ and marking $M_0: P \to \mathbb{N}$}
    \Output{CCS process $Q$ and partial mapping of defining equations $\mathcal{D}$}

    $(\_, \mathcal{D}) \gets \text{\autoref{alg:ccs-net}}(N, M_0)$\;
    $T_0 \gets t \in \left\{t \;\middle|\; t \in T \text{ and } t \text{ has 0 ingoing edges}\right\}$\;
    \For{$t \in T_0$}{
        $\mathcal{D}(X_t) \gets \sigma(t).(X_t \mathbin{|} X_{p_1} \mathbin{|} \dots \mathbin{|} X_{p_k})$ \Where $\{p_1, \dots, p_k\} = \{p \,|\, (t, p) \in F\}$\;
    }
    $Q \gets (\nu s_1) \dots (\nu s_n) \left({X_{p_1}^{M_0(p_1)}} \mathbin{|} \dots \mathbin{|} {X_{p_{|P|}}^{M_0(p_{|P|})}} \mathbin{|} \smash{\underbrace{X_{t_1} \mathbin{|} \dots \mathbin{|} X_{t_k}}_{t_i \in \{t_1, \dots, t_k\} = T_0}}\right)$\;
    \Return{$\left(Q,\; \mathcal{D}\right)$}

    \caption{Encoding from 2-$\tau$-synchronisation net to CCS process}
    \label{alg:free}
\end{algorithm}

Line 1 of \autoref{alg:free} retrieves the defining equations $\mathcal{D}$ returned by \autoref{alg:ccs-net} (the returned CCS process is not used). Line 2 defines $T_0$ as all transitions $t \in T$ with no ingoing edges. Each transition $t \in T_0$ is encoded as a sequential process named $X_t$ in $\mathcal{D}$ (line 3--5). $X_t$ is defined as $\sigma(t)$ followed by the \emph{parallel composition} of $X_t$ (itself) and all processes named $X_{p_i}$, where $p_i$ has an ingoing edge from $t$. This ensures that whenever $X_t$ is used, then it spawns a new copy of itself that can be used again. Finally on line 6, all $X_{t_i}$, where $t_i \in T_0$, are included once in the process $Q$ to be available from the start. For instance, the simple 2-$\tau$-synchronisation net $N = (\{p_1\}, \{t_1\}, \{(t_1, p_1)\}, \{b\}, [t_1 \mapsto b]\})$ and marking $M_0$ with $M_0(p_1) = 0$ are encoded into $(X_{t_1}, \mathcal{D})$ where $\mathcal{D}(X_{t_1}) = b.(X_{t_1} \mathbin{|} X_{p_1})$ which gives a process where $X_{t_1}$ is always present: $\smash{X_{t_1} \! \xrightarrow{b} \! (X_{p_1} \mathbin{|} X_{t_1}) \! \xrightarrow{b} \! (X_{p_1}^2 \mathbin{|} X_{t_1}) \! \xrightarrow{b} \! \dots}$
\autoref{alg:free} also produces strongly bisimilar CCS processes like \autoref{alg:ccs-net}.

\begin{theorem}[Correctness of \autoref{alg:free}]\label{thm:free-strong-bisimulation}
    Given a 2-$\tau$-synchronisation net $N = (P, T, F, A, \sigma)$ and an initial marking $M_0: P \to \mathbb{N}$, the result of applying \autoref{alg:free} on $N$ and $M_0$ is $(Q, \mathcal{D})$ such that $LTS(N, M_0) \bisimilar LTS(Q, \mathcal{D})$. The translation time and the size of $(Q, \mathcal{D})$ are bound by $\O(|N| + \sum_{p \in P} M_0(p))$.
\end{theorem}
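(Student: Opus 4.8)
The plan is to reuse the structure of the proof of \autoref{thm:ccs-net-strong-bisimulation}, extending the bisimulation relation to account for the token-generator transitions in $T_0 = \{t \mid t \in T, t \text{ has } 0 \text{ ingoing edges}\}$. Write $T_0 = \{t_1,\dots,t_k\}$ and let $s_1,\dots,s_n$ be the restricted names produced by the call to \autoref{alg:ccs-net} on line~1 (note that \autoref{alg:ccs-net} creates no $Y_t$ for $t \in T_0$, since such $t$ originate from no place, so the $X_p$ definitions in $\mathcal{D}$ are exactly those of the CCS-net encoding). I would define
\begin{align*}
    \mathcal{R} \mathrel{\;\coloneqq\;} \left\{\left(M,\ (\nu s_1)\dots(\nu s_n)\left({X_{p_1}^{M(p_1)}} \mathbin{|} \dots \mathbin{|} {X_{p_{|P|}}^{M(p_{|P|})}} \mathbin{|} X_{t_1} \mathbin{|} \dots \mathbin{|} X_{t_k}\right)\right) \;\middle|\; M \in LTS(N, M_0)\right\}
\end{align*}
i.e.~the same correspondence as in \autoref{thm:ccs-net-strong-bisimulation} (one replica of $X_p$ per token in $p$), together with the invariant that exactly one copy of $X_{t}$ is present for every $t \in T_0$. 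By construction of line~6 we have $(M_0, Q) \in \mathcal{R}$, so it suffices to show $\mathcal{R}$ is a strong bisimulation (working, as in the proof of \autoref{thm:ccs-net-strong-bisimulation}, modulo associativity and commutativity of $\mathbin{|}$).

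For transitions of $N$ with one or two ingoing edges, and for their CCS counterparts, nothing changes with respect to \autoref{thm:ccs-net-strong-bisimulation}: the relevant fragments of $\mathcal{D}(X_p)$ are produced by the same lines of \autoref{alg:ccs-net}, and the extra parallel components $X_{t_i}$ neither enable nor disable any of those moves — indeed $\sigma(t_i)$ is a visible action drawn from $A$, and co-actions are exclusive to CCS (\autoref{def:petri-net}), so the only co-actions in the whole process are the fresh $\overline{s_t}$, and no $X_{t_i}$ can ever participate in a synchronisation. It therefore remains to treat the genuinely new behaviour. First, firing a transition $t \in T_0$ in $N$ consumes no tokens and adds one token to each $p$ with $(t,p)\in F$; on the CCS side this is matched exactly by $X_{t} \xrightarrow{\sigma(t)} X_{t} \mathbin{|} X_{p_1} \mathbin{|} \dots \mathbin{|} X_{p_k}$ (line~4), which spawns precisely those place processes while restoring a single copy of $X_{t}$, so the resulting pair is again in $\mathcal{R}$. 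Conversely, the only CCS transitions not already covered by the \autoref{alg:ccs-net} analysis are exactly the $\sigma(t)$-prefix steps of the components $X_{t}$ with $t \in T_0$ (a $\tau$-move would need a synchronisation, which we just argued cannot involve any $X_{t_i}$), and each such step is matched by firing the always-enabled transition $t$ in $N$. In both directions the ``one copy of $X_t$'' invariant is preserved, so $\mathcal{R}$ is closed. For the complexity claim: \autoref{alg:ccs-net} runs within $\O(|N| + \sum_{p\in P} M_0(p))$ by \autoref{thm:ccs-net-strong-bisimulation}; lines~2--5 add at most $|T_0| \le |T|$ defining equations, each of size proportional to the out-degree of the corresponding transition, hence $\O(|F|)$ in total; and line~6 adds $|T_0|$ parallel components to $Q$. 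So the overall time and the size of $(Q,\mathcal{D})$ stay within $\O(|N| + \sum_{p\in P} M_0(p))$.

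The main obstacle I anticipate is not a single hard argument but making the reuse of \autoref{thm:ccs-net-strong-bisimulation} rigorous: one must check that interleaving the persistent $X_{t_i}$ into the parallel composition never produces derivations outside the intended shape (in particular, that $X_t$ never gets ``stuck'' mid-prefix in a way that breaks the marking correspondence, and that no unexpected synchronisation arises), and that firing a $T_0$-transition — which, unlike every transition handled by \autoref{alg:ccs-net}, is enabled in \emph{every} marking — is faithfully and exclusively mirrored by the self-replicating definition on line~4. Once the relation and these structural observations are in place, the case analysis is routine and essentially identical to that of \autoref{thm:ccs-net-strong-bisimulation}.
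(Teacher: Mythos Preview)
Your proposal is correct and follows essentially the same approach as the paper: the paper's proof also extends the bisimulation relation of \autoref{thm:ccs-net-strong-bisimulation} by adjoining one persistent copy of $X_{t}$ for each $t\in T_0$, reuses the $T_1$ and $T_2$ cases verbatim, and adds the new $T_0$ case exactly as you describe, with the same complexity bookkeeping. One minor wording slip: $\sigma(t_i)$ need not be visible (it may be $\tau$), but your argument only actually uses that it is never a co-action, which is precisely what \autoref{def:petri-net} guarantees, so the reasoning goes through unchanged.
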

\begin{proof}
    \emph{(Sketch, detailed proof in \appendixref{sec:appendix-free-strong-bisimulation}.)} Extend the proof of \autoref{thm:ccs-net-strong-bisimulation} by extending the parallel composition in \emph{all} CCS processes in the relation $\mathcal{R}$ with the parallel composition of all $X_{t_i}$, where $t_i \in T_0$.
    \qed
\end{proof}

The CCS process obtained from \autoref{alg:free} is linear in size of the encoded 2-$\tau$-synchronisation net and marking. \autoref{thm:free-strong-bisimulation} allows us to extend the encoding of free-choice workflow nets to free-choice nets.

\begin{lemma}[\autoref{alg:free-wf-full} output]\label{lem:free-correctness}
    If applied to a free-choice net $(P, T, F, A, \sigma)$, \autoref{alg:free-wf-full} returns a 2-$\tau$-synchronisation net.
    \emph{(Proof: \appendixref{sec:appendix-free-correctness}.)}
\end{lemma}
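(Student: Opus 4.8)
\emph{Proof proposal.} The plan is to establish two facts: that the \textbf{while} loop of \autoref{alg:free-wf-full} terminates (so the algorithm indeed returns a net), and that the net $N' = (P', T', F', A', \sigma')$ produced when the loop exits satisfies \autoref{def:2-tau-synchronisation-net}. For the intermediate steps I would rely on \autoref{lem:free-wf-stepwise-invariant}, which guarantees that every net obtained from a call to \autoref{alg:free-wf-stepwise} is again a (free-choice) Petri net in the sense of \autoref{def:petri-net}, and adds no transitions with zero ingoing edges. I would also note that each loop iteration is well-defined: since \autoref{def:petri-net} permits at most one edge between a place and a transition, a transition selected by the guard (which has two or more ingoing edges) has two or more distinct preset places, so line~1 of \autoref{alg:free-wf-stepwise} can always choose $p^* \neq p^{**}$.

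For termination, I would use the measure $\Phi(N') = \sum_{t \in T'} \max\bigl(0,\, |{}^\bullet t| - c(t)\bigr)$, where ${}^\bullet t = \{p \mid (p,t) \in F'\}$ and $c(t) = 2$ if $\sigma'(t) = \tau$ and $c(t) = 1$ otherwise. One iteration picks $t^*$ with $|{}^\bullet t^*| > c(t^*)$, removes two of its ingoing edges and adds a single new one from a fresh place $p^+$ --- so $|{}^\bullet t^*|$ drops by exactly one, while $\sigma'(t^*)$ and hence $c(t^*)$ are untouched --- and creates exactly one fresh $\tau$-transition $t^+$ with preset of size $2$. No other transition's preset or label changes. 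Therefore the $t^*$-summand of $\Phi$ decreases by one, the new $t^+$-summand equals $\max(0, 2-2) = 0$, and every other summand is unchanged, so $\Phi$ strictly decreases by one per iteration. Since $\Phi \ge 0$, the loop runs at most $\Phi(N)$ times (consistent with the $\O(|N|)$ bound of \autoref{thm:free-wf-full-weak-bisimulation}) and terminates.

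For the post-condition, the loop guard is exactly the per-transition negation of the defining condition of a 2-$\tau$-synchronisation net. When the loop exits, no $t \in T'$ has $|{}^\bullet t| > c(t)$; hence every $t \in T'$ with $\sigma'(t) = \tau$ has $|{}^\bullet t| \le 2$, and every $t \in T'$ with $\sigma'(t) \neq \tau$ has $|{}^\bullet t| \le 1$. Equivalently, every transition of $N'$ has at most two ingoing edges, and a transition with exactly two must be $\tau$-labelled (a non-$\tau$ transition with two ingoing edges would satisfy $|{}^\bullet t| > 1 = c(t)$, contradicting the exit condition). This is precisely \autoref{def:2-tau-synchronisation-net}, so $N'$ is a 2-$\tau$-synchronisation net.

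The proof is largely bookkeeping. The one place to be careful --- and the step I would check most closely --- is that the freshly introduced transitions $t^+$ can never re-enable the loop guard (they arrive already ``saturated'', with two ingoing edges and label $\tau$) and that \autoref{alg:free-wf-stepwise} changes $|{}^\bullet t^*|$ by exactly $-1$ and alters no other transition's preset, which is what makes $\Phi$ a genuine, strictly decreasing termination measure. The free-choice assumption in the statement is used only indirectly, via \autoref{lem:free-wf-stepwise-invariant}, to keep the intermediate nets within the class of \autoref{def:petri-net}; it is essential for the companion bisimulation results but not for the structural claim proved here.
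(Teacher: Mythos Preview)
Your proposal is correct and takes essentially the same approach as the paper's proof (which piggybacks on the proof of \autoref{lem:free-wf-full-correctness}): show termination by noting that each call to \autoref{alg:free-wf-stepwise} shrinks the preset of $t^*$ by one while the freshly introduced $\tau$-transition $t^+$ arrives with exactly two ingoing edges and hence never re-triggers the guard, then read the 2-$\tau$-synchronisation condition directly off the negated loop guard. Your explicit potential $\Phi$ is a cleaner formalisation of the paper's informal termination argument, and your closing remark that the free-choice hypothesis is inessential for this purely structural claim (only for the companion bisimulation results) is accurate.
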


\begin{theorem}[Correctness of free-choice net to CCS encoding]\label{thm:free}
    Let $N = (P, T, F, A, \sigma)$ be a free-choice net and $M_0: P \to \mathbb{N}$ be an initial marking. $N$ and $M_0$ can be encoded into a weakly bisimilar CCS process $Q$ with defining equations $\mathcal{D}$ such that $LTS(N, M_0)$ has a divergent path iff $LTS(Q, \mathcal{D})$ has. The encoding time and the size of $(Q, \mathcal{D})$ are $\O(|N| + \sum_{p \in P} M_0(p))$.
\end{theorem}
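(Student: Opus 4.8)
The plan is to obtain \autoref{thm:free} by composing the two encoding steps already established in this section and chaining the correctness results, exactly in the spirit of the proof of \autoref{thm:free-wf-full}. First I would apply \autoref{alg:free-wf-full} to the free-choice net $N$ and marking $M_0$, obtaining an intermediate Petri net $N'$ with marking $M_0'$; then I would apply \autoref{alg:free} to $(N', M_0')$ to obtain the CCS process $Q$ and defining equations $\mathcal{D}$. The correctness claim then has three components — weak bisimilarity, the divergence-preservation biconditional, and the size/time bound — and each is assembled from earlier results.

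For the behavioural part, \autoref{lem:free-correctness} guarantees that $N'$ is a 2-$\tau$-synchronisation net, so \autoref{alg:free} is indeed applicable to it; \autoref{thm:free-wf-full-weak-bisimulation} (which, as the paper remarks at the start of \autoref{sec:free}, applies to \emph{any} free-choice net, not just workflow nets) gives $LTS(N, M_0) \wbisimilar LTS(N', M_0')$ together with the divergence biconditional between those two LTSs; and \autoref{thm:free-strong-bisimulation} gives $LTS(N', M_0') \bisimilar LTS(Q, \mathcal{D})$. Since strong bisimilarity implies weak bisimilarity and is transitive (see \cite{lts-book}), composing these yields $LTS(N, M_0) \wbisimilar LTS(Q, \mathcal{D})$. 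For divergence, a strong bisimulation preserves $\tau$-transitions step-for-step, so $LTS(N', M_0')$ has a divergent path iff $LTS(Q, \mathcal{D})$ has one; chaining this with the divergence biconditional from \autoref{thm:free-wf-full-weak-bisimulation} closes the loop between $N$ and $(Q,\mathcal{D})$.

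For the complexity bound, I would note that \autoref{thm:free-wf-full-weak-bisimulation} already states that the transformation time and the size of $(N', M_0')$ are $\O(|N|)$ — in particular $\sum_{p \in P'} M_0'(p) = \sum_{p \in P} M_0(p)$ since \autoref{alg:free-wf-stepwise} only adds places with zero tokens. Feeding this into \autoref{thm:free-strong-bisimulation}, whose bound is $\O(|N'| + \sum_{p \in P'} M_0'(p))$, gives $\O(|N| + \sum_{p \in P} M_0(p))$, and the two transformation times add up within the same bound.

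The proof is essentially bookkeeping, so the only mild obstacle is making sure the hypotheses line up cleanly: specifically, that \autoref{thm:free-wf-full-weak-bisimulation} is invoked in its full-generality reading (free-choice, not free-choice workflow) and that \autoref{lem:free-correctness} — rather than \autoref{lem:free-wf-full-correctness}, which only yields a CCS net under the stronger workflow hypothesis — is the one supplying the input-type guarantee for \autoref{alg:free}. With those citations in place, the statement follows. Concretely, the proof reads: \emph{By \autoref{lem:free-correctness}, \autoref{thm:free-wf-full-weak-bisimulation}+\ref{thm:free-strong-bisimulation}, and transitivity of (strong and weak) bisimulation~\cite{lts-book}; the divergence biconditional is preserved along the strong bisimulation of \autoref{thm:free-strong-bisimulation} and composed with that of \autoref{thm:free-wf-full-weak-bisimulation}.}

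\begin{proof}
    By \autoref{lem:free-correctness}, applying \autoref{alg:free-wf-full} to $N$ and $M_0$ yields a 2-$\tau$-synchronisation net $N'$ and marking $M_0'$, to which \autoref{alg:free} can be applied to obtain $(Q, \mathcal{D})$. By \autoref{thm:free-wf-full-weak-bisimulation} (which applies to any free-choice net) we have $LTS(N, M_0) \wbisimilar LTS(N', M_0')$, with $LTS(N, M_0)$ having a divergent path iff $LTS(N', M_0')$ has one; by \autoref{thm:free-strong-bisimulation} we have $LTS(N', M_0') \bisimilar LTS(Q, \mathcal{D})$. Since strong bisimilarity implies weak bisimilarity and both are transitive~\cite{lts-book}, we get $LTS(N, M_0) \wbisimilar LTS(Q, \mathcal{D})$; moreover, a strong bisimulation preserves $\tau$-steps, so $LTS(N', M_0')$ has a divergent path iff $LTS(Q, \mathcal{D})$ has one, and composing with the previous biconditional gives that $LTS(N, M_0)$ has a divergent path iff $LTS(Q, \mathcal{D})$ has one. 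For the bounds: by \autoref{thm:free-wf-full-weak-bisimulation}, the transformation time and $|N'|$ are $\O(|N|)$, and since \autoref{alg:free-wf-stepwise} only adds tokenless places, $\sum_{p \in P'} M_0'(p) = \sum_{p \in P} M_0(p)$; feeding this into the $\O(|N'| + \sum_{p \in P'} M_0'(p))$ bound of \autoref{thm:free-strong-bisimulation} yields $\O(|N| + \sum_{p \in P} M_0(p))$ overall.
    \qed
\end{proof}
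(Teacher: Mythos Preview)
Your proposal is correct and follows essentially the same approach as the paper, which proves the theorem by the one-liner ``By \autoref{lem:free-wf-stepwise-invariant}+\ref{lem:free-correctness}, \autoref{thm:free-wf-full-weak-bisimulation}+\ref{thm:free-strong-bisimulation} and transitivity of bisimulation.'' You are in fact more explicit than the paper about why the divergence biconditional and the complexity bounds carry through the composition; the only citation you omit is \autoref{lem:free-wf-stepwise-invariant}, but that is already a dependency of \autoref{lem:free-correctness}, so nothing is missing.
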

\begin{proof}
    By \autoref{lem:free-wf-stepwise-invariant}+\ref{lem:free-correctness}, \autoref{thm:free-wf-full-weak-bisimulation}+\ref{thm:free-strong-bisimulation} and transitivity of bisimulation~\cite{lts-book}.
    \qed
\end{proof}

\subsection{Group-Choice Nets}\label{sec:group}

We now further extend our CCS encodability results. Free-choice nets (\autoref{def:free-choice-net}) only allow for choices taken by a \emph{single} place; we relax this requirement by defining a larger class of nets called \emph{group-choice nets} (\autoref{def:group-choice-net}, below), where choices can be taken by a \emph{group} of places which synchronise before making the choice. This requires that all places in the group \emph{agree} on their options --- which corresponds to all places in the group having the same postset (i.e., edges to the same transitions). No other places outside the group may have edges to those transitions (as that would affect the choice).

\begin{definition}[Postset of a place]\label{def:transition-set}
    Let $(P, T, F, A, \sigma)$ be a Petri net. The \emph{postset} of a place $p \in P$ (written ${p\bullet}$) is the set of all transitions with an ingoing edge from $p$, i.e.: ${p\bullet} \mathrel{\coloneq} \{t \,|\, p \in P, (p, t) \in F\}$.
\end{definition}

\begin{definition}[Group-choice net]\label{def:group-choice-net}
    A Petri net $N$ is a \emph{group-choice net} iff all pairs of places either have a same postset, or disjoint postsets.
\end{definition}

\begin{figure}[!b]
    \centering
    \begin{minipage}[b][][b]{0.28\textwidth}
        \centering
        \includegraphics[scale=0.30]{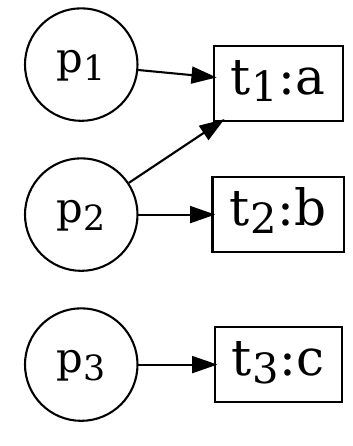}
        \caption{Petri net with $p_1\bullet = \{t_1\}$, $p_2\bullet = \{t_1, t_2\}$ and $p_3\bullet = \{t_3\}$.}
        \label{fig:group-not}
    \end{minipage}\hfill%
    \begin{minipage}[b][][b]{0.28\textwidth}
        \centering
        \includegraphics[scale=0.30]{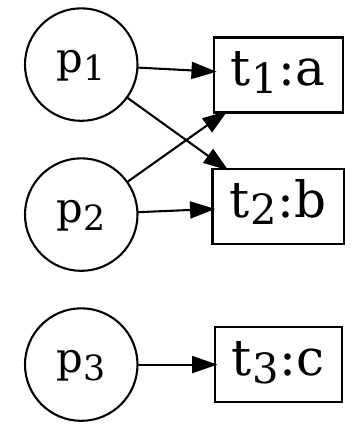}
        \caption{Group-choice \linebreak[4] net with $p_1\bullet = p_2\bullet = \{t_1, t_2\}$ and $p_3\bullet = \{t_3\}$.}
        \label{fig:group-is}
    \end{minipage}\hfill%
    \begin{minipage}[b][][b]{0.38\textwidth}
        \centering
        \includegraphics[scale=0.30]{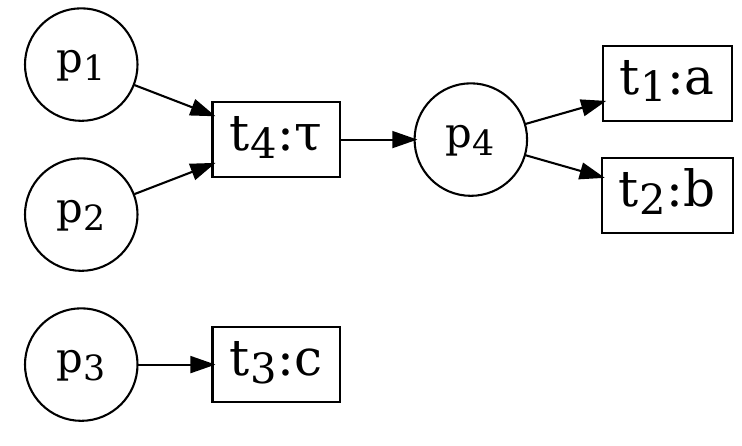}
        \caption{Result of \autoref{alg:group-stepwise} on \autoref{fig:group-is}. $p_1\bullet = p_2\bullet = \{t_4\}$, $p_3\bullet \mathbin{=} \{t_3\}$ and $p_4\bullet \mathbin{=} \{t_1, t_2\}$.}
        \label{fig:group-transformed}
    \end{minipage}
\end{figure}

A group-choice net does not have partially overlapping place postsets. Therefore, its places can be split into \emph{disjoint groups} where all places in a group have the same postset. \autoref{fig:group-not}--\ref{fig:group-transformed} show examples of Petri nets and postsets for each place. \autoref{fig:group-not} is not a group-choice net because ${p_1\bullet}$ and ${p_2\bullet}$ are only partially overlapping (since $p_2$ has an edge to $t_2$ while $p_1$ does not). \autoref{fig:group-is} is a group-choice net, hence its places can be split into two disjoint groups $\{p_1, p_2\}$ and $\{p_3\}$. \autoref{fig:group-transformed} is a group-choice net with three groups of places.

A place postset of size at least two (like $p_4\bullet$ in \autoref{fig:group-transformed}) corresponds to a choice because the place can choose between at least two transitions. When at least two places have the same postset (like $p_1\bullet$ and $p_2\bullet$ in \autoref{fig:group-transformed}), it corresponds to synchronisation because the transitions in the postset have at least two ingoing edges. The definition of group-choice nets ensures that choice and synchronisation can only happen at the same time if the group of places can safely synchronise before making a choice together. A free-choice net further require that every place postset must not correspond to both choice and synchronisation.

\begin{algorithm}[t]
    \DontPrintSemicolon
    \SetKwInOut{Input}{Input}
    \SetKwInOut{Output}{Output}
    \SetKw{Where}{where}

    \Input{Petri net $N = (P, T, F, A, \sigma)$, marking $M_0: P \to \mathbb{N}$ and two selected places $p^*, p^{**} \in P$ where $p^*\bullet = p^{**}\bullet \neq \emptyset$}
    \Output{Petri net $(P', T', F', A', \sigma')$ and marking $M_0': P' \to \mathbb{N}$}

    $P' \gets P \cup \{p^+\}$ \Where $p^+ \notin P \cup T$\;
    $T' \gets T \cup \{t^+\}$ \Where $t^+ \notin P' \cup T$\;
    $F' \gets \left(
      \begin{array}{@{}l@{}}
        \Big(F \setminus \big(\{(p^*, t) \,|\, t \in p^*\bullet\} \cup \{(p^{**}, t) \,|\, t \in p^{**}\bullet\}\big)\Big)
        \\
        \quad\phantom{X} \;\cup\; \{(p^*, t^+), (p^{**}, t^+), (t^+, p^+)\} \;\cup\; \{(p^+, t) \,|\, t \in p^*\bullet\}
      \end{array}\right)$\;
    $N' \gets (P', T', F', A \cup \{\tau\}, \sigma[t^+ \mapsto \tau])$\;
    \Return{$(N', M_0[p^+ \mapsto 0])$}\;

    \caption{Group-choice net transition preset reduction}
    \label{alg:group-stepwise}
\end{algorithm}

\begin{algorithm}[b]
    \DontPrintSemicolon
    \SetKwInOut{Input}{Input}
    \SetKwInOut{Output}{Output}
    \SetKw{Select}{select}
    \SetKw{Where}{where}
    \SetKw{And}{and}
    \SetKw{Or}{or}

    \Input{Group-choice net $N = (P, T, F, A, \sigma)$ and marking $M_0: P \to \mathbb{N}$}
    \Output{Petri net $(P', T', F', A', \sigma')$ and marking $M_0': P' \to \mathbb{N}$}

    $(N', M_0') \gets (N, M_0)$\;
    \While{$\exists t^* \in T'$ \Where $|\{p \,|\, (p, t^*) \in F\}| > (\text{\leIf*{$\sigma(t^*) = \tau$}{$2$}{$1$}})$}{
        \Select $(p_1, t^*), (p_2, t^*) \in F$ \Where $p_1 \neq p_2$\;
        $(N', M_0') \gets \text{\autoref{alg:group-stepwise}}(N', M_0', p_1, p_2)$\;
    }
    \Return{$(N', M_0')$}\;

    \caption{Iterative group-choice net transition preset reduction}
    \label{alg:group-full}
\end{algorithm}

The transformation of group-choice nets into 2-$\tau$-synchronisation nets is intuitively similar to the transformation of free-choice nets (in \autoref{sec:free-wf}). The key difference is that, since places in group-choice nets can have more than one outgoing edge when they synchronise on a transition, we consider whole place postsets instead of a single transition. \autoref{alg:group-stepwise} does this by selecting two places $p^*$ and $p^{**}$ with the same (non-empty) postset: It removes all their outgoing edges (line 3) and connects them to a new $\tau$-transition $t^+$ (defined on line 2) with an edge to a new place $p^+$ (defined on line 1) that has outgoing edges to the same postset $p^*$ and $p^{**}$ originally had (lines 3--5). This reduces the number of ingoing edges (i.e., the \emph{preset}) of \emph{all} transitions in $p^*\bullet$ by one. \autoref{fig:group-transformed} shows the result of one application of \autoref{alg:group-stepwise} on \autoref{fig:group-is} with $p^* = p_1$ and $p^{**} = p_2$.

\autoref{alg:group-full} keeps applying \autoref{alg:group-stepwise} on the input group-choice net (in the same fashion as \autoref{alg:free-wf-full}) by non-deterministically picking $p_1$ and $p_2$, until a 2-$\tau$-synchronisation net is obtained. All the proofs for the encoding of group-choice nets are similar to the ones for free-choice nets: The only difference is that the proofs for group-choice nets consider \emph{sets} of impacted transitions instead of single transitions. Therefore, we only present the main result below.

\begin{theorem}[Correctness of group-choice net to CCS encoding]\label{thm:group}
    A group-choice net $N = (P, T, F, A, \sigma)$ and an initial marking $M_0: P \to \mathbb{N}$ can be encoded into a weakly bisimilar CCS process $Q$ with defining equations $\mathcal{D}$ s.t. $LTS(N, M_0)$ has a divergent path iff $LTS(Q, \mathcal{D})$ has. The encoding time and the size of $(Q, \mathcal{D})$ are $\O(|N| + \sum_{p \in P} M_0(p))$. \emph{(Proofs in \appendixref{sec:appendix-group-sec}.)}
\end{theorem}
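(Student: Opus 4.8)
The plan is to follow the same two-stage strategy already used for free-choice nets (cf.~\autoref{thm:free}), lifting each ingredient from ``single transition'' to ``whole place postset''. First I would prove a stepwise correctness result analogous to \autoref{thm:free-wf-stepwise-weak-bisimulation}: given a group-choice net $N$, a marking $M_0$, and two places $p^*,p^{**}$ with $p^*\bullet = p^{**}\bullet \neq \emptyset$, one application of \autoref{alg:group-stepwise} yields $(N',M_0')$ with $LTS(N,M_0)\wbisimilar LTS(N',M_0')$, and $LTS(N',M_0')$ has a divergent path iff $LTS(N,M_0)$ does. The witnessing relation is the obvious generalisation of the one in \autoref{thm:free-wf-stepwise-weak-bisimulation}, namely
\begin{align*}
    \mathcal{R} \;\coloneqq\; \left\{\left(M,\ M[p^*\mapsto M(p^*)-i][p^{**}\mapsto M(p^{**})-i][p^+\mapsto i]\right)\ \middle|\ \begin{smallmatrix} M\in LTS(N,M_0)\\ 0\le i\le\min(M(p^*),M(p^{**}))\end{smallmatrix}\right\},
\end{align*}
where $i$ counts how many times $t^+$ has fired without a subsequent firing of some $t\in p^*\bullet$. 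The key observation that makes the matching work is that, because $p^*\bullet = p^{**}\bullet$, firing any transition $t$ in that common postset in $N$ consumes exactly one token from each of $p^*$ and $p^{**}$; in $N'$ the same effect is obtained either by firing $t$ directly when $p^+$ already holds a token ($i>0$), or by firing $t^+$ (a $\tau$-step, $i\!=\!0\mapsto i\!=\!1$) and then $t$. All transitions outside $p^*\bullet\cup\{t^+\}$ are untouched and are matched identically. The divergence claim follows as before: \autoref{alg:group-stepwise} only inserts one $\tau$-transition on the path into each $t\in p^*\bullet$, so it neither creates nor destroys infinite $\tau$-paths.

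Second I would prove the invariant that \autoref{alg:group-stepwise} preserves the group-choice property and does not add or change transitions with no ingoing edges (the analogue of \autoref{lem:free-wf-stepwise-invariant}). After the transformation, $p^*\bullet = p^{**}\bullet = \{t^+\}$, the new place $p^+$ has the old common postset, and $p^+$'s postset is disjoint from every pre-existing place's postset (those transitions had preset containing $p^*$, which is no longer the case for any old place) — so the ``same-or-disjoint postsets'' condition is maintained for all pairs; the new transition $t^+$ has two ingoing edges but label $\tau$. Then I would argue termination and the output class of \autoref{alg:group-full}: each application of \autoref{alg:group-stepwise} strictly decreases $\sum_{t\in T'}\max(0,|{\bullet t}|-(\text{2 if }\sigma(t)=\tau\text{ else }1))$ by at least one (it removes one ingoing edge from every non-$\tau$-labelled $t$ in the oversized postset and adds a single new $\tau$-transition with exactly two ingoing edges), so the loop terminates, and on exit every transition has at most one ingoing edge, or two when $\tau$-labelled — i.e.\ a 2-$\tau$-synchronisation net (\autoref{def:2-tau-synchronisation-net}). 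Combining the stepwise result by induction on the number of iterations (as in \autoref{thm:free-wf-full-weak-bisimulation}) gives $LTS(N,M_0)\wbisimilar LTS(N',M_0')$ with divergence preserved, and then composing with \autoref{alg:free} and using \autoref{thm:free-strong-bisimulation} plus transitivity of (weak) bisimulation yields the weakly bisimilar CCS process $Q,\mathcal{D}$ with the stated divergence equivalence.

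For the complexity bound: \autoref{alg:group-stepwise} runs in amortised $\O(1)$ (touching the edges of one postset, amortised against the edges it permanently removes), it is applied at most $\O(|F|)=\O(|N|)$ times, and each application increases the net size by a constant, so \autoref{alg:group-full} runs in $\O(|N|)$ and outputs an $\O(|N|)$-sized net; feeding this into \autoref{alg:free} adds the $\sum_{p\in P}M_0(p)$ term for the initial marking, giving the claimed $\O(|N|+\sum_{p\in P}M_0(p))$ overall. The main obstacle I anticipate is the bisimulation bookkeeping in the stepwise lemma when the common postset has size $\geq 2$ \emph{and} the net is not safe: several distinct transitions in $p^*\bullet$ compete for the merged token in $p^+$, and one must check that every choice available from a marking of $N$ (choosing which $t\in p^*\bullet$ to fire, possibly at several tokens concurrently) is still exactly available from the related marking of $N'$ and vice versa, with no spurious deadlock introduced when $M(p^*)\neq M(p^{**})$. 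This is precisely where the group-choice hypothesis (equal postsets, so tokens are consumed in lockstep from $p^*$ and $p^{**}$) is essential, and it is the place where the argument genuinely differs from — and is slightly more delicate than — the free-choice case; everything else is a routine transcription of the free-choice development with ``transition $t^*$'' replaced by ``postset $p^*\bullet$''.
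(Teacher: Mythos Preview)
Your approach is essentially identical to the paper's: the same bisimulation relation $\mathcal{R}$ for the stepwise lemma, the same invariant and termination arguments for \autoref{alg:group-full}, and the same final composition with \autoref{alg:free} via \autoref{thm:free-strong-bisimulation} and transitivity.

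One small slip in your invariant argument: you assert that $p^+$'s postset is \emph{disjoint} from every pre-existing place's postset. This fails when the original group $\{p\in P : p\bullet = p^*\bullet\}$ has three or more members --- any such third place $p_3$ keeps all its outgoing edges under \autoref{alg:group-stepwise}, so in $N'$ we have $p_3\bullet = p^+\bullet$ (equal, not disjoint). The group-choice property is still preserved, since ``equal'' is the other allowed case in \autoref{def:group-choice-net}; only your justification for that sub-case needs adjusting. The paper handles this by observing that $p^+$ simply inherits the old $p^*\bullet$, which was already same-or-disjoint with every other postset.
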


\section{Conclusion and Future Work}\label{sec:conclusion-future}

In this paper we have formalised a direct encoding of CCS nets into strongly bisimilar CCS processes (akin to~\cite{multi-ccs}), and on this foundation we have developed novel Petri net-to-CCS encodings and results, through a series of generalizations. We have introduced an encoding of \emph{free-choice workflow nets} into weakly bisimilar CCS processes. We have generalized this result as an encoding from \emph{any} free-choice net into a weakly bisimilar CCS processes, via a new superclass of CCS nets called \emph{2-$\tau$-synchronisation nets} (which may have token-generating transitions). Finally, we have presented a superclass of free-choice nets called \emph{group-choice nets}, and its encoding into weakly bisimilar CCS processes.

On the practical side, we are now exploring the practical application of these results --- e.g.~using model checkers oriented towards process calculi (such as mCRL2~\cite{DBLP:conf/tacas/BunteGKLNVWWW19}) to analyse the properties of $\alpha$-mined Petri nets encoded into CCS.
On the theoretical side, we are studying how to encode even larger classes of Petri nets into weakly bisimilar CCS processes. One of the problems is how to handle partially-overlapping place postsets: in this paper, we focused on classes of Petri nets without such overlaps, and 2-$\tau$ synchronisation nets. Another avenue we are exploring is to develop further encodings after dividing a Petri net into groups of places/transitions, as we did for group-choice nets.

\subsubsection*{Acknowledgements.}
We wish to thank to the anonymous reviewers for their comments and suggestions, and Ekkart Kindler for the fruitful discussions.

\bibliographystyle{splncs04}
\bibliography{paper}

\clearpage
\appendix
\section{Extended Preliminaries}\label{sec:appendix-definitions}

\begin{definition}[Weak closure of $\to$~{\cite[Definition 2.16]{lts-book}}]\label{def:appendix-weak-closure}
    For $TS = (Q, A, \to)$, define the relation $\implies \subseteq Q \times A^{\prime*} \times Q$ as the least closure of $\to$, where $A^{\prime*}$ is a trace of visible actions from $A^\prime \mathrel{\coloneqq} A \setminus \{\tau\}$, induced by ($\epsilon$ is the empty trace):
    \begin{align*}
        \frac{q_1 \overset{a}{\longrightarrow} q_2}{q_1 \overset{a}{\implies} q_2} && \frac{q_1 \overset{\tau}{\longrightarrow} q_2}{q_1 \overset{\epsilon}{\implies} q_2} && \frac{}{q \overset{\epsilon}{\implies} q} && \frac{q_1 \overset{\sigma_1}{\implies} q_2 \quad q_2 \overset{\sigma_2}{\implies} q_3}{q_1 \overset{\sigma_1 \sigma_2}{\implies} q_3}
    \end{align*}
\end{definition}

\begin{definition}[Finite-net CCS~{\cite[Section 3.4]{lts-book}}]\label{def:appendix-finite-net-ccs}
    A CCS process $Q$ and its partial mapping of defining equations $\mathcal{D}$ is \emph{finite-net CCS} when no definitions in $\mathcal{D}$ contains restrictions.
\end{definition}

\begin{restate}{definition}{def:free-choice-net}
\begin{definition}[Free choice net~{\cite{free-choice-nets}} - extended]\label{def:appendix-free-choice-net}
    A Petri net $(P, T, F, A, \sigma)$ is a \emph{free-choice net} iff it satisfies the following two properties:
    \begin{itemize}
        \lessspace
        \item \textbf{Unique choice}: All places $p \in P$ with more than one outgoing edge only have edges to transitions with exactly one ingoing edge (edge from $p$):
        \begin{equation}\label{eq:appendix-free-wf-unique-choice}
            \forall p. \forall t. p \in P \land (p, t) \in F \land |\{t^\prime | (p, t^\prime) \in F\}| > 1 \implies |\{p^\prime | (p^\prime, t) \in F\}| = 1
        \end{equation}
        \item \textbf{Unique synchronisation}: All transitions $t \in T$ with more than one ingoing edge only have edges from places with exactly one outgoing edge (edge to $t$):
        \begin{equation}\label{eq:appendix-free-wf-unique-sync}
            \forall p. \forall t. t \in T \land (p, t) \in F \land |\{p^\prime | (p^\prime, t) \in F\}| > 1 \implies |\{t^\prime | (p, t^\prime) \in F\}| = 1
        \end{equation}
        \lessspace
    \end{itemize}
\end{definition}
\label{sec:appendix-free-wf-choice-net}
\end{restate}

\begin{lemma}\label{lem:appendix-free-wf-choice-implies-sync}
    Let $N = (P, T, F, A, \sigma)$ be a Petri Net. If $N$ satisfies \emph{unique choice} in \autoref{def:free-choice-net} then it also satisfies \emph{unique synchronisation} in \autoref{def:free-choice-net}.
\end{lemma}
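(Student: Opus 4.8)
The plan is to obtain \emph{unique synchronisation} directly from \emph{unique choice} by a short contrapositive argument. The key observation is that whenever $(p,t) \in F$, the set $\{t' \mid (p,t') \in F\}$ contains $t$ and the set $\{p' \mid (p',t) \in F\}$ contains $p$, so both of these sets are non-empty; hence each of their cardinalities is either exactly $1$ or strictly greater than $1$, and ruling out the case ``$> 1$'' is the same as establishing ``$= 1$''.

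Concretely, I would fix an arbitrary place $p \in P$ and transition $t \in T$ with $(p,t) \in F$ and $|\{p' \mid (p',t) \in F\}| > 1$, and aim to show $|\{t' \mid (p,t') \in F\}| = 1$. Since $t \in \{t' \mid (p,t') \in F\}$, this cardinality is at least $1$, so it only remains to exclude the case $|\{t' \mid (p,t') \in F\}| > 1$. But if that held, then the pair $p,t$ would satisfy the premise of \emph{unique choice} from \autoref{def:free-choice-net} (namely $(p,t) \in F$ together with $p$ having more than one outgoing edge), which would force $|\{p' \mid (p',t) \in F\}| = 1$ --- contradicting the assumption $|\{p' \mid (p',t) \in F\}| > 1$. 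Therefore $|\{t' \mid (p,t') \in F\}| = 1$, i.e.\ the transition $t$ receives its edge from a place with exactly one outgoing edge. Since $p$ and $t$ were arbitrary, this is precisely \emph{unique synchronisation}.

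I do not anticipate any genuine obstacle here: the claim is essentially the logical duality between the two clauses defining free-choice nets, and the only subtlety is the remark above that uses $(p,t) \in F$ to turn ``$\neq 1$'' into ``$> 1$''. (Running the same argument with the roles of places and transitions swapped would give the converse implication as well, so the two properties are in fact equivalent; but only the stated direction is needed, e.g.\ to streamline the free-choice checks underlying the transformations in \autoref{sec:free-wf}.)
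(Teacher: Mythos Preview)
Your proposal is correct and follows essentially the same approach as the paper: fix $(p,t)\in F$ with $|\{p' \mid (p',t)\in F\}|>1$, observe that $t$ already witnesses $|\{t' \mid (p,t')\in F\}|\geq 1$, and rule out ``$>1$'' because it would trigger unique choice and force $|\{p' \mid (p',t)\in F\}|=1$, a contradiction. The paper phrases the contradiction via an explicit second transition $t'\neq t$ rather than via cardinalities, but the argument is the same.
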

\begin{proof}
    Assume that unique choice \eqref{eq:appendix-free-wf-unique-choice} holds:
    \begin{equation*}
        \forall p. \forall t. p \in P \land (p, t) \in F \land |\{t^\prime | (p, t^\prime) \in F\}| > 1 \implies |\{p^\prime | (p^\prime, t) \in F\}| = 1
    \end{equation*}
    Consider an arbitrary transition $t$ with more than one ingoing edge ($|\{p^\prime | (p^\prime, t) \in F\}| > 1$). Consider an arbitrary place $p$ with an outgoing edge to $t$ i.e. $p \in \{p^\prime | (p^\prime, t) \in F\}$. The place $p$ has at least one outgoing edge namely the one to $t$. Assume that $p$ has an edge to another transition $t^\prime$ ($t^\prime \neq t$). Then $p$ has two outgoing edges such that $|\{t^\prime | (p, t^\prime) \in F\}| > 1$ which implies $|\{p^\prime | (p^\prime, t) \in F\}| = 1$ according to the initial assumption. However, this is a contradiction since $t$ was assumed to have more than one ingoing edge ($|\{p^\prime | (p^\prime, t) \in F\}| > 1$). Hence, every place $p$ with an edge to $t$ must have exactly one outgoing edge ($|\{t^\prime | (p, t^\prime) \in F\}| = 1$), namely the one to $t$. Hence, this shows $|\{p^\prime | (p^\prime, t) \in F\}| > 1 \implies |\{t^\prime | (p, t^\prime) \in F\}| = 1$. Since $t$ and $p$ was chosen arbitrary, this is true for all $(p, t) \in F$.
    \qed
\end{proof}

\begin{lemma}\label{lem:appendix-free-wf-sync-implies-choice}
    Let $N = (P, T, F, A, \sigma)$ be a Petri Net. If $N$ satisfies \emph{unique synchronisation} in \autoref{def:free-choice-net} then it also satisfies \emph{unique choice} in \autoref{def:free-choice-net}.
\end{lemma}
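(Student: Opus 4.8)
The plan is to mirror, with places and transitions swapped, the argument already carried out for \autoref{lem:appendix-free-wf-choice-implies-sync}. So I would start by assuming \emph{unique synchronisation} \eqref{eq:appendix-free-wf-unique-sync} holds, namely that for every $(p,t)\in F$, if $t$ has more than one ingoing edge then $p$ has exactly one outgoing edge.

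Next I would fix an arbitrary place $p$ with more than one outgoing edge, i.e.\ $|\{t' \mid (p,t')\in F\}| > 1$, and an arbitrary transition $t$ with $(p,t)\in F$; the goal is to show $|\{p' \mid (p',t)\in F\}| = 1$. I would argue by contradiction: $t$ certainly has the ingoing edge from $p$, so suppose it has a further ingoing edge from some place $p' \neq p$. Then $t$ has more than one ingoing edge, so \emph{unique synchronisation} applies to the pair $(p,t)$ and yields $|\{t' \mid (p,t')\in F\}| = 1$ — contradicting the assumption that $p$ has more than one outgoing edge. Hence $t$ has exactly one ingoing edge, the one from $p$.

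Finally, since $p$ and $t$ were chosen arbitrarily subject only to $(p,t)\in F$ and $p$ having more than one outgoing edge, this establishes \[ \forall p.\forall t.\ p\in P \land (p,t)\in F \land |\{t' \mid (p,t')\in F\}| > 1 \implies |\{p' \mid (p',t)\in F\}| = 1, \] which is exactly \emph{unique choice} \eqref{eq:appendix-free-wf-unique-choice}. I do not expect any real obstacle here: the only thing to be careful about is consistently swapping the roles of places/transitions and ingoing/outgoing edges relative to \autoref{lem:appendix-free-wf-choice-implies-sync}, so that the contradiction is drawn against the right hypothesis. (Together with \autoref{lem:appendix-free-wf-choice-implies-sync}, this shows the two conditions in \autoref{def:free-choice-net} are in fact equivalent.)
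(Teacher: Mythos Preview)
Your proposal is correct and follows essentially the same approach as the paper's own proof: both assume unique synchronisation, fix an arbitrary place $p$ with more than one outgoing edge and a transition $t$ with $(p,t)\in F$, and derive a contradiction from the assumption that $t$ has a second ingoing edge. The argument is indeed the exact dual of \autoref{lem:appendix-free-wf-choice-implies-sync}, as you anticipated.
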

\begin{proof}
    Assume that unique synchronisation \eqref{eq:appendix-free-wf-unique-sync} holds:
    \begin{equation*}
        \forall p. \forall t. p \in P \land (p, t) \in F \land |\{t^\prime | (p, t^\prime) \in F\}| > 1 \implies |\{p^\prime | (p^\prime, t) \in F\}| = 1
    \end{equation*}
    Consider an arbitrary place $p$ with more than one outgoing edge ($|\{t^\prime | (p, t^\prime) \in F\}| > 1$). Consider an arbitrary transition $t$ with an ingoing edge from $t$ i.e. $t \in \{t^\prime | (p, t^\prime) \in F\}$. The transition $t$ has at least one ingoing edge namely the one from $p$. Assume that $t$ has an edge from another place $p^\prime$ ($p^\prime \neq p$). Then $t$ has two ingoing edges such that $|\{p^\prime | (p^\prime, t) \in F\}| > 1$ which implies $|\{t^\prime | (p, t^\prime) \in F\}| = 1$ according to the initial assumption. However, this is a contradiction since $p$ was assumed to have more than one outgoing edge ($|\{t^\prime | (redp, t^\prime) \in F\}| > 1$). Hence, every transition $t$ with an edge from $p$ must have exactly one ingoing edge ($|\{p^\prime | (p^\prime, t) \in F\}| = 1$), namely the one from $p$. Hence, this shows $|\{t^\prime | (p, t^\prime) \in F\}| > 1 \implies |\{p^\prime | (p^\prime, t) \in F\}| = 1$. Since $p$ and $t$ was chosen arbitrary, this is true for all $(p, t) \in F$.
    \qed
\end{proof}

\begin{theorem}\label{lem:appendix-free-wf-choice-iff-sync}
    Let $N = (P, T, F, A, \sigma)$ be a Petri Net. $N$ satisfies \emph{unique choice} in \autoref{def:free-choice-net} iff it satisfies \emph{unique synchronisation} in \autoref{def:free-choice-net}.
\end{theorem}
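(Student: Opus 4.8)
The statement is a biconditional, and both implications have already been established: \autoref{lem:appendix-free-wf-choice-implies-sync} shows that \emph{unique choice} implies \emph{unique synchronisation}, and \autoref{lem:appendix-free-wf-sync-implies-choice} shows the converse. So the plan is simply to invoke these two lemmas and combine them. No new argument is needed; this is a corollary packaging the two directions into one clean equivalence for later reference.

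Concretely, I would argue as follows. Assume $N$ satisfies \emph{unique choice}; then by \autoref{lem:appendix-free-wf-choice-implies-sync} it satisfies \emph{unique synchronisation}. Conversely, assume $N$ satisfies \emph{unique synchronisation}; then by \autoref{lem:appendix-free-wf-sync-implies-choice} it satisfies \emph{unique choice}. Since both implications hold, the two properties are equivalent for any Petri net $N$.

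If one wanted a self-contained proof instead of citing the lemmas, the core idea (already worked out in those lemmas) is a symmetric contradiction argument: pick a transition $t$ with $|\{p' \mid (p', t) \in F\}| > 1$ and a place $p$ feeding it; if $p$ also fed some other transition $t' \neq t$, then $p$ would have $>1$ outgoing edges, and \emph{unique choice} would force $|\{p' \mid (p', t) \in F\}| = 1$, contradicting the assumption on $t$ — hence every in-neighbour place of a multi-input transition has a unique outgoing edge, which is exactly \emph{unique synchronisation}. The reverse direction dualises places and transitions verbatim. Because the edge relation $F \subseteq (P \times T) \cup (T \times P)$ and the two properties are literally mirror images under swapping the roles of "outgoing edges of a place" and "ingoing edges of a transition", there is genuinely no asymmetry to overcome.

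There is no real obstacle here: the only thing to be careful about is not to confuse the two quantified conditions \eqref{eq:appendix-free-wf-unique-choice} and \eqref{eq:appendix-free-wf-unique-sync} when citing them, and to make sure both lemma references point the right way. I would therefore keep the proof to two sentences, one per direction, each a direct appeal to the corresponding lemma.
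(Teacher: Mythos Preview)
Your proposal is correct and matches the paper's own proof exactly: the paper simply states that the result follows from \autoref{lem:appendix-free-wf-choice-implies-sync} and \autoref{lem:appendix-free-wf-sync-implies-choice}. Your additional self-contained sketch is fine but unnecessary, since it just recapitulates those two lemmas.
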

\begin{proof}
    Follows from \autoref{lem:appendix-free-wf-choice-implies-sync} and \autoref{lem:appendix-free-wf-sync-implies-choice}.
    \qed
\end{proof}

\section{Proofs for Encoding CCS Nets into CCS Processes}\label{sec:appendix-ccs-net}

\begin{lemma}[\autoref{alg:ccs-net} output]\label{lem:appendix-ccs-net-finite-net-ccs}
   If applied to a CCS net $(P, T, F, A, \sigma)$, \autoref{alg:ccs-net} returns valid finite-net CCS.
\end{lemma}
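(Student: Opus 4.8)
\emph{Proof plan.} The statement bundles two claims: (i) \autoref{alg:ccs-net} returns a syntactically well-formed pair $(Q,\mathcal{D})$ --- i.e.~$Q$ and every right-hand side of $\mathcal{D}$ conform to the grammar of \autoref{def:ccs-syntax}, and every process name occurring in $Q$ or in $\mathcal{D}$ belongs to $\mathrm{dom}(\mathcal{D})$ --- and (ii) no right-hand side of $\mathcal{D}$ contains a restriction, which by \autoref{def:appendix-finite-net-ccs} is exactly what makes $(Q,\mathcal{D})$ finite-net CCS. I would prove each claim by a straightforward inspection of the (short) pseudocode, the only nontrivial ingredient being the CCS-net hypothesis of \autoref{def:ccs-net}.

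For well-formedness I would first note that the loop on lines 2--4 sets $\mathrm{dom}(\mathcal{D}) = \{X_p \mid p\in P\}$, and that no later line changes the domain; after line 4 each $\mathcal{D}(X_p)$ is the (possibly empty, i.e.~$\mathbf{0}$) sum of placeholders $Y_t$, one per transition in the postset $p\bullet$. The key step is to observe that every placeholder introduced on line 3 is eventually substituted away: $Y_t$ occurs in $\mathcal{D}(X_p)$ iff $(p,t)\in F$, i.e.~iff $t$ has at least one ingoing edge, and by \autoref{def:ccs-net} such a $t$ has either exactly one ingoing edge --- handled by the loop on lines 5--7, whose substitution on line 6 rewrites $t$'s unique placeholder occurrence into $\sigma(t).(X_{p_1}\mathbin{|}\dots\mathbin{|} X_{p_k})$ --- or exactly two ingoing edges with $\sigma(t)=\tau$ --- handled by the loop on lines 8--13, whose substitutions on lines 11--12 rewrite $t$'s two occurrences (in $\mathcal{D}(X_{p^*})$ and $\mathcal{D}(X_{p^{**}})$, with $p^*\neq p^{**}$) into $s_t.(X_{p_1}\mathbin{|}\dots\mathbin{|} X_{p_k})$ and $\overline{s_t}.\mathbf{0}$. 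In every case the replacement is a prefix $\mu.R$ with $\mu\in\{\sigma(t)\}\cup\{s_t,\overline{s_t}\}$ and $R$ a parallel composition of process names $X_{p_i}$ (taken as $\mathbf{0}$ when $t$ has no outgoing edge); hence each $\mathcal{D}(X_p)$ becomes a legal sum of sequential processes (or $\mathbf{0}$), i.e.~a legal sequential process $P$. Finally, every process name appearing in any right-hand side or in $Q$ is some $X_{p_i}$ with $p_i\in P$, hence in $\mathrm{dom}(\mathcal{D})$, and $Q$ of line 15 is a string of restrictions around a parallel composition of such names, hence a legal process.

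For the finite-net property I would simply trace where restrictions can appear: the $\nu$-binders are introduced only on line 15, in $Q$, and never in $\mathcal{D}$ --- line 3 writes only sums of placeholders into $\mathcal{D}$, while the substitutions on lines 6, 11 and 12 introduce only prefixes and parallel compositions of process names. Consequently every right-hand side of $\mathcal{D}$ is restriction-free, which is precisely the requirement of \autoref{def:appendix-finite-net-ccs}.

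I expect the only delicate point to be the exhaustiveness-of-substitution argument of the second paragraph: this is the single place where \autoref{def:ccs-net} is used, and one must check both that no transition with a predecessor place can have zero ingoing edges (so no placeholder is orphaned) and that none can have three or more (so the two substitution loops jointly cover all cases), and that when a transition has two ingoing edges exactly its two placeholder occurrences --- and no others --- get rewritten. Everything else (grammar conformance of the substituted prefixes, the $\mathbf{0}$ conventions for empty sums and parallel compositions, and the restriction-freeness of $\mathcal{D}$) is routine bookkeeping over the pseudocode.
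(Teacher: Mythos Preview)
Your proposal is correct and follows essentially the same approach as the paper: inspect the pseudocode to verify that every $\mathcal{D}(X_p)$ is a sum of sequential prefixes (hence a valid sequential process) and that restrictions occur only in $Q$, not in $\mathcal{D}$. The paper's own proof is considerably terser---it simply asserts that ``all substitutions are sequential processes'' and that ``there are no restrictions in $\mathcal{D}(X_p)$''---whereas you spell out the exhaustiveness-of-substitution argument and make explicit where \autoref{def:ccs-net} is used, which is a welcome addition of rigor rather than a different route.
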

\begin{proof}
    The output of \autoref{alg:ccs-net} is a CCS process $Q$ and a partial mapping of defining equations $\mathcal{D}$. For all $p \in P$, $\mathcal{D}(X_p)$ is the choice of substitutions for $Y_t$. All substitutions are sequential processes. Clearly, $Q$ is a valid process and thereby \autoref{alg:ccs-net} return valid CCS. There are no restrictions in $\mathcal{D}(X_p)$ for all $p \in P$, so it is also finite-net CCS.
    \qed
\end{proof}

\begin{restate}{theorem}{thm:ccs-net-strong-bisimulation}
\begin{theorem}[Correctness of \autoref{alg:ccs-net}]\label{thm:appendix-ccs-net-strong-bisimulation}
    Given a CCS net $N = (P, T, F, A, \sigma)$ and an initial marking $M_0$, let the result of applying \autoref{alg:ccs-net} to $N$ and $M_0$ be the CCS process $Q_0$ and defining equations $\mathcal{D}$. Then, $LTS(N, M_0) \bisimilar LTS(Q_0, \mathcal{D})$. The translation time and the size of $(Q_0, \mathcal{D})$ are bound by $\O(|N| + \sum_{p \in P} M_0(p))$. ($Q_0$ is used here instead of $Q$ to avoid confusions in the proof.)
\end{theorem}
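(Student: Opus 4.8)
\emph{Plan.} The strategy is the one announced in the sketch: exhibit a relation $\mathcal{R}$ between the markings reachable in $N$ and CCS processes, show it contains the pair of initial states, and verify it is a \emph{strong} bisimulation; the complexity bound is then direct bookkeeping of \autoref{alg:ccs-net}. Before anything else I would invoke \autoref{lem:appendix-ccs-net-finite-net-ccs} so that $Q_0$ and $\mathcal{D}$ are known to be a well-formed (finite-net) CCS process with defining equations, and $LTS(Q_0,\mathcal{D})$ is defined.

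For the relation I would take $\mathcal{R}$ to pair each marking $M$ reachable from $M_0$ with every process of the form $(\nu s_1)\cdots(\nu s_n)\bigl(X_{p_1}^{M(p_1)} \mathbin{|} \cdots \mathbin{|} X_{p_{|P|}}^{M(p_{|P|})}\bigr)$, where $s_1,\dots,s_n$ are the restricted names: one token in place $p$ corresponds to one parallel replica of the process named $X_p$. A wrinkle I would flag up front is that firing a transition with no outgoing edges (whose encoding is $\sigma(t).\mathbf{0}$), or the co-action side $\overline{s_t}.\mathbf{0}$ of a synchronisation, leaves inert $\mathbf{0}$ subterms in parallel; so I would work modulo the structural law $\mathbf{0}\mathbin{|}Q \equiv Q$ (itself a strong bisimulation), i.e.\ let $\mathcal{R}$ relate $M$ to any process $\equiv$-equal to the canonical one. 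With that convention $(M_0,Q_0)\in\mathcal{R}$ is immediate from line~15.

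The bisimulation check splits into the forward and backward directions, each a case analysis driven by \autoref{def:ccs-net}. \emph{Forward}: if $M \xrightarrow{\mu} M'$ then some $t$ with $\sigma(t)=\mu$ is enabled at $M$ and fires. If $t$ has its single ingoing edge from $p^*$, then $M(p^*)\ge 1$, so the related process contains a replica of $X_{p^*}$ whose body has the summand $\sigma(t).(X_{p_1}\mathbin{|}\cdots\mathbin{|}X_{p_k})$; picking that summand (rules \textsc{Cons}/\textsc{Sum}/\textsc{Pref}, lifted through \textsc{Par} and \textsc{Res}) yields a $\xrightarrow{\mu}$-step to a process with one fewer $X_{p^*}$ and one more $X_{p_i}$ for each $p_i$ with $(t,p_i)\in F$ --- exactly the encoding of $M'$. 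If $t$ has two ingoing edges from $p^*\neq p^{**}$ (so $\sigma(t)=\tau$), then $M(p^*),M(p^{**})\ge 1$, and the summands $s_t.(X_{p_1}\mathbin{|}\cdots)$ of $X_{p^*}$ and $\overline{s_t}.\mathbf{0}$ of $X_{p^{**}}$ synchronise (rule \textsc{Com}, legal under $(\nu s_t)$), producing a $\tau$-step to the encoding of $M'$ plus a residual $\mathbf{0}$ that is absorbed. \emph{Backward}: given $(\nu\vec{s})(\cdots)\xrightarrow{\mu}Q'$, I would decompose the derivation through \textsc{Res} and \textsc{Par}; only the $X_p$ replicas can move. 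An isolated $X_p$ can offer some $\sigma(t)$ for a one-ingoing $t$ with $(p,t)\in F$, or an $s_t$/$\overline{s_t}$ for a two-ingoing $t$ --- but the latter alone is blocked by the top-level $(\nu s_t)$, so it can only occur inside a \textsc{Com} with the unique complementary replica. Hence every step of the encoding is either a one-ingoing firing (requiring $M(p^*)\ge1$) or a synchronisation realising a two-ingoing firing (requiring $M(p^*),M(p^{**})\ge1$, with $p^*,p^{**}$ forced to be the two ingoing places by the algorithm), and in both cases $Q'$ is $\mathcal{R}$-related to the post-firing marking; moreover no visible $s_i$ escapes, so the two LTSs have the same visible labels.

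For the complexity claim I would note that the three \texttt{for}-loops of \autoref{alg:ccs-net} visit each edge of $F$ a constant number of times, so $\mathcal{D}$ has size $\O(|F|)\subseteq\O(|N|)$, while line~15 builds $\sum_{p\in P}M_0(p)$ parallel replicas under at most $|T|$ restrictions, giving total size and running time $\O(|N|+\sum_{p\in P}M_0(p))$. The main obstacle I expect is the backward direction: making fully rigorous that the restriction $(\nu s_1)\cdots(\nu s_n)$ forbids exactly the ``half-synchronisation'' moves, so that $\tau$-steps of the encoding biject with firings of two-ingoing ($\tau$-labelled) transitions, while simultaneously tracking the accumulating inert $\mathbf{0}$'s so that the target still lies in the $\equiv$-closure of $\mathcal{R}$; a secondary nuisance is that $\sigma(t)$ may itself be $\tau$ for a one-ingoing $t$, so ``$\tau$ from a prefix'' and ``$\tau$ from a synchronisation'' must both be handled and traced back to the correct, distinct net firings.
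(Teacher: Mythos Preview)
Your proposal is correct and follows essentially the same route as the paper: the same relation $\mathcal{R}$, the same forward/backward case split on one- versus two-ingoing transitions, and the same bookkeeping for the complexity bound. Your explicit handling of residual $\mathbf{0}$'s via structural congruence and your remark about $\tau$-prefixes versus $\tau$-synchronisations are refinements the paper leaves implicit in its $X_p^{M(p)}$ notation and case organisation, but the argument is the same.
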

\label{sec:appendix-ccs-net-strong-bisimulation}
\begin{proof}
    Markings are used as the states in the Petri net while the parallel composition of process names is used as the states in the CCS process. Define the bisimulation relation $\mathcal{R}$ as follows:
    \begin{align*}
        \mathcal{R} \mathrel{\;\coloneqq\;} \left\{\left(M,\ (\nu s_1)\dots(\nu s_n)\left({X_{p_1}^{M(p_1)}} \mathbin{|} \dots \mathbin{|} {X_{p_{|P|}}^{M(p_{|P|})}}\right)\right) \;\middle|\; M \in LTS(N, M_0)\right\}
    \end{align*}
    where $M$ is a reachable marking from the initial marking $M_0$ in $N$ (i.e.~$M$ is a state in $LTS(N, M_0)$), and $s_1, \dots, s_n$ are the restricted names in $Q$. The intuition is as that for every place $p$ containing $M(p)$ tokens, the CCS process contains $M(p)$ replicas of the process named $X_p$ (written $X_p^{M(p)}$). The relation can be seen as an invariant that should hold after each step. The restrictions are left out in the rest of the proof since they are static.

    \paragraph{Case 0}
    Clearly, the initial marking $M_0$ and the initial process $Q_0$ are in the relation $(M_0, Q_0) \in \mathcal{R}$ because $Q_0$ is defined using $M_0$.

    \paragraph{Case 1}
    Consider any pair $(M, Q) \in \mathcal{R}$ i.e. a marking $M$ and its related process $Q$. Consider an arbitrary enabled transition $t \in T$ in $N$ with respect to $M$. Define all places with an edge from $t$ in $N$ as $P_{out} \mathrel{\coloneqq} \left\{p \;\middle|\; (t, p) \in F\right\}$ and let $k \mathrel{\coloneqq} |P_{out}|$. Without loss of generality, the places in $P_{out}$ are assumed to be enumerated as the first $k$ of the $|P|$ places i.e. $\{p_1, p_2, \dots, p_k\} = P_{out}$.

    \paragraph{Case 1-1}
    First, assume that $t \in T_1$. Since $t$ is enabled, the edge $(p^*, t) \in F$ gives that there is a token at $p^*$, formally $M(p^*) \geq 1$. This means that there is at least one active instance of the place process $X_{p^*}$ according to the relation. The edge $(p^*, t) \in F$ means that a replacement of $Y_t$ is one of the choices in $\mathcal{D}(X_{p^*})$. The replacement is $\sigma(t).(X_{p_1} \mathbin{|} \dots \mathbin{|} X_{p_k})$ because $t \in T_1$. Hence, $\sigma(t)$ can be executed in $Q$.

    Firing $t$ in $N$ emits a $\sigma(t)$, consumes one token from $p^*$ and produces one token to each place $p \in P_{out}$ which results in the marking $M^\prime$:
    \begin{align}
        \label{eq:appendix-ccs-net-strong-bisimulation-t1}
        M^\prime(p) \mathrel{\;\coloneqq\;} \begin{cases}
            M(p) - 1 & \text{if } p = p^* \land p \notin P_{out} \\
            M(p) + 1 & \text{if } p \neq p^* \land p \in P_{out} \\
            M(p) & \text{otherwise}
        \end{cases}
    \end{align}
    Executing $\sigma(t)$ in $Q$ results in $Q^\prime$ where one instance of $X_{p^*}$ is replaced with $(X_{p_1} \mathbin{|} \dots \mathbin{|} X_{p_k})$. This is the same as removing one instance of $X_{p^*}$ and adding one instance of $X_p$ for each $p \in P_{out}$. There are two cases, either $p^* \in P_{out}$ or $p^* \notin P_{out}$.

    \paragraph{Case 1-1-1}
    In the first case, assume without loss of generality that $p^* = p_1$ such that $Q^\prime$ is:
    \begin{align*}
        X_{p_1}^{M(p_1) - 1 + 1} \mathbin{|} X_{p_2}^{M(p_2) + 1} \mathbin{|} \dots \mathbin{|} X_{p_k}^{M(p_k) + 1} \mathbin{|} X_{p_{k + 1}}^{M(p_{k + 1})} \mathbin{|} \dots \mathbin{|} X_{p_{|P|}}^{M(p_{|P|})}
    \end{align*}
    In $N$, $p_1$ hits the third case, $p_2, \dots, p_k$ hits the second case and $p_{k + 1}, \dots, p_{|P|}$ hits the third case in \eqref{eq:appendix-ccs-net-strong-bisimulation-t1}. Hence, $Q^\prime$ can be rewritten to:
    \begin{align*}
        X_{p_1}^{M^\prime(p_1)} \mathbin{|} X_{p_2}^{M^\prime(p_2)} \mathbin{|} \dots \mathbin{|} X_{p_k}^{M^\prime(p_k)} \mathbin{|} X_{p_{k + 1}}^{M^\prime(p_{k + 1})} \mathbin{|} \dots \mathbin{|} X_{p_{|P|}}^{M^\prime(p_{|P|})}
    \end{align*}
    This shows that $(M^\prime, Q^\prime) \in \mathcal{R}$ holds for this case.

    \paragraph{Case 1-1-2}
    Otherwise, $p^* \notin P_{out}$ and without loss of generality it is assumed that $p^* = p_{k + 1}$ such that $Q^\prime$ is:
    \begin{align*}
        X_{p_1}^{M(p_1) + 1} \mathbin{|} \dots \mathbin{|} X_{p_k}^{M(p_k) + 1} \mathbin{|} X_{p_{k + 1}}^{M(p_{k + 1}) - 1} \mathbin{|} X_{p_{k + 2}}^{M(p_{k + 2})} \mathbin{|} \dots \mathbin{|} X_{p_{|P|}}^{M(p_{|P|})}
    \end{align*}
    $p_1, \dots, p_k$ hits the second case, $p_{k + 1}$ hits the first case and $p_{k + 1}, \dots, p_{|P|}$ hits the third case in \eqref{eq:appendix-ccs-net-strong-bisimulation-t1} such that $Q^\prime$ becomes:
    \begin{align*}
        X_{p_1}^{M^\prime(p_1)} \mathbin{|} \dots \mathbin{|} X_{p_k}^{M^\prime(p_k)} \mathbin{|} X_{p_{k + 1}}^{M^\prime(p_{k + 1})} \mathbin{|} X_{p_{k + 2}}^{M^\prime(p_{k + 2})} \mathbin{|} \dots \mathbin{|} X_{p_{|P|}}^{M^\prime(p_{|P|})}
    \end{align*}
    This shows that $(M^\prime, Q^\prime) \in \mathcal{R}$ holds for $t \in T_1$.

    \paragraph{Case 1-2}
    Now, assume that $t \in T_2$ and thus $t$ is a $\tau$-transition. In this case, there are two ingoing edges $(p^*, t), (p^{**}, t) \in F$ where $p^* \neq p^{**}$ meaning that $M(p^*) \geq 1$ and $M(p^{**}) \geq 1$. Hence, there are at least one active instance of both the place processes $X_{p^*}$ and $X_{p^{**}}$. The edges $(p^*, t) \in F$ and $(p^{**}, t) \in F$ means that a replacement of $T_t$ is one of the choices in both $\mathcal{D}(X_{p^*})$ and $\mathcal{D}(X_{p^{**}})$. The replacements must be different since the case $t \in T_2$ results in two different replacements, namely $s_t.(X_{p_1} \mathbin{|} \dots \mathbin{|} X_{p_k})$ and $\overline{s_t}.\mathbf{0}$. Hence, one of the replacements is a choice in $\mathcal{D}(X_{p^*})$ while the other replacement is a choice in $\mathcal{D}(X_{p^{**}})$. This means that $s_t$ and $\overline{s_t}$ can \emph{synchronize} which results in a $\tau$-action.

    Firing $t$ in $N$ emits a $\tau$ and results in the marking $M^\prime$:
    \begin{align}
        \label{eq:appendix-ccs-net-strong-bisimulation-t2}
        M^\prime(p) \mathrel{\;\coloneqq\;} \begin{cases}
            M(p) - 1 & \text{if } p \in \{p^*, p^{**}\} \land p \notin P_{out} \\
            M(p) + 1 & \text{if } p \notin \{p^*, p^{**}\} \land p \in P_{out} \\
            M(p) & \text{otherwise}
        \end{cases}
    \end{align}
    Letting $s_t$ and $\overline{s_t}$ react in $Q$ emits a $\tau$ and results in $Q^\prime$ where $(X_{p^*} \mathbin{|} X_{p^{**}})$ is replaced with $((X_{p_1} \mathbin{|} \dots \mathbin{|} X_{p_k}) \mathbin{|} \mathbf{0})$. This is the same as removing one $X_{p^*}$, removing one $X_{p^{**}}$ and adding one $X_p$ for each $p \in P_{out}$. There are four cases:

    \paragraph{Case 1-2-1}
    The first case is $p^*, p^{**} \in P_{out}$ such that it without loss of generality can be assumed that $p^* = p_1$ and $p^{**} = p_2$. Hence, $Q^\prime$ is:
    \begin{align*}
        X_{p_1}^{M(p_1) - 1 + 1} \mathbin{|} X_{p_2}^{M(p_2) - 1 + 1} \mathbin{|} X_{p_3}^{M(p_3) + 1} \mathbin{|} \dots \mathbin{|} X_{p_k}^{M(p_k) + 1} \mathbin{|} X_{p_{k + 1}}^{M(p_{k + 1})} \mathbin{|} \dots \mathbin{|} X_{p_{|P|}}^{M(p_{|P|})}
    \end{align*}
    $p_1, p_2, p_{k + 1}, \dots, p_{|P|}$ hits the third case and $p_3, \dots, p_k$ hits the second case in \eqref{eq:appendix-ccs-net-strong-bisimulation-t2} which means that $Q^\prime$ can be rewritten to:
    \begin{align*}
        X_{p_1}^{M^\prime(p_1)} \mathbin{|} X_{p_2}^{M^\prime(p_2)} \mathbin{|} X_{p_3}^{M^\prime(p_3)} \mathbin{|} \dots \mathbin{|} X_{p_k}^{M^\prime(p_k)} \mathbin{|} X_{p_{k + 1}}^{M^\prime(p_{k + 1})} \mathbin{|} \dots \mathbin{|} X_{p_{|P|}}^{M^\prime(p_{|P|})}
    \end{align*}
    Hence, $(M^\prime, Q^\prime) \in \mathcal{R}$ in this case.

    \paragraph{Case 1-2-2}
    The second case is $p^* \in P_{out}$ and $p^* \notin P_{out}$. Without loss of generality, it is assumed that $p^* = p_1$ and $p^{**} = p_{k + 1}$. Hence, $Q^\prime$ is:
    \begin{align*}
        X_{p_1}^{M(p_1) - 1 + 1} \mathbin{|} X_{p_2}^{M(p_2) + 1} \mathbin{|} \dots \mathbin{|} X_{p_k}^{M(p_k) + 1} \mathbin{|} X_{p_{k + 1}}^{M(p_{k + 1}) - 1} \mathbin{|} X_{p_{k + 2}}^{M(p_{k + 2})} \mathbin{|} \dots \mathbin{|} X_{p_{|P|}}^{M(p_{|P|})}
    \end{align*}
    $p_1, p_{k + 2}, \dots, p_{|P|}$ hits the third case, $p_2, \dots, p_k$ hits the second case and $p_{k + 1}$ hits the first case in \eqref{eq:appendix-ccs-net-strong-bisimulation-t2} which means that $Q^\prime$ can be rewritten to:
    \begin{align*}
        X_{p_1}^{M^\prime(p_1)} \mathbin{|} X_{p_2}^{M^\prime(p_2)} \mathbin{|} \dots \mathbin{|} X_{p_k}^{M^\prime(p_k)} \mathbin{|} X_{p_{k + 1}}^{M^\prime(p_{k + 1})} \mathbin{|} X_{p_{k + 2}}^{M^\prime(p_{k + 2})} \mathbin{|} \dots \mathbin{|} X_{p_{|P|}}^{M^\prime(p_{|P|})}
    \end{align*}
    Hence, $(M^\prime, Q^\prime) \in \mathcal{R}$ in this case.

    \paragraph{Case 1-2-3}
    The third case is $p^* \notin P_{out}$ and $p^* \in P_{out}$. Without loss of generality, it can be assumed that $p^* = p_{k + 1}$ and $p^{**} = p_1$. The rest of the proof for this case is the same as \emph{Case 1-2-2}.

    \paragraph{Case 1-2-4}
    The last case is $p^*, p^{**} \notin P_{out}$ such that it without loss of generality can be assumed that $p^* = p_{k + 1}$ and $p^{**} = p_{k + 2}$. Hence, $Q^\prime$ is:
    \begin{align*}
        X_{p_1}^{M(p_1) + 1} \mathbin{|} \dots \mathbin{|} X_{p_k}^{M(p_k) + 1} \mathbin{|} X_{p_{k + 1}}^{M(p_{k + 1}) - 1} \mathbin{|} X_{p_{k + 2}}^{M(p_{k + 2}) - 1} \mathbin{|} X_{p_{k + 3}}^{M(p_{k + 3})} \mathbin{|} \dots \mathbin{|} X_{p_{|P|}}^{M(p_{|P|})}
    \end{align*}
    $p_1, \dots, p_k$ hits the second case, $p_{k + 1}$ and $p_{k + 2}$ hits the first case while $p_{k + 3}, \dots, p_{|P|}$ hits the third case in \eqref{eq:appendix-ccs-net-strong-bisimulation-t2} which means that $Q^\prime$ can be rewritten to:
    \begin{align*}
        X_{p_1}^{M^\prime(p_1)} \mathbin{|} \dots \mathbin{|} X_{p_k}^{M^\prime(p_k)} \mathbin{|} X_{p_{k + 1}}^{M^\prime(p_{k + 1})} \mathbin{|} X_{p_{k + 2}}^{M^\prime(p_{k + 2})} \mathbin{|} X_{p_{k + 3}}^{M^\prime(p_{k + 3})} \mathbin{|} \dots \mathbin{|} X_{p_{|P|}}^{M^\prime(p_{|P|})}
    \end{align*}
    This shows that $(M^\prime, Q^\prime) \in \mathcal{R}$ holds for $t \in T_2$. Hence, the complete proof of \emph{Case 1} shows that $LTS(Q_0, \mathcal{D})$ strongly simulates $LTS(N, M_0)$ using $\mathcal{R}$.

    \paragraph{Case 2}
    Consider $(M, Q) \in \mathcal{R}$ i.e. a process $Q$ and its related marking $M$.

    \paragraph{Case 2-1}
    Consider a non-restricted action $a$ in $Q$ that can be executed. The only non-restricted actions created by \autoref{alg:ccs-net} are generated for transitions with one ingoing edge, namely for $t \in T_1$. Hence, $a = \sigma(t)$ must be the case. $Q$ is the parallel composition of place processes. Each place process $X_p$ represents a place $p \in P$. $\mathcal{D}(X_p)$ is the choice between \emph{replacements} of transition processes $Y_{t_1} + Y_{t_2} + \dots + Y_{t_l}$ representing transitions $t_i \in \{t_1, t_2, \dots, t_l\} = \left\{t \;\middle|\; (p, t) \in F\right\}$ that has an ingoing edge from $p$. Since $t \in T_1$ then there exists exactly one place $p^* \in P$ such that $(p^*, t) \in F$. This means that one of the choices in $\mathcal{D}(X_{p^*})$ is $\sigma(t).(X_{p_1} \mathbin{|} X_{p_2} \mathbin{|} \dots \mathbin{|} X_{p_k})$ where $p_i \in \{p_1, p_2, \dots, p_k\} = \left\{p \;\middle|\; (t, p) \in F\right\}$. Since $\sigma(t)$ can be executed in $Q$ then $Q$ must contain at least one instance of $X_{p^*}$ which by the bisimulation relation $\mathcal{R}$ means that $M(p^*) \geq 1$. $t \in T_1$ implies that $(p^*, t) \in F$ is the only ingoing edge to $t$ and thus $t$ can be fired in $N$.

    The rest of the proof for this case is the same as \emph{Case 1-1}.

    \paragraph{Case 2-2}
    Consider a restricted action $a$ in $Q$ that can synchronize with $\overline{a}$ in order to execute. The only restricted actions created by \autoref{alg:ccs-net} are generated for transitions with two ingoing edges, namely for $t \in T_2$. Hence, $a = s_t$ and $\overline{a} = \overline{s_t}$. Similar to \emph{Case 2-1}, it is possible to conclude that there must be two distinct places $p^*, p^{**} \in P$ that have an edge to $t$, formally $(p^*, t), (p^{**}, t) \in F$. Therefore, one of the choices in $\mathcal{D}(X_{p^*})$ must contain one of the replacements for $T_t$ while $\mathcal{D}(X_{p^{**}})$ contains the replacement that is not a choice in $\mathcal{D}(X_{p^*})$. $s_t$ and $\overline{s_t}$ are not present anywhere else and since they can synchronize, then there must be at least one $X_{p^*}$ and at least one $X_{p^{**}}$. Thus, $M(p^*) \geq 1$ and $M(p^{**}) \geq 1$ such that $t$ is enabled in $N$.

    The rest of the proof for this case is the same as \emph{Case 1-2}. Hence, \emph{Case 2} shows that $LTS(N, M_0)$ strongly simulates $LTS(Q_0, \mathcal{D})$ using the relation $\mathcal{R}^{-1} \mathrel{\coloneqq} \left\{(Q, M) \;\middle|\; (M, Q) \in \mathcal{R}\right\}$. This concludes the proof of $LTS(N, M_0) \bisimilar LTS(Q_0, \mathcal{D})$.

    \paragraph{Translation time} Create an array $R$ of size $|T|$ where each element $R[t]$ is an array of size two with references to placeholders ($Y_t$). While iterating over the places (line 2-4 in \autoref{alg:ccs-net}), populate $R$ with references to the placeholders. When iterating over the transitions (line 5-13 in \autoref{alg:ccs-net}), it is possible to lookup where the replacement should go in constant time.

    It takes $\O(|T|)$ time to create $R$. Defining all $\mathcal{D}(X_p)$ takes $\O(|P| + |F|)$ because there are $|P|$ definitions with a total of at most $|F|$ placeholders (and it takes constant time to set a reference in $R$). Replacing all placeholders for transitions with one ingoing edge takes $\O(|T| + |F|)$ time since there are at most $|T|$ actions ($\sigma(t)$) and at most $|F|$ process names. Replacing all placeholders for transitions with two ingoing edge takes $\O(|T| + |F|)$ time because at most $|T|$ new actions are created, at most $2 |T|$ replacements are performed in constant time and it takes at most $|F|$ time to create the replacements. Defining $Q_0$ takes $\O(|T| + \sum_{p \in P} M_0(p))$ (or $\O(|N|)$) time. In total, the time usage is $\O(|P| + |T| + |F| + \sum_{p \in P} M_0(p)) = \O(|N| + \sum_{p \in P} M_0(p))$ (or alternatively $\O(|N|)$).

    \paragraph{Size of $(Q_0, \mathcal{D})$} The size of $Q_0$ is $\O(|T| + \sum_{p \in P} M_0(p))$ since there are at most one new action per transition ($n < |T|$) and there is one process in parallel composition per token in $M_0$. (Alternatively if process exponentiation is considered a constant sized term, then the size of $Q_0$ can be written as $\O(|T| + |P|)$.) The size of all place processes is $\O(|P| + |T| + |F|)$ since there are $|P|$ definitions (that all could be $\mathbf{0}$), each transition action $\sigma(t)$ appears at most once, there are at most $|T|$ new (co-)actions that appear once and there are at most $|F|$ process names in parallel composition. So the total size is $\O(|P| + |T| + |F| + \sum_{p \in P} M_0(p)) = \O(|N| + \sum_{p \in P} M_0(p))$ (or alternatively $\O(|N|)$).
    \qed
\end{proof}
\end{restate}

\section{Proofs for Encoding Free-Choice Workflow Nets into CCS Processes}\label{sec:appendix-free-wf}

\begin{restate}{theorem}{thm:free-wf-stepwise-weak-bisimulation}
\begin{theorem}[Correctness of \autoref{alg:free-wf-stepwise}]\label{thm:appendix-free-wf-stepwise-weak-bisimulation}
    Given a free-choice net $N = (P, T, F, A, \sigma)$, a marking $M_0: P \to \mathbb{N}$, and a transition $t^* \in T$ (with at least two ingoing edges), the result of applying \autoref{alg:free-wf-stepwise} on $N$, $M_0$, and $t^*$ is a Petri net $N'$ and marking $M_0'$ such that $LTS(N, M_0) \wbisimilar LTS(N', M_0')$ and $LTS(N', M_0')$ contains a divergent path iff $LTS(N, M_0)$ contains a divergent path. The transformation time and the increase in size are amortized $\O(1)$.
\end{theorem}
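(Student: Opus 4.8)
\emph{(Proof plan.)} The plan is to exhibit the weak bisimulation sketched in \autoref{sec:encoding}, verify its four defining clauses by a case analysis on the fired transition, and then treat divergence-preservation and the complexity bounds separately. First I would note the routine fact that $N'$ is a well-formed labelled Petri net (the added place $p^+$ and transition $t^+$ are fresh, so none of the four new edges is a duplicate), and fix the relation
    \begin{align*}
        \mathcal{R} \mathrel{\;\coloneqq\;} \left\{\left(M,\ M[p^*\mapsto M(p^*)-i][p^{**}\mapsto M(p^{**})-i][p^+\mapsto i]\right) \;\middle|\;
        \begin{smallmatrix} M \in LTS(N,M_0) \\ 0 \le i \le \min(M(p^*),M(p^{**})) \end{smallmatrix}\right\},
    \end{align*}
    reading $i$ as the number of firings of $t^+$ since the last firing of $t^*$. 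Since \autoref{alg:free-wf-stepwise} sets $M_0' = M_0[p^+\mapsto 0]$, we have $(M_0,M_0')\in\mathcal{R}$ with $i=0$, so it suffices to show $\mathcal{R}$ is a weak bisimulation between $LTS(N,M_0)$ and $LTS(N',M_0')$.

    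The single place where the free-choice hypothesis enters is: because $t^*$ has at least two ingoing edges, \emph{unique synchronisation} (\autoref{def:free-choice-net}) forces every place with an edge to $t^*$ --- in particular $p^*$ and $p^{**}$ --- to have exactly that one outgoing edge. Hence in $N$ no transition other than $t^*$ consumes from $p^*$ or $p^{**}$, and in $N'$ no transition other than $t^+$ or $t^*$ has any of $p^*,p^{**},p^+$ in its preset (and $t^+,t^*$ have disjoint presets there). With this, for $(M,M')\in\mathcal{R}$ with parameter $i$ the case analysis runs: (i) a transition $t\notin\{t^*,t^+\}$ has the same pre- and postset in $N$ and $N'$ and depends on none of $p^*,p^{**},p^+$, so it is enabled at $M$ iff at $M'$, and firing it on both sides stays in $\mathcal{R}$ with the same $i$; (ii) firing $t^*$ in $N$ (which needs $\min(M(p^*),M(p^{**}))\ge1$) is answered in $N'$ by firing $t^*$ directly when $i\ge1$ (landing in $\mathcal{R}$ with $i-1$) and by firing $t^+$ then $t^*$ when $i=0$, i.e.\ an $\xRightarrow{\sigma(t^*)}$-move (landing in $\mathcal{R}$ with $i=0$); (iii) firing $t^*$ in $N'$ is possible only when $M'(p^+)=i\ge1$ and is answered by firing $t^*$ in $N$, landing in $\mathcal{R}$ with $i-1$; (iv) firing $t^+$ in $N'$ (possible when $i<\min(M(p^*),M(p^{**}))$) is a $\tau$-step answered by the empty move in $N$, landing in $\mathcal{R}$ with $i+1$. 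In each case one checks the token-count bookkeeping, including the sub-cases where the postset of $t^*$ overlaps $\{p^*,p^{**}\}$ (self-loops); this is mechanical but is where most of the write-up goes.

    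For divergence, the construction replaces each firing of $t^*$ in $N$ by at most two steps of $N'$ (optionally preceded by $t^+$) and introduces no self-loop. Thus a divergent ($\tau$-only) path of $N$ is matched, step by step, by a path of $N'$ that still spends at least one $\tau$-step per step, hence is divergent. Conversely, given a divergent path of $N'$: if it fires transitions other than $t^+$ infinitely often, deleting its $t^+$-steps yields a divergent path of $N$; if from some point on it fires only $t^+$, then past that point no transition replenishes $p^*$ or $p^{**}$, so only finitely many $t^+$-firings can occur --- a contradiction. Hence $LTS(N,M_0)$ has a divergent path iff $LTS(N',M_0')$ does.

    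Finally, \autoref{alg:free-wf-stepwise} performs a constant number of elementary updates (one new place, one new transition, two edges removed, four added) and increases $|N|$ by a constant, so with hash-set/adjacency-list representations the transformation time and the size increase are amortized $\O(1)$. The main obstacle I expect is clauses~(ii)--(iii) of the bisimulation: pinning down \emph{why} rerouting the tokens of $p^*,p^{**}$ onto the fresh place $p^+$ disables nothing (the unique-synchronisation argument) and carrying the token counts correctly through the self-loop sub-cases; the only subtlety in the divergence part is ruling out an infinite tail of pure $t^+$-firings.
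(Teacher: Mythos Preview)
Your proof is correct and follows essentially the same route as the paper: the identical relation $\mathcal{R}$, the same four-way case split on the fired transition (with the free-choice \emph{unique-synchronisation} property invoked at exactly the same spot to guarantee that $p^*,p^{**}$ feed only $t^*$), and the same amortised-$\O(1)$ bookkeeping. Your divergence argument --- explicitly ruling out an infinite tail of pure $t^+$-firings by observing that $p^*,p^{**}$ are never replenished along such a tail --- is in fact slightly more careful than the paper's own justification.
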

\label{sec:appendix-free-wf-stepwise-weak-bisimulation}
\begin{proof}
    Markings describe states in Petri nets. Define the weak bisimulation relation $\mathcal{R}$ as follows:
    \begin{align*}
        \mathcal{R} \mathrel{\;\coloneqq\;} \left\{\left(M,\ M[p^* \mapsto M(p^*) - i][p^{**} \mapsto M(p^{**}) - i][p^+ \mapsto i]\right) \middle|
        \begin{smallmatrix}
            M \in LTS(N, M_0) \\
            k = \min(M(p^*), M(p^{**})) \\
            0 \leq i \leq k
        \end{smallmatrix}
        \right\}
    \end{align*}
    where $M$ is a reachable marking from the initial marking $M_0$ in the free-choice net $N$ ($M$ is a state in $LTS(N, M_0)$). The intuition behind the relation is: When $i = 0$, it covers the direct extension of a marking in $N$ to one in $N'$ where the new place $p^+$ has no tokens; when $i > 0$, it covers the cases where the new transition $t^+$ has been fired $i$ times without firing $t^*$ afterwards.

    Before proving the weak bisimulation, some properties of the Petri nets and related markings are considered. In the original Petri net $N$, $t^*$ has at least two ingoing edges, namely the ones from $p^*$ and $p^{**}$. Hence, $t^*$ has more than one ingoing edge which means that $p^*$ and $p^{**}$ cannot have more than one outgoing edge each. The relation can be seen as an invariant between firing transitions: For all $(M, M') \in \mathcal{R}$ then $M(p) = M'(p)$ for all $p \in (P' \setminus \{p^*, p^{**}, p^+\})$, $M(p^*) = M'(p^*) + M'(p^+)$ and $M(p^{**}) = M'(p^{**}) + M'(p^+)$. To show the weak bisimulation, it is enough to show that this invariant is preserved after each step.

    \paragraph{Case 0}
    Clearly, the initial marking $M_0$ for $N$ and the initial marking $M_0' = M_0[p^+ \mapsto 0]$ for $N'$ are a pair in the relation $(M_0, M_0') \in \mathcal{R}$ with $i = 0$.

    \paragraph{Case 1}
    Consider two related markings $(M, M') \in \mathcal{R}$ and an arbitrary enabled transition $t \in T$ in $N$ with respect to $M$. Define all places with an edge to $t$ in $N$ as $P_{in} \mathrel{\coloneqq} \left\{p \;\middle|\; (p, t) \in F\right\}$ and all places with an edge from $t$ in $N$ as $P_{out} \mathrel{\coloneqq} \left\{p \;\middle|\; (t, p) \in F\right\}$. Similar for $t$ in $N'$, define $P_{in}' \mathrel{\coloneqq} \left\{p \;\middle|\; (p, t) \in F'\right\}$ and $P_{out}' \mathrel{\coloneqq} \left\{p \;\middle|\; (t, p) \in F'\right\}$.

    \paragraph{Case 1-1}
    First assume that $t \neq t^*$. In this case, $t$ has the same edges in both $N$ and $N'$ which means that $P_{in} = P_{in}'$ and $P_{out} = P_{out}'$. Furthermore, $p^*, p^{**}, p^+ \notin P_{in}$ since $p^*$ and $p^{**}$ do not have an edge to $t$ while $p^+$ is not a part of $N$. Hence, the invariant gives that $M(p) = M'(p)$ for all $p \in P_{in}$ and thus $t$ must also be enabled in $N'$ with respect to $M'$. If $t$ is fired in $N$ then \emph{mimic} the behavior by firing $t$ in $N'$. Let $M_t$ be the marking resulting from firing $t$ in $N$ and $M_t'$ be the marking resulting from firing $t$ in $N'$:
    \begin{align}
        \label{eq:appendix-free-wf-stepwise-t-1}
        M_t(p) &\mathrel{\;\coloneqq\;} \begin{cases}
            M(p) - 1 & \text{if } p \in P_{in} \land p \notin P_{out} \\
            M(p) + 1 & \text{if } p \notin P_{in} \land p \in P_{out} \\
            M(p) & \text{otherwise}
        \end{cases}
        \\
        \label{eq:appendix-free-wf-stepwise-t-2}
        M_t'(p) &\mathrel{\;\coloneqq\;} \begin{cases}
            M'(p) - 1 & \text{if } p \in P_{in}' \land p \notin P_{out}' \\
            M'(p) + 1 & \text{if } p \notin P_{in}' \land p \in P_{out}' \\
            M'(p) & \text{otherwise}
        \end{cases}
    \end{align}
    Now, it should be shown that the invariant holds for $(M_t, M_t')$ by assuming the invariant for $(M, M')$. $M_t(p) = M_t'(p)$ holds for all $p \in (P' \setminus \{p^*, p^{**}, p^+\})$ since all these places have the same change in tokens because $P_{in} = P_{in}'$ and $P_{out} = P_{out}'$. Hence, the definitions give $M(p) + j = M'(p) + j$, where $j \in \{-1, 0, 1\}$, which holds by the assumption $M(p) = M'(p)$.

    $P_{in} = P_{in}'$ and $P_{out} = P_{out}'$ also implies $p^+ \notin P_{in}'$ and $p^+ \notin P_{out}'$ such that $M_t'(p^+) = M'(p^+)$. Furthermore, $p^*, p^{**} \notin P_{in}$ since they only have an edge to $t^*$ in $N$. This means that $p^*$ and $p^{**}$ also have the same change in tokens. Thus, $M_t(p^*) = M_t'(p^*) + M_t'(p^+)$ and $M_t(p^{**}) = M_t'(p^{**}) + M_t'(p^+)$ holds using similar arguments as before regarding the change in tokens. Hence, the invariant is preserved which shows that $(M_t, M_t') \in \mathcal{R}$ for this case.

    \paragraph{Case 1-2}
    Now assume that $t = t^*$ and let the marking resulting from firing $t$ in $N$ be $M_{t^*}$. Note that $P_{in}' = (P_{in} \setminus \{p^*, p^{**}\}) \cup \{p^+\}$ and $P_{out}' = P_{out}$ by the definition of the transformation. This means that $t^*$ is enabled in $N$ with respect to $M$ and that $p^*, p^{**} \in P_{in}$. Since $t^*$ is enabled then $M(p) \geq 1$ for all $p \in P_{in}$ and by the invariant $M(p) = M'(p) \geq 1$ for all $p \in (P_{in} \setminus \{p^*, p^{**}\})$. For $p^*$ and $p^{**}$, the invariant \emph{only} provides that $M(p^*) = M'(p^*) + M'(p^+) \geq 1$ and $M(p^{**}) = M'(p^{**}) + M'(p^+) \geq 1$. There are two cases to consider:

    \paragraph{Case 1-2-1}
    First, assume that $M'(p^+) = 0$ such that $M(p^*) = M'(p^*) \geq 1$ and $M(p^{**}) = M'(p^{**}) \geq 1$. This means that $t^*$ is not enabled in $N'$ with respect to $M'$. However, it is possible to fire $t^+$ in $N'$ with respect to $M'$ which results in the marking $M'' \mathrel{\coloneqq} M'[p^* \mapsto M'(p^*) - 1][p^{**} \mapsto M'(p^{**}) - 1][p^+ \mapsto 1]$. With respect to $M''$ it is possible to fire $t^*$ in $N'$ and the resulting marking is $M_{t^*}'$. The resulting markings can be described as:
    \begin{align}
        \label{eq:appendix-free-wf-stepwise-t-star-1}
        M_{t^*}(p) &\mathrel{\;\coloneqq\;} \begin{cases}
            M(p) - 1 & \text{if } p \in P_{in} \land p \notin P_{out} \\
            M(p) + 1 & \text{if } p \notin P_{in} \land p \in P_{out} \\
            M(p) & \text{otherwise}
        \end{cases}
        \\
        \label{eq:appendix-free-wf-stepwise-t-star-2}
        M_{t^*}'(p) &\mathrel{\;\coloneqq\;} \begin{cases}
            M''(p) - 1 & \text{if } p \in P_{in}' \land p \notin P_{out}' \\
            M''(p) + 1 & \text{if } p \notin P_{in}' \land p \in P_{out}' \\
            M''(p) & \text{otherwise}
        \end{cases}
    \end{align}
    For all $p \in (P' \setminus \{p^*, p^{**}, p^+\})$, it is the case that $M'(p) = M''(p)$. Furthermore, all edges for these places are the same in $N$ and $N'$ which means that $p \in P_{in} \iff p \in P_{in}'$ and $p \in P_{out} \iff p \in P_{out}'$. Thus, $M_{t^*}(p) = M_{t^*}'(p)$ holds for all $p \in (P' \setminus \{p^*, p^{**}, p^+\})$.

    Consider $p^*$ and $p^+$. The transformation gives that $p^+ \in P_{in}'$, $p^* \in P_{in}$ and $p^* \notin P'_{in}$ while $P_{out}' = P_{out}$ implies $p^+ \notin P_{out}'$. Either $p^* \notin P_{out}$ or $p^* \in P_{out}$ which gives two cases:
    \begin{align*}
        \begin{split}
            p^* \notin &P_{out} \land p^* \notin P'_{out}
            \\
            M_{t^*}(p^*) &= M_{t^*}'(p^*) + M_{t^*}'(p^+)
            \\
            M(p^*) - 1 &= M''(p^*) + (M''(p^+) - 1)
            \\
            M(p^*) - 1 &= (M'(p^*) - 1) + (1 - 1)
            \\
            M(p^*) &= M'(p^*) + 0
            \\
            M(p^*) &= M'(p^*) + M'(p^+)
        \end{split}
        &
        \begin{split}
            p^* \in &P_{out} \land p^* \in P'_{out}
            \\
            M_{t^*}(p^*) &= M_{t^*}'(p^*) + M_{t^*}'(p^+)
            \\
            M(p^*) &= (M''(p^*) + 1) + (M''(p^+) - 1)
            \\
            M(p^*) &= ((M'(p^*) - 1) + 1) + (1 - 1)
            \\
            M(p^*) &= M'(p^*) + 0
            \\
            M(p^*) &= M'(p^*) + M'(p^+)
        \end{split}
    \end{align*}
    The first equation is the new invariant we want to show holds. The next two equations applies the definitions of the markings. The second to last equation simplifies the expressions while the last equation uses the assumption $M'(p^+) = 0$ to obtain the old invariant which holds by the relation. This shows that $M_{t^*}(p^*) = M_{t^*}'(p^*) + M_{t^*}'(p^+)$ holds. Similar arguments shows that $M_{t^*}(p^{**}) = M_{t^*}'(p^{**}) + M_{t^*}'(p^+)$ holds. Hence, the invariant is preserved when $M'(p^+) = 0$.

    \paragraph{Case 1-2-2}
    Finally, assume that $M'(p^+) > 0$. In this case $t^*$ is enabled in $N'$ with respect to $M'$ so only $t^*$ is fired in $N'$. Let $M'' \mathrel{\coloneqq} M'$ such that the markings $M_{t^*}$ and $M_{t^*}'$ can be described like in \eqref{eq:appendix-free-wf-stepwise-t-star-1} and \eqref{eq:appendix-free-wf-stepwise-t-star-2}. The invariant holds for all $p \in (P' \setminus \{p^*, p^{**}, p^+\})$ using the same arguments as in \emph{Case 1-2-1}.

    Consider $p^*$ and $p^+$ again with the same properties as in \emph{Case 1-2-1} (except that $M''$ is different). Hence, there is again two cases:
    \begin{align*}
        \begin{split}
            p^* \notin &P_{out} \land p^* \notin P'_{out}
            \\
            M_{t^*}(p^*) &= M_{t^*}'(p^*) + M_{t^*}'(p^+)
            \\
            M(p^*) - 1 &= M''(p^*) + (M''(p^+) - 1)
            \\
            M(p^*) - 1 &= M'(p^*) + (M'(p^+) - 1)
            \\
            M(p^*) &= M'(p^*) + M'(p^+)
        \end{split}
        &
        \begin{split}
            p^* \in &P_{out} \land p^* \in P'_{out}
            \\
            M_{t^*}(p^*) &= M_{t^*}'(p^*) + M_{t^*}'(p^+)
            \\
            M(p^*) &= (M''(p^*) + 1) + (M''(p^+) - 1)
            \\
            M(p^*) &= (M'(p^*) + 1) + (M'(p^+) - 1)
            \\
            M(p^*) &= M'(p^*) + M'(p^+)
        \end{split}
    \end{align*}
    This shows that $M_{t^*}(p^*) = M_{t^*}'(p^*) + M_{t^*}'(p^+)$ holds and using similar arguments for $p^{**}$ then also $M_{t^*}(p^{**}) = M_{t^*}'(p^{**}) + M_{t^*}'(p^+)$ holds. Thus, the invariant is preserved when $M'(p^+) > 0$. Hence, the complete proof of \emph{Case 1} shows that $LTS(N', M_0')$ weakly simulates $LTS(N, M_0)$ using $\mathcal{R}$.

    \paragraph{Case 2}
    Consider two related markings $(M, M') \in \mathcal{R}$ and an arbitrary enabled transition $t \in T'$ in $N'$ with respect to $M'$. Define $P_{in} \mathrel{\coloneqq} \left\{p \;\middle|\; (p, t) \in F\right\}$, $P_{out} \mathrel{\coloneqq} \left\{p \;\middle|\; (t, p) \in F\right\}$, $P_{in}' \mathrel{\coloneqq} \left\{p \;\middle|\; (p, t) \in F'\right\}$ and $P_{out}' \mathrel{\coloneqq} \left\{p \;\middle|\; (t, p) \in F'\right\}$ similar to \emph{Case 1}.

    \paragraph{Case 2-1}
    First, assume that $t \notin \{t^*, t^+\}$. In this case $t$ has the exact same edges in both $N$ and $N'$. Thus, $t$ is also enabled in $N$ with respect to $M$. The rest of the proof for this case is similar to \emph{Case 1-1}.

    \paragraph{Case 2-2}
    Secondly, assume that $t = t^+$. Firing $t^+$ in $N'$, the resulting marking is $M_{t^+}' \mathrel{\coloneqq} M'[p^* \mapsto M'(p^*) - 1][p^{**} \mapsto M'(p^{**}) - 1][p^+ \mapsto M'(p^+) + 1]$. Since $t^+$ is a $\tau$-transition, it is \emph{allowed} to fire no transitions in $N$ to \emph{mimic} this behavior such that the resulting marking is simply $M_{t^+} \mathrel{\coloneqq} M$. Clearly, the invariant is preserved for all $p \in (P' \setminus \{p^*, p^{**}, p^+\})$.

    Now consider the invariant for $p^*$ and $p^+$:
    \begin{align*}
        \begin{split}
            M_{t^+}(p^*) &= M_{t^+}'(p^*) + M_{t^+}'(p^+)
            \\
            M(p^*) &= (M'(p^*) - 1) + (M'(p^+) + 1)
            \\
            M(p^*) &= M'(p^*) + M'(p^+)
        \end{split}
        &
        \begin{split}
            \text{(invariant)}
            \\
            \text{(definition)}
            \\
            \text{(simplification)}
        \end{split}
    \end{align*}
    Thus, the invariant is also preserved for $p^*$ and similar for $p^{**}$. This finishes the proof of $(M_{t^+}, M_{t^+}') \in \mathcal{R}$ for this case.

    \paragraph{Case 2-3}
    Finally, assume that $t = t^*$. Note that $P_{in}' = (P_{in} \setminus \{p^*, p^{**}\}) \cup \{p^+\}$ and $P_{out}' = P_{out}$ by the definition of the transformation. Since $t^*$ is enabled in $N'$ with respect to $M'$ then $M'(p) \geq 1$ for all $p \in P_{in}'$ which means that $M(p) \geq 1$ for all $p \in P_{in} \setminus \{p^*, p^{**}\}$. However, it still needs to be shown that $M(p^*) \geq 1$ and $M(p^{**}) \geq 1$ for $t^*$ to be enabled in $N$. By the invariant, it holds that $M(p^*) = M'(p^*) + M'(p^+)$ and $M(p^{**}) = M'(p^{**}) + M'(p^+)$. Since $p^+ \in P_{in}'$ and $M'(p^+) \geq 1$ then $M(p^*) \geq 1$ and $M(p^{**}) \geq 1$ must be the case. Hence, $t^*$ is also enabled in $N$.

    Let $M_{t^*}$ be the resulting marking by firing $t^*$ in $N$ and $M_{t^*}'$ be the resulting marking by firing $t^*$ in $N'$. They can be described similar to \eqref{eq:appendix-free-wf-stepwise-t-1} and \eqref{eq:appendix-free-wf-stepwise-t-2} (replace $t$ with $t^*$). Clearly, the invariant is preserved for all $p \in (P' \setminus \{p^*, p^{**}, p^+\})$, since $P_{in} \setminus \{p^*, p^{**}, p^+\} = P_{in}' \setminus \{p^*, p^{**}, p^+\}$ and $P_{out} = P_{out}'$.

    Now consider $p^*$ and $p^+$. The transformation gives that $p^+ \in P_{in}'$, $p^+ \notin P_{out}'$, $p^* \in P_{in}$ and $p^* \notin P_{in}'$. Like in \emph{Case 1-2-2}, there are two cases:
    \begin{align*}
        \begin{split}
            p^* \notin &P_{out} \land p^* \notin P'_{out}
            \\
            M_{t^*}(p^*) &= M_{t^*}'(p^*) + M_{t^*}'(p^+)
            \\
            M(p^*) - 1 &= M'(p^*) + (M'(p^+) - 1)
            \\
            M(p^*) &= M'(p^*) + M'(p^+)
        \end{split}
        &
        \begin{split}
            p^* \in &P_{out} \land p^* \in P'_{out}
            \\
            M_{t^*}(p^*) &= M_{t^*}'(p^*) + M_{t^*}'(p^+)
            \\
            M(p^*) &= (M'(p^*) + 1) + (M'(p^+) - 1)
            \\
            M(p^*) &= M'(p^*) + M'(p^+)
        \end{split}
    \end{align*}
    The first equation is the new invariant we want to show, the second equation used the definitions in \eqref{eq:appendix-free-wf-stepwise-t-1} and \eqref{eq:appendix-free-wf-stepwise-t-2} and the last equation is a simplification of the second to obtain the old invariant. Doing the same for $p^{**}$, it has been shown that $M(p^*) = M'(p^*) + M'(p^+)$ and $M(p^{**}) = M'(p^{**}) + M'(p^+)$ holds. This finishes the proof of $(M_{t^*}, M_{t^*}') \in \mathcal{R}$. \emph{Case 2} shows that $LTS(N, M_0)$ weakly simulates $LTS(N', M_0')$ using $\mathcal{R}^{-1} \mathrel{\coloneqq} \left\{(M', M) \;\middle|\; (M, M') \in \mathcal{R}\right\}$. This concludes the proof of showing that $LTS(N, M_0) \wbisimilar LTS(N', M_0')$.

    \paragraph{Divergent paths} In general, weak bisimulation alone does not ensure the absence of new divergent paths~\cite{lts-book}. However, \autoref{alg:free-wf-stepwise} only extends existing paths to $t^*$ by one $\tau$-transition without introducing any new loops. If there are no divergent paths in $LTS(N, M_0)$ then adding one $\tau$-transition to all paths does not introduce any divergent paths since this requires a loop or adding an infinite sequence of $\tau$-transitions. Similar arguments applies the other way around (except a $\tau$-transition is removed instead of added). Hence, no divergent paths are changed, so $LTS(N', M_0')$ contains a divergent path iff $LTS(N, M_0)$ contains a divergent path.

    \paragraph{Transformation time} It is assumed that $P$ and $T$ are represented by doubling arrays and $F$ by the adjacency list representation (using doubling arrays under the hood). Adding a new place and transition takes amortized constant time. Getting the list of edges for $t^*$ takes constant time. Removing two edges from a list can be done in constant time by taking the last two edges. Inserting an edge also takes amortized constant time. A constant number of edges are removed and added. Adding a new entry to the marking $M_0$ also takes amortized constant time assuming it is a doubling array. Hence, there is a constant number of operations that each takes (amortized) constant time and thus the time usage is amortized $\O(1)$.

    \paragraph{Size of $(N', M_0')$} No extra space is needed unless at least one of the doubling arrays need to double its size. This is only needed when it is full and thus using the same arguments for the time usage, the increase in size is constant amortized space.
    \qed
\end{proof}
\end{restate}

\begin{restate}{lemma}{lem:free-wf-stepwise-invariant}
\begin{lemma}[Invariant of \autoref{alg:free-wf-stepwise}]\label{lem:appendix-free-wf-stepwise-invariant}
    If \autoref{alg:free-wf-stepwise} is given a free-choice net $N$, it returns a free-choice net $N'$ that has no additional or changed transitions with no ingoing edges compared to $N$.
\end{lemma}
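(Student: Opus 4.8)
The plan is to prove the two assertions of the lemma in turn: that the output $N' = (P', T', F', A \cup \{\tau\}, \sigma[t^+ \mapsto \tau])$ is still a free-choice net, and that $N'$ has no additional or changed transitions with no ingoing edges compared to $N$. The common starting observation is that \autoref{alg:free-wf-stepwise} alters $F$ only around the five elements $p^*, p^{**}, p^+, t^*, t^+$: it deletes the edges $(p^*, t^*)$ and $(p^{**}, t^*)$ and inserts $(p^*, t^+), (p^{**}, t^+), (t^+, p^+), (p^+, t^*)$; every other edge, and hence the preset and postset of every place or transition not among these five, is untouched.

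For the second assertion I would simply track ingoing-edge counts. The fresh transition $t^+$ has exactly the two ingoing edges $(p^*, t^+)$ and $(p^{**}, t^+)$, so it does not have an empty preset. The transition $t^*$ had at least two ingoing edges in $N$ by hypothesis; the transformation removes two of them and adds $(p^+, t^*)$, so $t^*$ still has at least one ingoing edge. All remaining transitions are untouched. Consequently a transition of $N'$ has empty preset iff it is an old, untouched transition with empty preset in $N$; the two sets coincide, from which both ``no additional'' and ``no changed'' transitions with empty preset follow.

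For the first assertion I would invoke \autoref{lem:appendix-free-wf-choice-iff-sync}, so that it suffices to check \emph{unique choice} for $N'$. The pivotal fact is that, since $t^*$ has more than one ingoing edge in $N$, \emph{unique synchronisation} in $N$ forces every place with an edge to $t^*$ --- in particular $p^*$ and $p^{**}$ --- to have exactly one outgoing edge in $N$ (the one to $t^*$). Hence in $N'$: $p^*$ and $p^{**}$ again have exactly one outgoing edge (now to $t^+$), and the new place $p^+$ has exactly one outgoing edge (to $t^*$), so none of these three places has more than one outgoing edge and the unique-choice condition is vacuous for them. For any other place $p$, its outgoing edges are unchanged; if it has more than one of them, then by the pivotal fact $p$ has no edge to $t^*$, and it has no edge to $t^+$ either (only $p^*$ and $p^{**}$ do), so every out-neighbour of $p$ is a transition distinct from $t^*$ and $t^+$ whose preset is unchanged and therefore still a singleton (by unique choice in $N$). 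This establishes unique choice, and hence free-choiceness, for $N'$.

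The work here is almost entirely bookkeeping; the only point that needs care is the pivotal fact above --- recognising that a transition with at least two ingoing edges in a free-choice net can only be fed by places with a single outgoing edge, so that neither $t^*$ nor the newly added $\tau$-transition $t^+$ can sit in the postset of a place that makes a choice. Once that is pinned down, the case analysis over the places and transitions of $N'$ is routine.
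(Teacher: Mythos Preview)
Your proof is correct and follows essentially the same case analysis as the paper: inspect the five touched nodes $p^*, p^{**}, p^+, t^*, t^+$ and observe that everything else is unchanged. Two minor differences are worth noting: you invoke \autoref{lem:appendix-free-wf-choice-iff-sync} to reduce the free-choice check to \emph{unique choice} alone, whereas the paper verifies unique choice for the modified places and unique synchronisation for $t^+$ and $t^*$ directly (a slightly more redundant but equally valid split); and you explicitly track the ingoing-edge counts of $t^*$ and $t^+$ to establish the second assertion about transitions with empty preset, which the paper's own proof leaves entirely implicit. Your treatment is thus a bit tidier and more complete, but the underlying argument is the same.
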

\label{sec:appendix-free-wf-stepwise-invariant}
\begin{proof}
    $p^*$, $p^{**}$ and $p^+$ all have one outgoing edge after the transformation and thus satisfy unique choice (in \autoref{def:free-choice-net}). $t^+$ has two ingoing edges from the places $p^*$ and $p^{**}$ meaning $t^+$ satisfies unique synchronisation (in \autoref{def:free-choice-net}). The new edge to $t^*$ comes from $p^+$ with one outgoing edge meaning $t^*$ satisfies unique synchronisation. The rest is unchanged such that it is still a free-choice net.
    \qed
\end{proof}
\end{restate}

\begin{restate}{lemma}{lem:free-wf-full-correctness}
\begin{lemma}[\autoref{alg:free-wf-full} output]\label{lem:appendix-free-wf-full-correctness}
    If applied to a free-choice workflow net $(P, T, F, A, \sigma)$ with finite $F$, \autoref{alg:free-wf-full} returns a CCS net.
\end{lemma}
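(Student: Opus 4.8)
\emph{(Proof proposal.)} The plan is to show (i) that \autoref{alg:free-wf-full} terminates on a free-choice workflow net with finite $F$, and (ii) that the net it returns satisfies \autoref{def:ccs-net}. I would carry along the loop invariant that the current net $N'=(P',T',F',A',\sigma')$ is a free-choice net in which every transition has at least one ingoing edge. This holds initially: $N$ is free-choice by hypothesis, and by the \emph{connectedness} property of workflow nets (\autoref{def:workflow-net}) every transition of $N$ has at least one ingoing edge. It is preserved by each iteration: \autoref{alg:free-wf-stepwise} is applied to a transition $t^*$ whose preset has size $>\theta(t^*)\ge 1$ (hence at least two distinct predecessor places, using that there is at most one edge per place/transition pair), so by \autoref{lem:free-wf-stepwise-invariant} the resulting net is again free-choice and has no new or changed transitions lacking ingoing edges; and the freshly created transition $t^+$ has exactly two ingoing edges (from $p^*$ and $p^{**}$) by construction, with $\sigma'(t^+)=\tau$.

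For termination I would use the potential
\begin{align*}
\Phi(N') \;\coloneqq\; \sum_{t\in T'} \max\!\bigl(0,\;\bigl|\{p \mid (p,t)\in F'\}\bigr| - \theta(t)\bigr),
\qquad \text{where } \theta(t)=2 \text{ if } \sigma'(t)=\tau \text{ and } \theta(t)=1 \text{ otherwise.}
\end{align*}
An iteration selects $t^*$ with excess $\bigl|\{p\mid(p,t^*)\in F'\}\bigr|-\theta(t^*)\ge 1$ and then replaces the two edges $(p^*,t^*),(p^{**},t^*)$ by the single edge $(p^+,t^*)$; thus the preset of $t^*$ shrinks by exactly one, so its summand drops by exactly $1$ and stays nonnegative. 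The new transition $t^+$ is a $\tau$-transition with exactly two ingoing edges, so its summand is $\max(0,2-2)=0$; no other transition's ingoing-edge set is touched. Hence $\Phi$ strictly decreases by $1$ per iteration. Since $F$ is finite, $\Phi(N)$ is a finite natural number, so the loop runs at most $\Phi(N)$ times and terminates, and the returned net is a finite Petri net.

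For correctness of the output, observe that when the \textbf{while}-guard fails no transition $t\in T'$ has more than $\theta(t)$ ingoing edges, i.e.\ every visible transition has at most one ingoing edge and every $\tau$-transition has at most two. Combined with the loop invariant (every transition has at least one ingoing edge), each transition of the returned net has exactly one or two ingoing edges, and in the two-edge case its label is $\tau$ --- which is exactly \autoref{def:ccs-net}. Therefore the output is a CCS net.

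The main obstacle I anticipate is the termination argument: the potential must be chosen so that it is unaffected by the inserted $\tau$-transition (which is ``balanced'' at two ingoing edges and contributes $0$) while still being strictly decreased by the shrinking of $t^*$'s preset. Everything else is bookkeeping on top of \autoref{lem:free-wf-stepwise-invariant} and the structural properties of workflow nets.
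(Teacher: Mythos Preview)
Your proposal is correct and follows essentially the same approach as the paper: use \autoref{lem:free-wf-stepwise-invariant} plus the workflow-net connectedness to maintain that every transition has at least one ingoing edge, observe that the freshly created $\tau$-transition never re-triggers the loop guard, and conclude from the negated guard that the result is a CCS net. Your explicit potential $\Phi$ makes the termination argument more precise than the paper's informal ``only original transitions are ever selected, each losing one ingoing edge per step'' phrasing, but the underlying idea is the same.
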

\label{sec:appendix-free-wf-full-correctness}
\begin{proof}
    If \autoref{alg:free-wf-full} never enters the loop, it terminates. Otherwise, the \autoref{alg:free-wf-stepwise} reduces the number of ingoing edges by one for the chosen transition. \autoref{alg:free-wf-stepwise} introduces a new $\tau$-transition with two ingoing edges that does not satisfy the loop condition. Hence, the \autoref{alg:free-wf-stepwise} is only applied to transitions in $T$ that starts with a finite number of edges ($F$). Thus, the loop is exited at some point.

    The input is a free-choice net and \autoref{lem:free-wf-stepwise-invariant} ensures that the current Petri net is always a free-choice net (and further that no transitions are changed to have zero ingoing edges).

    When \autoref{alg:free-wf-full} exits the loop, all $\tau$-transitions have at most two ingoing edges and all other transitions have at most one ingoing edge which is a CCS net.
    \qed
\end{proof}
\end{restate}

\begin{restate}{theorem}{thm:free-wf-full-weak-bisimulation}
\begin{theorem}[Correctness of \autoref{alg:free-wf-full}]\label{thm:appendix-free-wf-full-weak-bisimulation}
    Given a free-choice net $N = (P, T, F, A, \sigma)$ and a marking $M_0: P \to \mathbb{N}$, the result of applying \autoref{alg:free-wf-full} on $N$ and $M_0$ is a Petri net $N'$ and marking $M_0'$ such that $LTS(N, M_0) \wbisimilar LTS(N', M_0')$ and $LTS(N, M_0)$ has a divergent path iff $LTS(N', M_0')$ has.   Both the transformation time and the size of $(N', M')$ are $\O(|N|)$.
\end{theorem}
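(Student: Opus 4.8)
\label{sec:appendix-free-wf-full-weak-bisimulation}
\begin{proof}
    The plan is to induct on the number $n$ of times that \autoref{alg:free-wf-full} invokes \autoref{alg:free-wf-stepwise} on the input $(N, M_0)$, composing the per-step guarantees of \autoref{thm:free-wf-stepwise-weak-bisimulation} via transitivity of weak bisimulation~\cite{lts-book}. First I would establish that the loop executes only finitely often, so that $n$ is well defined: to each Petri net assign the measure $\mu \coloneqq \sum_{t} \max(0,\, n_t - c_t)$, where the sum ranges over all transitions of the net, $n_t \coloneqq |\{p \mid (p, t) \in F\}|$ is the number of ingoing edges of $t$, and $c_t \coloneqq 2$ if $\sigma(t) = \tau$ and $c_t \coloneqq 1$ otherwise. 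Each invocation of \autoref{alg:free-wf-stepwise} on a transition $t^*$ picked by the loop guard removes two ingoing edges of $t^*$ and re-adds one from the fresh place $p^+$, decreasing $n_{t^*}$ by one; moreover it introduces a fresh $\tau$-transition $t^+$ with exactly two ingoing edges, which contributes $0$ to $\mu$ and can never satisfy the loop guard. Hence each iteration strictly decreases $\mu$; since initially $\mu \leq \sum_t n_t \leq |F|$ and $\mu \geq 0$, the loop halts after $n \leq |F|$ iterations. (This also re-derives \autoref{lem:free-wf-full-correctness}: on exit every $\tau$-transition has at most two, and every other transition at most one, ingoing edge.)

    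Next I would set up the induction. Let $N = N_0, N_1, \dots, N_n = N'$ and $M_0, M_1, \dots, M_n$ with $M_n = M_0'$ be the sequence of nets and markings computed, so that $(N_{j+1}, M_{j+1})$ is the output of \autoref{alg:free-wf-stepwise} applied to $N_j$, $M_j$, and the transition $t^*_j$ chosen at iteration $j$. The loop guard ensures $n_{t^*_j} \geq 2$, so \autoref{alg:free-wf-stepwise} is applicable at each step. Since $N_0 = N$ is a free-choice net, \autoref{lem:free-wf-stepwise-invariant} gives inductively that every $N_j$ is a free-choice net, hence the hypotheses of \autoref{thm:free-wf-stepwise-weak-bisimulation} hold at each step. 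That theorem then yields $LTS(N_j, M_j) \wbisimilar LTS(N_{j+1}, M_{j+1})$ and that $LTS(N_{j+1}, M_{j+1})$ contains a divergent path iff $LTS(N_j, M_j)$ does. Composing the $n$ weak bisimilarities by transitivity of $\wbisimilar$ yields $LTS(N, M_0) \wbisimilar LTS(N', M_0')$, and chaining the $n$ bi-implications yields that $LTS(N, M_0)$ has a divergent path iff $LTS(N', M_0')$ does.

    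For the complexity bound, each of the $n \leq |F|$ calls to \autoref{alg:free-wf-stepwise} runs in amortized $\O(1)$ time and enlarges the net by $\O(1)$ (\autoref{thm:free-wf-stepwise-weak-bisimulation}); a worklist of the transitions still violating the loop guard can be maintained in amortized $\O(1)$ per step. Hence the total transformation time is $\O(|N| + |F|) = \O(|N|)$, and the size of $(N', M_0')$ is $|N| + \O(|F|) = \O(|N|)$.

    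The only genuine obstacle --- and what the argument hinges on --- is ensuring that the precondition of \autoref{thm:free-wf-stepwise-weak-bisimulation}, namely that the \emph{current} net is still free-choice, holds at \emph{every} iteration rather than just at the start; this is exactly what \autoref{lem:free-wf-stepwise-invariant} provides, which is why the induction goes through. Everything else is routine bookkeeping: the termination measure and the amortized-cost accounting.
    \qed
\end{proof}
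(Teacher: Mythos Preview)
Your proof is correct and follows essentially the same approach as the paper: induction on the number of applications of \autoref{alg:free-wf-stepwise}, using \autoref{lem:free-wf-stepwise-invariant} to maintain the free-choice precondition, \autoref{thm:free-wf-stepwise-weak-bisimulation} at each step, and transitivity of weak bisimulation and of the divergence bi-implication to conclude. Your termination measure and explicit emphasis on the free-choice invariant are more detailed than the paper's presentation (which delegates termination to \autoref{lem:free-wf-full-correctness}), but the underlying argument is the same.
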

\label{sec:appendix-free-wf-full-weak-bisimulation}
\begin{proof}
    $LTS(N, M_0) \wbisimilar LTS(N', M_0')$ will be shown by induction on the number of applications of \autoref{alg:free-wf-stepwise}.
    
    The base case is that the original free-choice net and marking are returned by \autoref{alg:free-wf-full} since the loop is never entered. Hence, $N = N'$ and $M_0 = M_0'$ such that $LTS(N, M_0) = LTS(N', M_0')$. These are trivially strongly bisimilar and thus also weakly bisimilar \cite[Exercise 2.61]{lts-book}. 
    
    Assume that \autoref{alg:free-wf-stepwise} has been applied $n$ times to obtain $N'$ and $M_0'$. The induction hypothesis gives that $LTS(N, M_0) \wbisimilar LTS(N', M_0')$. If \autoref{alg:free-wf-stepwise} is not applied on $(N', M_0')$, then $(N', M_0')$ is returned such that $LTS(N, M_0) \wbisimilar LTS(N', M_0')$ as required. Otherwise, \autoref{alg:free-wf-stepwise} is applied on $(N', M_0')$ to obtain $(N'', M_0'')$. \autoref{thm:free-wf-stepwise-weak-bisimulation} gives that $LTS(N', M_0') \wbisimilar LTS(N'', M_0'')$. Transitivity of weak bisimulation \cite[Exercise 2.61 \& 2.62]{lts-book} shows that $LTS(N, M_0) \wbisimilar LTS(N'', M_0'')$ as required.

    \paragraph{Divergent paths} The fact about divergent paths follows from transitivity of (bi-)implication in \autoref{thm:free-wf-stepwise-weak-bisimulation}.

    \paragraph{Transformation time} It is possible to allocate enough space before starting the transformation: count the total number of ingoing edges to transitions, subtract two for each $tau$-transition and one for all other transitions (transitions with no ingoing edges are ignored in all cases). This number is the amount of new places and transitions contained in $N'$ compared to $N$. By doing this, the complexities in \autoref{thm:free-wf-stepwise-weak-bisimulation} are no longer \emph{amortized} such that each transformation uses constant time. In the worst case, all places has an edge to a single non-$\tau$-transition (and no other edges) in which case $|F| - 1$ transformations have to be performed which takes $\O(|F|)$ time. Allocating space enough at the start takes $\O(|N|)$ time which is also the total time usage.

    \paragraph{Size of $(N', M_0')$} \autoref{alg:free-wf-stepwise} adds one place, adds one transition and increases the number of edges by two. Each of these changes add constant space and thus adds $\O(1)$ space in total. In the worst case, all places has an edge to a single non-$\tau$-transition (and no other edges) in which case $|F| - 1$ transformations have to be performed which adds $\O(|F|)$ space to $N'$. $|F| - 1$ places are also added, so $M_0'$ also uses $\O(|F|)$ more space. Hence, the size of $N'$ is $\O(|N|)$.
    \qed
\end{proof}
\end{restate}
\section{Proofs for Encoding any Free-Choice Net into CCS}\label{sec:appendix-free}

\begin{lemma}[\autoref{alg:free} output]\label{lem:appendix-free-finite-net-ccs}
   If applied to a 2-$\tau$-synchronisation net $(P, T, F, A, \sigma)$, \autoref{alg:free} returns valid finite-net CCS.
\end{lemma}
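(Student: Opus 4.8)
The plan is to verify the two requirements of \autoref{def:appendix-finite-net-ccs}: that \autoref{alg:free} returns a syntactically valid CCS process together with a well-formed partial mapping of defining equations, and that none of those defining equations contains a restriction. I would establish this by walking through the four steps of \autoref{alg:free} in order and checking that each preserves both properties.

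For line~1, \autoref{alg:free} invokes \autoref{alg:ccs-net} on the 2-$\tau$-synchronisation net $N$ (retaining only the returned $\mathcal{D}$). The first observation I would make is that, although \autoref{alg:ccs-net} is stated for CCS nets, the only way a 2-$\tau$-synchronisation net can fail to be a CCS net is by containing transitions with no ingoing edges; such transitions appear in no place postset, hence produce no placeholder $Y_t$ on line~3 of \autoref{alg:ccs-net}, and are effectively ignored. So \autoref{alg:ccs-net} behaves exactly as it would on the CCS net obtained from $N$ by deleting its ingoing-edge-free transitions, and \autoref{lem:appendix-ccs-net-finite-net-ccs} applies: the $\mathcal{D}$ obtained on line~1 is well-formed, and every $\mathcal{D}(X_p)$ is $\mathbf{0}$ or a choice of restriction-free sequential processes (each branch an action prefix $\sigma(t)$, $s_t$, or $\overline{s_t}$ followed by a parallel composition of process names).

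For lines~3--5, I would check that each added equation $\mathcal{D}(X_t) \coloneqq \sigma(t).(X_t \mathbin{|} X_{p_1} \mathbin{|} \dots \mathbin{|} X_{p_k})$ has a valid, restriction-free body: $\sigma(t) \in A$ is either $\tau$ or a visible action (by \autoref{def:petri-net}, $A$ contains no co-actions), and it prefixes a parallel composition of process names, which collapses to $X_t$ when $k = 0$; hence the body is a sequential CCS process with no restriction. Because $P \cap T = \emptyset$, the new names $X_t$ are mutually distinct and distinct from every $X_p$, so $\mathcal{D}$ remains a well-defined partial map and remains restriction-free.

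Finally, for line~6 I would observe that $Q = (\nu s_1)\dots(\nu s_n)(X_{p_1}^{M_0(p_1)} \mathbin{|} \dots \mathbin{|} X_{p_{|P|}}^{M_0(p_{|P|})} \mathbin{|} X_{t_1} \mathbin{|} \dots \mathbin{|} X_{t_k})$ is a valid CCS process --- a parallel composition of process names (with $X_p^0 = \mathbf{0}$) under a sequence of restrictions --- and that the restrictions $(\nu s_i)$ occur only inside $Q$, never inside a defining equation, so $\mathcal{D}$ is untouched. Hence the returned pair $(Q, \mathcal{D})$ is valid finite-net CCS. The only slightly delicate point, and the one I would take care to spell out, is the line~1 mismatch --- feeding \autoref{alg:ccs-net} a class of nets strictly larger than its stated domain; the rest is a routine syntactic inspection.
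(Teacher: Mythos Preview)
Your proposal is correct and follows essentially the same approach as the paper's proof: invoke \autoref{lem:appendix-ccs-net-finite-net-ccs} for the $\mathcal{D}$ returned on line~1, then observe that the additions made by \autoref{alg:free} (new restriction-free sequential equations $\mathcal{D}(X_t)$ and extra process names in $Q$) preserve the finite-net CCS property. You are considerably more explicit than the paper --- in particular, your treatment of the domain mismatch on line~1 (that transitions with no ingoing edges produce no placeholders and are harmlessly ignored by \autoref{alg:ccs-net}) is a point the paper's one-line proof glosses over.
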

\begin{proof}
    Compared to \autoref{alg:ccs-net} (see \autoref{lem:appendix-ccs-net-finite-net-ccs}), \autoref{alg:free} adds further process names to $Q$ and sequential processes to $\mathcal{D}$ without restrictions which is finite-net CCS.
    \qed
\end{proof}

\begin{restate}{theorem}{thm:free-strong-bisimulation}
\begin{theorem}[Correctness of \autoref{alg:free}]\label{thm:appendix-free-strong-bisimulation}
    Given a 2-$\tau$-synchronisation net $N = (P, T, F, A, \sigma)$ and an initial marking $M_0: P \to \mathbb{N}$, the result of applying \autoref{alg:free} on $N$ and $M_0$ is $(Q_0, \mathcal{D})$ such that $LTS(N, M_0) \bisimilar LTS(Q_0, \mathcal{D})$. The translation time and the size of $(Q_0, \mathcal{D})$ are bound by $\O(|N| + \sum_{p \in P} M_0(p))$. ($Q_0$ is used here instead of $Q$ to avoid confusions in the proof.)
\end{theorem}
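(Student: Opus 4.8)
\emph{(Proof plan.)} The plan is to reuse the bisimulation argument behind \autoref{thm:ccs-net-strong-bisimulation} essentially verbatim, after adding to it a static ``background'' of token-generator processes. Partition the transitions of $N$ as $T = T_0 \uplus T_1 \uplus T_2$ according to whether they have $0$, $1$, or $2$ ingoing edges (every $t \in T_2$ has $\sigma(t) = \tau$ by \autoref{def:2-tau-synchronisation-net}). By line~1 of \autoref{alg:free}, the map $\mathcal{D}$ agrees on each $X_p$ ($p \in P$) with the output of \autoref{alg:ccs-net}, and lines~3--5 only \emph{add} one definition $\mathcal{D}(X_t) = \sigma(t).(X_t \mathbin{|} X_{p_1} \mathbin{|} \dots \mathbin{|} X_{p_k})$ per $t \in T_0$, with $\{p_1,\dots,p_k\} = \{p \mid (t,p) \in F\}$. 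The guiding observation is that each $X_t$ with $t \in T_0$ is \emph{permanent}: its unique transition $X_t \xrightarrow{\sigma(t)} X_t \mathbin{|} X_{p_1} \mathbin{|} \dots \mathbin{|} X_{p_k}$ again contains $X_t$, mirroring the fact that in $N$ a transition of $T_0$ is always enabled and its firing merely adds one token to each place of its postset.

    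I would then take the candidate relation
    \begin{align*}
        \mathcal{R} \mathrel{\;\coloneqq\;} \left\{\left(M,\ (\nu s_1)\dots(\nu s_n)\left(X_{p_1}^{M(p_1)} \mathbin{|} \dots \mathbin{|} X_{p_{|P|}}^{M(p_{|P|})} \mathbin{|} X_{t_1} \mathbin{|} \dots \mathbin{|} X_{t_m}\right)\right) \;\middle|\; M \in LTS(N, M_0)\right\}
    \end{align*}
    where $\{t_1,\dots,t_m\} = T_0$ and $s_1,\dots,s_n$ are the restricted names: this is exactly the relation used in \autoref{thm:ccs-net-strong-bisimulation} with the inert factor $X_{t_1}\mathbin{|}\dots\mathbin{|}X_{t_m}$ appended to every CCS component (and, as there, I would suppress the static restrictions from the notation). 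The proof that $\mathcal{R}$ is a strong bisimulation follows the same case split as in \appendixref{sec:appendix-ccs-net-strong-bisimulation}. Case~0 is immediate since line~6 builds $Q_0$ from $M_0$. In the net-to-CCS direction, an enabled $t \in T_1 \cup T_2$ is handled exactly as in that proof (the background factor is irrelevant, and the fresh names $s_i$ do not capture $\sigma(t)$), whereas for $t \in T_0$ I would fire $t$ in $N$ --- always possible --- obtaining $M'$ with one extra token in each $p \in t\bullet$, and match it by $X_t \xrightarrow{\sigma(t)} X_t \mathbin{|} X_{p_1} \mathbin{|} \dots \mathbin{|} X_{p_k}$, which adds one $X_p$ per $p \in t\bullet$ and preserves $X_t$, so $(M', Q') \in \mathcal{R}$. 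In the CCS-to-net direction I would case on which sub-process of $Q$ exhibits the transition: an action emitted by a choice process $X_{p^*}$, or a synchronisation on a pair $s_t,\overline{s_t}$, is replayed by firing the corresponding transition of $T_1$ resp.\ $T_2$ precisely as before; an action $\sigma(t)$ emitted by a background process $X_t$ ($t \in T_0$) is replayed by firing $t$ in $N$, which is unconditionally enabled, and $(M', Q') \in \mathcal{R}$ follows as in the net-to-CCS case. Together these cases yield $LTS(N, M_0) \bisimilar LTS(Q_0, \mathcal{D})$.

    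For the complexity bounds I would invoke \autoref{thm:ccs-net-strong-bisimulation}: line~1 runs in time $\O(|N| + \sum_{p\in P} M_0(p))$ and yields an object of that size; lines~2--5 add, for each $t \in T_0$, one definition $\mathcal{D}(X_t)$ of size $\O(1 + |t\bullet|)$, summing to $\O(|T| + |F|)$, and line~6 appends $|T_0| \le |T|$ further parallel components to $Q_0$; hence both the translation time and the size of $(Q_0,\mathcal{D})$ remain $\O(|N| + \sum_{p\in P} M_0(p))$. I expect the only genuinely delicate point to be bookkeeping rather than mathematics: carrying the permanent factor $X_{t_1}\mathbin{|}\dots\mathbin{|}X_{t_m}$ through every member of $\mathcal{R}$, and, in the CCS-to-net direction, cleanly distinguishing actions produced by the choice processes $X_p$ from identically-labelled actions produced by the token-generator processes $X_t$ --- after which each sub-case is a direct transcription of the proof of \autoref{thm:ccs-net-strong-bisimulation}.
    \qed
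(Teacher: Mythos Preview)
Your proposal is correct and follows essentially the same approach as the paper: you extend the bisimulation relation from \autoref{thm:ccs-net-strong-bisimulation} by appending the static parallel composition of the token-generator processes $X_{t}$ for $t\in T_0$, reuse the existing cases for $T_1$ and $T_2$ unchanged, and add the new case for $T_0$ in both directions. Your case split in the CCS-to-net direction---distinguishing actions emitted by place processes $X_p$ from those emitted by generator processes $X_t$---matches the paper's Case~2-1-1/2-1-2 split, and your complexity argument is the same as well.
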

\label{sec:appendix-free-strong-bisimulation}
\begin{proof}
    This proof builds on top of \autoref{thm:appendix-ccs-net-strong-bisimulation}. Define the parallel composition of the transition generator processes as:
    \begin{align*}
        Q_{T_0} \mathrel{\;\coloneqq\;} \underbrace{X_{t_1} \mathbin{|} \dots \mathbin{|} X_{t_k}}_{t_i \in \{t_1, \dots, t_k\} = T_0}
    \end{align*}
    Define the (updated) bisimulation relation $\mathcal{R}$ as follows:
    \begin{align*}
        \mathcal{R} \mathrel{\;\coloneqq\;} \left\{\left(M,\ (\nu s_1)\dots(\nu s_n)\left({X_{p_1}^{M(p_1)}} \mathbin{|} \dots \mathbin{|} {X_{p_{|P|}}^{M(p_{|P|})}} \mathbin{|} Q_{T_0}\right)\right) \;\middle|\; M \in LTS(N, M_0)\right\}
    \end{align*}
    where $M$ is a reachable marking from $M_0$ in $N$ (a state in $LTS(N, M_0)$) and $T_0$ is the set of transitions from $T$ with no ingoing edges. For simplicity, the restriction of $s_1, \dots, s_n$ are left out in the rest of the proof.

    \paragraph{Case 0}
    Same as \emph{Case 0} in \autoref{thm:appendix-ccs-net-strong-bisimulation} by adding $Q_{T_0}$ to the parallel composition in $Q_0$.

    \paragraph{Case 1}
    Consider $(M, Q) \in \mathcal{R}$ i.e. a marking $M$ and its related process $Q$. Consider an arbitrary enabled transition $t \in T$ with respect to $M$ in $N$. Define all places with an edge from $t$ in $N$ as $P_{out} \mathrel{\coloneqq} \left\{p \;\middle|\; (t, p) \in F\right\}$ and let $k \mathrel{\coloneqq} |P_{out}|$. Without loss of generality, the places in $P_{out}$ are assumed to be enumerated as the first $k$ of the $|P|$ places i.e. $\{p_1, p_2, \dots, p_k\} = P_{out}$.

    \paragraph{Case 1-1}
    Same as \emph{Case 1-1} in \autoref{thm:appendix-ccs-net-strong-bisimulation} by adding $Q_{T_0}$ to the parallel composition in $Q$ and $Q^\prime$.

    \paragraph{Case 1-2}
    Same as \emph{Case 1-2} in \autoref{thm:appendix-ccs-net-strong-bisimulation} by adding $Q_{T_0}$ to the parallel composition in $Q$ and $Q^\prime$.

    \paragraph{Case 1-3}
    Assume that $t \in T_0$ which implies that $X_t$ is in the parallel composition of $Q$. Hence, $\sigma(t)$ can be executed in $Q$. Firing $t$ in $N$ emits a $\sigma(t)$ and results in $M^\prime$:
    \begin{align}
        \label{eq:appendix-free-strong-bisimulation-t0}
        M^\prime(p) \mathrel{\;\coloneqq\;} \begin{cases}
            M(p) + 1 & \text{if } p \in P_{out} \\
            M(p) & \text{otherwise}
        \end{cases}
    \end{align}
    Executing $\sigma(t)$ in $Q$ results in $Q^\prime$ where one instance of $X_t$ is replaced with $(X_t \mathbin{|} X_{p_1} \mathbin{|} \dots \mathbin{|} X_{p_k})$. This is the same adding one instance of $X_p$ for each $p \in P_{out}$ meaning that $Q^\prime$ is:
    \begin{align*}
        X_{p_1}^{M(p_1) + 1} \mathbin{|} \dots \mathbin{|} X_{p_k}^{M(p_k) + 1} \mathbin{|} X_{p_{k + 1}}^{M(p_{k + 1})} \mathbin{|} \dots \mathbin{|} X_{p_{|P|}}^{M(p_{|P|})} \mathbin{|} \underbrace{X_{t_1}^{1 - 1 + 1} \mathbin{|} X_{t_2} \mathbin{|} \dots \mathbin{|} X_{t_{k^\prime}}}_{t_i \in \{t_1, \dots, t_{k^\prime}\} = T_0}
    \end{align*}
    $p_1, \dots, p_k$ hits the first case while $p_{k + 1}, \dots, p_{|P|}$ hits the second case in \eqref{eq:appendix-free-strong-bisimulation-t0} such that $Q^\prime$ can be rewritten to:
    \begin{align*}
        X_{p_1}^{M^\prime(p_1)} \mathbin{|} \dots \mathbin{|} X_{p_k}^{M^\prime(p_k)} \mathbin{|} X_{p_{k + 1}}^{M^\prime(p_{k + 1})} \mathbin{|} \dots \mathbin{|} X_{p_{|P|}}^{M^\prime(p_{|P|})} \mathbin{|} \underbrace{X_{t_1} \mathbin{|} X_{t_2} \mathbin{|} \dots \mathbin{|} X_{t_{k^\prime}}}_{t_i \in \{t_1, \dots, t_{k^\prime}\} = T_0}
    \end{align*}
    This shows that $(M^\prime, Q^\prime) \in \mathcal{R}$ holds for $t \in T_0$. Hence, the complete proof of \emph{Case 1} shows that $LTS(Q_0, \mathcal{D})$ strongly simulates $LTS(N, M_0)$ using $\mathcal{R}$.

    \paragraph{Case 2}
    Consider $(M, Q) \in \mathcal{R}$ i.e. a process $Q$ and its related marking $M$.

    \paragraph{Case 2-1}
    Consider a non-restricted action $a$ in $Q$ that can be executed. \autoref{alg:free} generates non-restricted actions for transitions with zero or one ingoing edge. There are two cases where $a = \sigma(t)$ in both cases.

    \paragraph{Case 2-1-1}
    If $t \in T_1$, then the proof is the same as \emph{Case 2-1} in \autoref{thm:appendix-ccs-net-strong-bisimulation} by adding $Q_{T_0}$ to the parallel composition in $Q$ and $Q^\prime$.

    \paragraph{Case 2-1-2}
    Otherwise, $t \in T_0$ must be the case. Hence, $t$ has no ingoing edges in $N$ and thus can be fired in $N$. The rest of the proof for this case is the same as \emph{Case 1-3}.

    \paragraph{Case 2-2}
    Same as \emph{Case 2-2} in \autoref{thm:appendix-ccs-net-strong-bisimulation} by adding $Q_{T_0}$ to the parallel composition in $Q$ and $Q^\prime$. Hence, \emph{Case 2} shows that $LTS(N, M_0)$ strongly simulates $LTS(Q_0, \mathcal{D})$ using $\mathcal{R}^{-1} \mathrel{\coloneqq} \left\{(Q, M) \;\middle|\; (M, Q) \in \mathcal{R}\right\}$. This concludes the proof of showing that $LTS(N, M_0) \bisimilar LTS(Q_0, \mathcal{D})$.

    \paragraph{Translation time} For all the transitions with no ingoing edges, there are at most $|T|$ actions, $|T|$ self-referencing process names ($X_t$) and at most $|F|$ other process names in parallel compositions. This uses $\O(|T| + |F|) = \O(|N|)$ time. In addition, at most $|T|$ process names are added to $Q_0$ which adds $\O(|T|)$ time such that the total time is $\O(|N| + \sum_{p \in P} M_0(p))$ (or alternatively $\O(|N|)$).

    \paragraph{Size of $(Q_0, \mathcal{D})$} The size of $Q_0$ is increased by $\O(|T_0|) = \O(|T|)$. The size of the new definitions are similar to the space used by transitions with one ingoing edge except they have one extra process name, but that is still $\O(|T| + |F|)$ space. Hence, the total size of $(Q_0, \mathcal{D})$ is $\O(|N| + \sum_{p \in P} M_0(p))$ (or alternatively $\O(|N|)$).
    \qed
\end{proof}
\end{restate}

\begin{restate}{lemma}{lem:free-correctness}
\begin{lemma}[\autoref{alg:free-wf-full} output]\label{lem:appendix-free-correctness}
    If applied to a free-choice net $(P, T, F, A, \sigma)$ with finite $F$, \autoref{alg:free-wf-full} returns a 2-$\tau$-synchronisation net.
\end{lemma}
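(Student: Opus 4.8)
The plan is to adapt the proof of \autoref{lem:free-wf-full-correctness} (which treats free-choice \emph{workflow} nets), weakening the target class from ``CCS net'' to ``2-$\tau$-synchronisation net'': a general free-choice net, unlike a workflow net, may contain transitions with no ingoing edges, and these survive the transformation unchanged, so the best we can claim is \autoref{def:2-tau-synchronisation-net}. The argument splits into termination of the \texttt{while} loop of \autoref{alg:free-wf-full}, preservation of a structural invariant, and reading off the conclusion from the loop exit condition.

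For termination, I would observe that each call to \autoref{alg:free-wf-stepwise} on a selected transition $t^*$ deletes two ingoing edges of $t^*$ and adds exactly one new one (from the fresh place $p^+$), so the preset of $t^*$ strictly shrinks by one, and the only transition it creates, $t^+$, is a $\tau$-transition with exactly two ingoing edges, hence never satisfies the loop guard and is never re-selected later. Therefore \autoref{alg:free-wf-stepwise} is applied only to transitions already present in $T$, each of which starts with finitely many ingoing edges since $F$ is finite. Setting $c_t = 2$ if $\sigma(t) = \tau$ and $c_t = 1$ otherwise, the nonnegative integer $\sum_{t \in T} \max\{0,\, |\{p \mid (p,t) \in F'\}| - c_t\}$, computed over the current net's flow relation $F'$, is bounded by $|F|$ and strictly decreases at each iteration, so the loop runs only finitely often (and if it is never entered, termination is immediate).

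Next I would record the invariant: since the input is a free-choice net, \autoref{lem:free-wf-stepwise-invariant} guarantees that the current net remains a free-choice net throughout and that no transition is ever changed to have zero ingoing edges and no new empty-preset transition is introduced; thus the transitions of the returned net $N'$ with empty preset are exactly those of $N$. On exit, the loop guard fails for every transition $t^*$ of $N'$, i.e.\ $|\{p \mid (p,t^*) \in F'\}| \le 2$ when $\sigma'(t^*) = \tau$ and $\le 1$ otherwise; equivalently every transition of $N'$ has at most two ingoing edges and is labelled $\tau$ whenever it has exactly two, which is precisely \autoref{def:2-tau-synchronisation-net}. Unlike \autoref{lem:free-wf-full-correctness}, this cannot be strengthened to a CCS net, because the possibly nonempty set of zero-preset transitions of $N$ persists in $N'$.

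I expect the only mildly delicate point to be the termination bound — making precise that freshly created $\tau$-transitions are never re-selected, so that the remaining work is measured by the finite quantity above; everything else is a routine transcription of the workflow-net argument against the weaker target class (and reuses \autoref{lem:free-wf-stepwise-invariant} verbatim).
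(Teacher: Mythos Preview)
Your proposal is correct and follows essentially the same approach as the paper: termination of the loop (newly created $\tau$-transitions have exactly two ingoing edges and so never satisfy the guard, hence only original transitions are ever selected and their presets strictly shrink), preservation of the free-choice property and of empty-preset transitions via \autoref{lem:free-wf-stepwise-invariant}, and reading off \autoref{def:2-tau-synchronisation-net} from the failed loop guard. Your explicit decreasing measure is more detailed than the paper's terse argument, but the structure and the key observations coincide.
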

\label{sec:appendix-free-correctness}
\begin{proof}
    Similar to the proof of \autoref{lem:free-wf-full-correctness} except the input net is a free-choice net which might have transitions with no ingoing edges which by \autoref{lem:free-wf-stepwise-invariant} means that the result might have them as well. Therefore, the difference is that a 2-$\tau$-synchronisation net is returned instead.
    \qed
\end{proof}
\end{restate}
\section{Proofs for Group-Choice Nets}\label{sec:appendix-group-sec}

\begin{lemma}[Invariant of \autoref{alg:group-stepwise}]\label{lem:appendix-group-stepwise-correctness}
    If \autoref{alg:group-stepwise} is given a group-choice net then it produces a group-choice net.
\end{lemma}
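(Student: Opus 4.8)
The plan is to compute, for every place of the output net $N'$, its postset (in the sense of \autoref{def:transition-set}), and then verify \autoref{def:group-choice-net} for $N'$ by a direct case analysis over pairs of places. First I would note that $N'$ is a well-formed Petri net: $p^+$ and $t^+$ are fresh, and each edge added on line~3 of \autoref{alg:group-stepwise} joins a distinct place--transition pair, so $F'$ contains no repeated edge.

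Next I would record how \autoref{alg:group-stepwise} changes postsets. Write $S \mathrel{\coloneqq} {p^*\bullet} = {p^{**}\bullet}$ for the common postset in the input net $N$ (well-defined by the precondition $p^*\bullet = p^{**}\bullet \neq \emptyset$). Deleting all outgoing edges of $p^*$ and $p^{**}$ and re-wiring them through $t^+$ makes the postset of $p^*$ and of $p^{**}$ in $N'$ equal to $\{t^+\}$; the new place $p^+$ acquires exactly the edges $(p^+, t)$ for $t \in S$, so its postset in $N'$ is $S$; and every other place $p \in P \setminus \{p^*, p^{**}\}$ keeps its postset unchanged, because line~3 only deletes edges leaving $p^*$ or $p^{**}$ and only adds edges leaving $p^*$, $p^{**}$, $p^+$, or $t^+$.

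With these postsets in hand the group-choice check is bookkeeping, using two facts: (a) $t^+$ is fresh, so $\{t^+\}$ is disjoint from $S$ and from the postset of every old place; and (b) $S$ is the $N$-postset of $p^*$, so the group-choice property of $N$ already guarantees that $S$ is equal or disjoint to ${p\bullet}$ for each $p \in P$. Then: the pair $\{p^*,p^{**}\}$ has equal postsets $\{t^+\}$; $p^*$ or $p^{**}$ paired with $p^+$ or with any other old place gives disjoint postsets by (a); $p^+$ paired with an old place $p \neq p^*, p^{**}$ gives postsets $S$ and ${p\bullet}$, which are equal or disjoint by (b); and two old places $p, q \in P \setminus \{p^*, p^{**}\}$ retain their $N$-postsets, which are equal or disjoint since $N$ is a group-choice net. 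This exhausts all pairs, so $N'$ is a group-choice net.

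I do not expect a genuine obstacle: the only delicate part is every pair involving $p^*$, $p^{**}$, or $p^+$, and all of these are dispatched by the single observation that the transformation simply moves $p^*$ and $p^{**}$ out of their original postset-group (the places with postset $S$) into a fresh group with postset $\{t^+\}$ --- disjoint from everything by freshness of $t^+$ --- while placing $p^+$ into the original group with postset $S$, thereby preserving exactly the partition-by-postset structure that characterises group-choice nets. (The precondition $S \neq \emptyset$ is not actually needed for this lemma, only for the accompanying weak-bisimulation argument.)
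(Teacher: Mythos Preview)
Your proposal is correct and follows essentially the same approach as the paper's proof: both compute the new postsets ($p^+\bullet = S$, $p^*\bullet = p^{**}\bullet = \{t^+\}$, all other places unchanged) and then argue that freshness of $t^+$ makes $\{t^+\}$ disjoint from every other postset while $S$ inherits its equal-or-disjoint relationships from $N$. Your version is simply more explicit in enumerating the pairs of places, whereas the paper's proof sketches the same observations in a few lines.
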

\begin{proof}
    The changes by \autoref{alg:group-stepwise} are considered in steps. $p^+$ gets the place postset as $p^*$ ($p^+\bullet = p^*\bullet$) which satisfies the constraints for group-choice nets since it is the same as $p^*\bullet$. The transition $t^+$ and edge $(t^+, p^+)$ do not affect any place postsets. The place postset for both $p^*$ and $p^{**}$ are changed to $\{t^+\}$. Since $t^+$ is new, it is not contained in any place postset and thus the new place postset for $p^*$ and $p^{**}$ are disjointed from any other place postset. Hence, the constraints for a group-choice net are preserved which concludes the proof.
    \qed
\end{proof}

\begin{theorem}[Correctness of \autoref{alg:group-stepwise}]\label{thm:appendix-group-stepwise-weak-bisimulation}
    Given a group-choice net $N = (P, T, F, A, \sigma)$, a marking $M_0: P \to \mathbb{N}$, and two places $p^*, p^{**} \in P$ (with $p^*\bullet = p^{**}\bullet \neq \emptyset$), the result of applying \autoref{alg:group-stepwise} on $N$, $M_0$, $p^*$ and $p^{**}$ is a Petri net $N'$ and marking $M_0'$  such that $LTS(N, M_0) \wbisimilar LTS(N', M_0')$ and $LTS(N', M_0')$ contains a divergent path iff $LTS(N, M_0)$ contains a divergent path. The transformation time and the increase in size is amortized $\O(1)$.
\end{theorem}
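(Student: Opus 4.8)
The plan is to mirror the proof of \autoref{thm:free-wf-stepwise-weak-bisimulation} essentially line by line, generalising from a single transition $t^*$ to the whole shared postset $G \mathrel{\coloneqq} p^*\bullet = p^{**}\bullet$; that the output is again a group-choice net is recorded separately in \autoref{lem:appendix-group-stepwise-correctness}. As the witnessing weak bisimulation I would take
\begin{align*}
  \mathcal{R} \mathrel{\;\coloneqq\;} \left\{\left(M,\ M[p^* \mapsto M(p^*) - i][p^{**} \mapsto M(p^{**}) - i][p^+ \mapsto i]\right) \;\middle|\;
  \begin{smallmatrix} M \in LTS(N, M_0) \\ 0 \leq i \leq \min(M(p^*), M(p^{**})) \end{smallmatrix}
  \right\}
\end{align*}
reading $i$ as the number of times the fresh $\tau$-transition $t^+$ has fired in $N'$ without a subsequent firing of a transition in $G$. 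Equivalently I would phrase it as the invariant $M(q) = M'(q)$ for all $q \notin \{p^*, p^{**}, p^+\}$, $M(p^*) = M'(p^*) + M'(p^+)$, and $M(p^{**}) = M'(p^{**}) + M'(p^+)$, and verify that it is preserved by each step.

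The case analysis then runs as follows. A transition $t \notin G \cup \{t^+\}$ has identical pre- and postsets in $N$ and $N'$ and does not have $p^*$, $p^{**}$, or $p^+$ in its preset (as $p^* \in \{p \mid (p,t) \in F\}$ would mean $t \in G$), so it is matched identically and the invariant is trivially preserved. Firing $t^+$ is possible only in $N'$; since it is a $\tau$-transition, $N$ replies by idling, which increments $i$ and re-establishes the invariant by a short calculation. If some $t \in G$ fires in $N$, then $N'$ replies by firing $t$ directly when $M'(p^+) > 0$, and by firing $t^+$ and then $t$ when $M'(p^+) = 0$ ($t^+$ being enabled because $t$ enabled in $N$ forces $M'(p^*) = M(p^*) \geq 1$ and $M'(p^{**}) = M(p^{**}) \geq 1$); in each subcase one further splits on whether $p^*$ and/or $p^{**}$ also lie in the postset of $t$, exactly as in the corresponding subcases of the proof of \autoref{thm:free-wf-stepwise-weak-bisimulation}, and checks the invariant. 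Conversely, if $t \in G$ fires in $N'$, enabledness forces $M'(p^+) \geq 1$, hence $M(p^*) = M'(p^*) + M'(p^+) \geq 1$ and $M(p^{**}) \geq 1$, so $t$ is enabled in $N$ and can be fired to match. Together these give $LTS(N, M_0) \wbisimilar LTS(N', M_0')$.

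For the divergence clause I would argue exactly as in \autoref{thm:free-wf-stepwise-weak-bisimulation}: \autoref{alg:group-stepwise} merely inserts the single fresh $\tau$-transition $t^+$ in front of the transitions of $G$ and creates no new cycle (the fresh place $p^+$ feeds only into $G$, and $t^+$ does not replenish its own preset), so $LTS(N', M_0')$ has a divergent path iff $LTS(N, M_0)$ does. The complexity bound reuses the doubling-array bookkeeping of \autoref{thm:free-wf-stepwise-weak-bisimulation}: adding $p^+$ and $t^+$, removing the outgoing edges of $p^*$ and $p^{**}$, and attaching the shared postset to $p^+$ all take amortized constant time per net element touched, and likewise for the added space.

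The only genuine work will be the bookkeeping in the $t \in G$ cases — making the two-step ``$t^+$ then $t$'' simulation precise when $p^+$ is empty, and tracking tokens correctly when $p^*$ or $p^{**}$ happens to occur in the postset of the fired transition. This is a routine adaptation of the already-verified argument for \autoref{alg:free-wf-stepwise}; the reason nothing conceptually new is needed is precisely the precondition $p^*\bullet = p^{**}\bullet$, which guarantees that the postset handed to the new place $p^+$ is exactly the one that each of $p^*$ and $p^{**}$ contributed, so no enabled behaviour is lost.
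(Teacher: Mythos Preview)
Your bisimulation and divergence arguments are correct and match the paper's proof essentially line for line: same relation $\mathcal{R}$, same invariant, same case split on whether the fired transition lies in the shared postset $G$, same two-step ``fire $t^+$ then $t$'' reply when $M'(p^+)=0$, and the same observation that the fresh $\tau$-transition creates no new cycle.

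There is, however, a gap in your complexity argument. You write that you ``reuse the doubling-array bookkeeping of \autoref{thm:free-wf-stepwise-weak-bisimulation}'', but in that theorem only a \emph{constant} number of edges are touched (two removed, four added), so $\O(1)$ is automatic. Here the shared postset $G = p^*\bullet$ may be arbitrarily large: \autoref{alg:group-stepwise} removes $|G|$ outgoing edges from each of $p^*$ and $p^{**}$ and gives $p^+$ $|G|$ outgoing edges. Your phrase ``amortized constant time per net element touched'' therefore yields only $\O(|G|)$, not the claimed amortized $\O(1)$. The paper closes this gap with a pointer trick you do not mention: represent outgoing edges as adjacency lists and \emph{move the reference} to $p^*$'s list over to $p^+$ (no copying), then \emph{discard} the entire list at $p^{**}$ and replace each of $p^*$, $p^{**}$ by a fresh single-element list. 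That is what makes the step genuinely $\O(1)$.
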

\begin{proof}
    Markings describe states in Petri nets. Define the weak bisimulation relation $\mathcal{R}$ as follows:
    \begin{align*}
        \mathcal{R} \mathrel{\;\coloneqq\;} \left\{\left(M, M[p^* \mapsto M(p^*) - i][p^{**} \mapsto M(p^{**}) - i][p^+ \mapsto i]\right) \middle|
        \begin{smallmatrix}
            M \in LTS(N, M_0) \\
            k = \min(M(p^*), M(p^{**})) \\
            0 \leq i \leq k
        \end{smallmatrix}
        \right\}
    \end{align*}
    where $M$ is a reachable marking from the initial marking $M_0$ in the free-choice net $N$ ($M$ is a state in $LTS(N, M_0)$). The intuition behind the relation is: When $i = 0$, it covers the direct extension of a marking in $N$ to one in $N'$ where the new place $p^+$ has no tokens; when $i > 0$, it covers the cases where the new transition $t^+$ has been fired $i$ times without firing $t^*$ afterwards. $\mathcal{R}$ describes an invariant: For every $(M, M^\prime) \in \mathcal{R}$, $M(p) = M^\prime(p)$ holds for all $p \in (P^\prime \setminus \{p^*, p^{**}, p^+\})$, $M(p^*) = M^\prime(p^*) + M^\prime(p^+)$ and $M(p^{**}) = M^\prime(p^{**}) + M^\prime(p^+)$. It is enough to show that the invariant is preserved to show weak bisimulation.

    \paragraph{Case 0}
    Clearly, the initial marking $M_0$ for $N$ and the initial marking $M_0^\prime = M_0[p^+ \mapsto 0]$ for $N^\prime$ are a pair in the relation $(M_0, M_0^\prime) \in \mathcal{R}$ with $i = 0$.

    \paragraph{Case 1}
    Consider two related markings $(M, M^\prime) \in \mathcal{R}$ and an arbitrary \emph{enabled} transition $t \in T$ in $N$ according to $M$. Define all places with an edge to $t$ in $N$ as $P_{in} \mathrel{\coloneqq} \left\{p \;\middle|\; (p, t) \in F\right\}$ and all places with an edge from $t$ in $N$ as $P_{out} \mathrel{\coloneqq} \left\{p \;\middle|\; (t, p) \in F\right\}$. Similar for $t$ in $N^\prime$, define $P_{in}^\prime \mathrel{\coloneqq} \left\{p \;\middle|\; (p, t) \in F^\prime\right\}$ and $P_{out}^\prime \mathrel{\coloneqq} \left\{p \;\middle|\; (t, p) \in F^\prime\right\}$.

    \paragraph{Case 1-1}
    First assume that $t \notin p^*\bullet$ according to $N$. Hence, $p^*$ and $p^{**}$ do not have an edge to $t$ in $N$. \autoref{alg:group-stepwise} only changes edges for transitions with an edge from $p^*$ or $p^{**}$. Therefore, $t$ has the exact same edges in $N$ and $N^\prime$ meaning that $P_{in} = P_{in}^\prime$ and $P_{out} = P_{out}^\prime$. Thus, $t$ is also enabled in $N^\prime$. Firing $t$ in $N$, the marking $M_t$ is obtained while firing $t$ in $N^\prime$ gives the marking $M_t^\prime$:
    \begin{align}
        \label{eq:appendix-group-stepwise-1}
        M_t(p) &\mathrel{\;\coloneqq\;} \begin{cases}
            M(p) - 1 & \text{if } p \in P_{in} \land p \notin P_{out} \\
            M(p) + 1 & \text{if } p \notin P_{in} \land p \in P_{out} \\
            M(p) & \text{otherwise}
        \end{cases}
        \\
        \label{eq:appendix-group-stepwise-2}
        M_t^\prime(p) &\mathrel{\;\coloneqq\;} \begin{cases}
            M^\prime(p) - 1 & \text{if } p \in P_{in}^\prime \land p \notin P_{out}^\prime \\
            M^\prime(p) + 1 & \text{if } p \notin P_{in}^\prime \land p \in P_{out}^\prime \\
            M^\prime(p) & \text{otherwise}
        \end{cases}
    \end{align}
    It is assumed that the invariant holds for $(M, M^\prime)$. For all places $p \in (P^\prime \setminus \{p^*, p^{**}, p^+\})$, the change in tokens is the same in both $M$ and $M^\prime$ because $P_{in} = P_{in}^\prime$ and $P_{out} = P_{out}^\prime$. Hence, $M(p) + j = M^\prime(p) + j$ holds from the invariant, where $j \in \{-1, 0, 1\}$, which can be rewritten to $M_t(p) = M_t^\prime(p)$ by the definitions.

    For $p^*$, $p^{**}$ and $p^+$, it holds that $p^*, p^{**}, p^+ \notin P_{in}$ and $p^+ \notin P_{out}$. Hence, $M_t^\prime(p^+) = M^\prime(p^+)$ such that similar arguments like above shows $M_t(p^*) = M_t^\prime(p^*) + M_t^\prime(p^+)$ and $M_t(p^{**}) = M_t^\prime(p^{**}) + M_t^\prime(p^+)$. Hence, the invariant is preserved which shows that $(M_t, M_t^\prime) \in \mathcal{R}$.

    \paragraph{Case 1-2}
    Now assume that $t \in p^*\bullet$ according to $N$. In this case, $p^*, p^{**} \in P_{in}$ while $P_{in}^\prime = (P_{in} \setminus \{p^*, p^{**}\}) \cup \{p^+\}$ and $P_{out}^\prime = P_{out}$ according to \autoref{alg:group-stepwise}. Since $t$ is enabled in $N$ then $M(p) \geq 1$ for all $p \in P_{in}$ and by the invariant $M(p) = M^\prime(p) \geq 1$ holds for all $p \in (P_{in} \setminus \{p^*, p^{**}\})$. For $p^*$ and $p^{**}$, the invariant \emph{only} provides $M(p^*) = M^\prime(p^*) + M^\prime(p^+) \geq 1$ and $M(p^{**}) = M^\prime(p^{**}) + M^\prime(p^+) \geq 1$. There are two cases:

    \paragraph{Case 1-2-1}
    First, assume that $M^\prime(p^+) = 0$ such that $M(p^*) = M^\prime(p^*) \geq 1$ and $M(p^{**}) = M^\prime(p^{**}) \geq 1$. Hence, $t$ is not enabled in $N^\prime$ since $p^+$ has no tokens. However, $t^+$ can be fired in $N^\prime$ to obtain $M^{\prime\prime} \mathrel{\coloneqq} M^\prime[p^* \mapsto M^\prime(p^*) - 1][p^{**} \mapsto M^\prime(p^{**}) - 1][p^+ \mapsto 1]$ which enables $t$ in $N^\prime$. The resulting markings are:
    \begin{align}
        M_t(p) &\mathrel{\;\coloneqq\;} \begin{cases}
            M(p) - 1 & \text{if } p \in P_{in} \land p \notin P_{out} \\
            M(p) + 1 & \text{if } p \notin P_{in} \land p \in P_{out} \\
            M(p) & \text{otherwise}
        \end{cases}
        \\
        M_t^\prime(p) &\mathrel{\;\coloneqq\;} \begin{cases}
            M^{\prime\prime}(p) - 1 & \text{if } p \in P_{in}^\prime \land p \notin P_{out}^\prime \\
            M^{\prime\prime}(p) + 1 & \text{if } p \notin P_{in}^\prime \land p \in P_{out}^\prime \\
            M^{\prime\prime}(p) & \text{otherwise}
        \end{cases}
    \end{align}
    For all $p \in (P^\prime \setminus \{p^*, p^{**}, p^+\})$, $M^\prime(p) = M^{\prime\prime}(p)$ such that using the same arguments as in \emph{Case 1-1} shows that $M_t(p) = M_t^\prime(p)$.

    $p^+$ hits the first case above in $N^\prime$ such that $M_t^\prime(p^+) = M^{\prime\prime}(p^+) - 1 = M^\prime(p^+) + 1 - 1 = M^\prime(p^+) = 0$. There are two cases for $p^*$: If $p^* \in P_{out}$ then $p^*$ hits the third case in $N$ and second case in $N^\prime$. Hence, $M_t(p^*) = M_t^\prime(p^*) + M_t^\prime(p^+)$ holds since it is by the definitions equal to $M(p^*) = M^{\prime\prime}(p^*) + 1 + M^\prime(p^+) = M^\prime(p^*) - 1 + 1 + M^\prime(p^+) = M^\prime(p^*) + M^\prime(p^+)$ which hold from the invariant. Otherwise $p^* \notin P_{out}$ such that $p^*$ hits the first case in $N$ and the third case in $N^\prime$. Thus, $M_t(p^*) = M_t^\prime(p^*) + M_t^\prime(p^+)$ holds as the definitions give $M(p^*) - 1 = M^{\prime\prime}(p^*) + M^\prime(p^+) = M^\prime(p^*) - 1 + M^\prime(p^+)$ which holds by the invariant. Similar arguments holds for $p^{**}$ which shows the invariant is preserved when $M^\prime(p^+) = 0$.

    \paragraph{Case 1-2-2}
    Now, assume that $M^\prime(p^+) > 0$ such that $t$ is enabled in $N^\prime$. Hence, $t$ can be fired in both $N$ and $N^\prime$ which results in the markings in \eqref{eq:appendix-group-stepwise-1} and \eqref{eq:appendix-group-stepwise-2}.

    For all $p \in (P^\prime \setminus \{p^*, p^{**}, p^+\})$, the same arguments as in \emph{Case 1-1} shows that $M_t(p) = M_t^\prime(p)$.

    $p^+$ hits the first case in $N^\prime$ such that $M_t^\prime(p^+) = M^\prime(p^+) - 1$. There are two cases for $p^*$: If $p^* \in P_{out}$ then $p^*$ hits the third case in $N$ and the second case in $N^\prime$ meaning that $M_t(p^*) = M_t^\prime(p^*) + M_t^\prime(p^+)$ holds since it can be rewritten to $M(p^*) = M^\prime(p) + 1 + M^\prime(p^+) - 1$ which hold by the invariant. Otherwise $p^* \notin P_{out}$ such that $p^*$ hits the first case in $N$ and the third case in $N^\prime$ meaning $M_t(p^*) = M_t^\prime(p^*) + M_t^\prime(p^+)$ is equal to $M(p^*) - 1 = M^\prime(p^*) + M^\prime(p^+) - 1$ that holds from the invariant. Similar arguments holds for $p^{**}$ which shows the invariant is preserved when $M^\prime(p^+) > 0$.

    \paragraph{Case 2}
    Consider two related markings $(M, M^\prime) \in \mathcal{R}$ and an arbitrary \emph{enabled} transition $t \in T^\prime$ in $N^\prime$ according to $M^\prime$. Define $P_{in}$, $P_{out}$, $P_{in}^\prime$ and $P_{out}^\prime$ like in \emph{Case 1}.

    \paragraph{Case 2-1}
    First, assume that $t \notin p^+\bullet \cup \{t^+\}$ according to $N^\prime$. Then $t$ has the same edges in both $N$ and $N^\prime$. The rest of the proof is the same as \emph{Case 1-1}.

    \paragraph{Case 2-2}
    Next, assume that $t = t^+$. Firing $t^+$ in $N^\prime$ results in the marking $M_t^\prime \mathrel{\coloneqq} M^\prime[p^* \mapsto M^\prime(p^*) - 1][p^{**} \mapsto M^\prime(p^{**}) - 1][p^+ \mapsto M^\prime(p^+) + 1]$. $t^+$ is a $\tau$-transition and does not exist in $N$, so no transitions will be fired in $N$ such that $M_t(p) \mathrel{\coloneqq} M(p)$.

    Clearly, the invariant is preserved for all $p \in (P^\prime \setminus \{p^*, p^{**}, p^+\})$. For $p^*$ and $p^+$, $M_t(p^*) = M_t^\prime(p^*) + M_t^\prime(p^+)$ holds since it by the definitions gives $M(p^*) = M^\prime(p^*) - 1 + M^\prime(p^+) + 1$ which holds by the invariant. The same holds for $p^{**}$ which proves this case.

    \paragraph{Case 2-3}
    Finally, assume that $t \in p^+\bullet$ according to $N^\prime$ (same as $t \in T_{p^*}$ according to $N$). In this case, $P_{in}^\prime = (P_{in} \setminus \{p^*, p^{**}\}) \cup \{p^+\}$ and $P_{out}^\prime = P_{out}$. $t$ is enabled in $N^\prime$ which means that $M^\prime(p) \geq 1$ for all $p \in P_{in}^\prime$ and thus $M(p) \geq 1$ for all $p \in P_{in} \setminus \{p^*, p^{**}\}$. It still needs to be shown that $M(p^*) \geq 1$ and $M(p^{**}) \geq 1$ for $t$ to be enabled in $N$. The invariant gives that $M(p^*) = M^\prime(p^*) + M^\prime(p^+)$ and $M(p^{**}) = M^\prime(p^{**}) + M^\prime(p^+)$. Since $p^+ \in P_{in}^\prime$ and $M^\prime(p^+) \geq 1$ then $M(p^*) \geq 1$ and $M(p^{**}) \geq 1$ must be the case. Hence, $t$ is also enabled in $N$.

    Let the markings obtained by firing $t$ in both $N$ and $N^\prime$ be like in \eqref{eq:appendix-group-stepwise-1} and \eqref{eq:appendix-group-stepwise-2}. The rest of the proof for this case is the same as \emph{Case 1-2-2}. This concludes the proof of showing that $LTS(N, M_0) \wbisimilar LTS(N^\prime, M_0^\prime)$.

    \paragraph{Divergent paths} In general, weak bisimulation alone does not ensure the absence of new divergent paths~\cite{lts-book}. However, \autoref{alg:free-wf-stepwise} only extends existing paths to $t^*$ by one $\tau$-transition without introducing any new loops. If there are no divergent paths in $LTS(N, M_0)$ then adding one $\tau$-transition to all paths does not introduce any divergent paths since this requires a loop or adding an infinite sequence of $\tau$-transitions. Similar arguments applies the other way around (except a $\tau$-transition is removed instead of added). Hence, no divergent paths are changed meaning $LTS(N', M_0')$ contains a divergent path iff $LTS(N, M_0)$ contains a divergent path.

    \paragraph{Transformation time} It is assumed that $F$ is represented as an adjacency list and everything uses doubling arrays (under the hood). Creating $p^+$ and $t^+$ and extending all structures to fit them takes amortized $\O(1)$ time. Adding the edges to $p^+$ can be done in $\O(1)$ time by simply moving (the reference to) the list of edges from $p^*$ to $p^+$. Hence, replacing all edges from $p^*$ and $p^{**}$ can be done by removing the whole list of edges and add a new one with one element which takes $\O(1)$ time. Adding a new entry to the marking $M_0$ also takes amortized $\O(1)$ time. In total, amortized $\O(1)$ time.

    \paragraph{Size of $(N', M_0')$} The new edges to $t^+$ uses $\O(1)$ extra space. Everything else only increases the size if at least one of the doubling arrays need to double its size. This is only needed when an array is full. Thus, the size is increased by amortized $\O(1)$ space.
    \qed
\end{proof}

\begin{lemma}[\autoref{alg:group-full} output]\label{lem:appendix-group-full-correctness}
    If applied to a group-choice net $(P, T, F, A, \sigma)$ with finite $F$, \autoref{alg:group-full} returns a 2-$\tau$-synchronisation net.
\end{lemma}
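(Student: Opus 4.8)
The plan is to follow the same three-part pattern used in the proofs of \autoref{lem:appendix-free-wf-full-correctness} and \autoref{lem:appendix-free-correctness}: show that every iteration of the \textbf{while}-loop of \autoref{alg:group-full} has a legal selection, show that the loop terminates, and identify the net it returns when it exits.

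First I would check that, whenever the loop guard on line~2 holds for some $t^* \in T'$, lines~3--4 can be executed and the precondition of \autoref{alg:group-stepwise} is satisfied. If $|\{p \mid (p, t^*) \in F'\}| > 1$ then there exist two distinct places $p_1 \neq p_2$ with $(p_1, t^*), (p_2, t^*) \in F'$, so line~3 succeeds. Since $t^* \in p_1\bullet \cap p_2\bullet$, the postsets $p_1\bullet$ and $p_2\bullet$ are not disjoint; and — by \autoref{lem:appendix-group-stepwise-correctness} together with an induction on the number of completed iterations — the current net $N'$ is always a group-choice net, so \autoref{def:group-choice-net} forces $p_1\bullet = p_2\bullet$, which is non-empty (it contains $t^*$). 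Hence \autoref{alg:group-stepwise} is applicable with $p^* = p_1$ and $p^{**} = p_2$, and by the same lemma its output is again a group-choice net, which closes the induction.

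For termination I would use the measure
\[
\Phi(N') \coloneqq \sum_{t \in T'} \max\!\big(0,\ |\{p \mid (p, t) \in F'\}| - b(t)\big), \quad \text{where } b(t) = 2 \text{ if } \sigma'(t) = \tau \text{ and } b(t) = 1 \text{ otherwise.}
\]
We have $\Phi(N') \geq 0$, and $\Phi(N) \leq |F| < \infty$ because $F$ is finite. One application of \autoref{alg:group-stepwise} adds a single fresh $\tau$-transition $t^+$ with exactly two ingoing edges, which contributes $\max(0, 2-2) = 0$ to $\Phi$; it removes two ingoing edges and adds one for every transition in the common postset $p^*\bullet$, so the preset of each such transition drops by exactly one while no pre-existing transition ever gains an edge; and in particular the term for the selected $t^*$ drops from $|\{p \mid (p,t^*)\in F'\}| - b(t^*) > 0$ to that quantity minus one, i.e.\ by exactly one. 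Hence $\Phi$ strictly decreases at each iteration, so the loop runs at most $\Phi(N)$ times and terminates.

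Finally, when the loop exits there is no $t \in T'$ with $|\{p \mid (p, t) \in F'\}| > b(t)$; that is, every $\tau$-transition has at most two ingoing edges and every other transition has at most one. In particular no transition has more than two ingoing edges, and any transition with exactly two ingoing edges is labelled $\tau$ — which is precisely the definition of a 2-$\tau$-synchronisation net (\autoref{def:2-tau-synchronisation-net}). The only delicate point is the termination argument: one must observe that the auxiliary $\tau$-transitions contribute nothing to $\Phi$ and that \autoref{alg:group-stepwise} never enlarges any pre-existing preset, so that $\Phi$ — taken over the growing transition set — really does strictly decrease; everything else is routine and mirrors the free-choice case almost verbatim.
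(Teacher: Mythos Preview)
Your proof is correct and follows essentially the same three-part structure as the paper's proof (group-choice invariant via \autoref{lem:appendix-group-stepwise-correctness}, termination, identification of the output as a 2-$\tau$-synchronisation net). The only notable difference is that you make the termination argument explicit via the measure $\Phi$ and additionally verify that the \textbf{select} on line~3 succeeds and meets \autoref{alg:group-stepwise}'s precondition, whereas the paper argues termination informally (new $\tau$-transitions never satisfy the loop guard, so only the finitely many original ingoing edges are ever reduced) and leaves the well-definedness of line~3 implicit.
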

\begin{proof}
    If \autoref{alg:group-full} never enters the loop, it terminates. Otherwise, the \autoref{alg:group-stepwise} reduces the number of ingoing edges by one for each transition in $p^*\bullet$. \autoref{alg:group-stepwise} introduces a new $\tau$-transition with two ingoing edges that does not satisfy the loop condition. Hence, the \autoref{alg:group-stepwise} is only applied to transitions in $T$ that starts with a finite number of edges ($F$). Thus, the loop is exited at some point.

    The input is a group-choice net and \autoref{lem:appendix-group-stepwise-correctness} ensures that the current Petri net is always a group-choice net.

    When \autoref{alg:group-full} exits the loop, all $\tau$-transitions have at most two ingoing edges and all other transitions have at most one ingoing edge which is a 2-$\tau$-synchronisation net.
    \qed
\end{proof}

\begin{theorem}[Correctness of \autoref{alg:group-full}]\label{thm:appendix-group-full-weak-bisimulation}
    Given a group-choice net $N = (P, T, F, A, \sigma)$ and a marking $M_0: P \to \mathbb{N}$, the result of applying \autoref{alg:group-full} on $N$ and $M_0$ is a Petri net $N'$ and marking $M_0'$ such that $LTS(N, M_0) \wbisimilar LTS(N', M_0')$ and $LTS(N, M_0)$ has a divergent path iff $LTS(N', M_0')$ has. Both the transformation time and the size of $(N', M')$ are $\O(|N|)$.
\end{theorem}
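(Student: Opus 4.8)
The plan is to follow the same template as the proof of \autoref{thm:appendix-free-wf-full-weak-bisimulation}. I would show $LTS(N, M_0) \wbisimilar LTS(N', M_0')$ by induction on the number of times \autoref{alg:group-stepwise} is invoked by \autoref{alg:group-full}, using \autoref{thm:appendix-group-stepwise-weak-bisimulation} for each individual step together with reflexivity and transitivity of weak bisimulation~\cite{lts-book}. In the base case the loop body is never executed, so $N' = N$ and $M_0' = M_0$, and $LTS(N, M_0) = LTS(N', M_0')$ is trivially (strongly, hence weakly) bisimilar to itself. For the inductive step, suppose \autoref{alg:group-stepwise} has been applied $n$ times producing $(N_n, M_n)$ with $LTS(N, M_0) \wbisimilar LTS(N_n, M_n)$; if the loop terminates we are done, otherwise a further application yields $(N_{n+1}, M_{n+1})$, and since every intermediate net is still a group-choice net by \autoref{lem:appendix-group-stepwise-correctness}, \autoref{thm:appendix-group-stepwise-weak-bisimulation} applies and gives $LTS(N_n, M_n) \wbisimilar LTS(N_{n+1}, M_{n+1})$; transitivity closes the step. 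The divergent-path bi-implication propagates through the same induction, because each step preserves it by \autoref{thm:appendix-group-stepwise-weak-bisimulation}.

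The main obstacle is termination of \autoref{alg:group-full} and, more precisely, a good bound on the number of iterations — which is subtler here than for \autoref{alg:free-wf-full}, because one call to \autoref{alg:group-stepwise} rewires the presets of \emph{all} transitions in $p^*\bullet$ at once and also introduces a fresh $\tau$-transition $t^+$. I would control this with the potential function $\Phi(N) \mathrel{\coloneqq} \sum_{t \in T,\, \sigma(t) \neq \tau} \max(0, \mathrm{indeg}(t) - 1) + \sum_{t \in T,\, \sigma(t) = \tau} \max(0, \mathrm{indeg}(t) - 2)$ (where $\mathrm{indeg}(t) = |\{p \mid (p,t) \in F\}|$), i.e.\ the total number of ``excess'' ingoing edges that the guard on line~2 of \autoref{alg:group-full} still complains about. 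One application of \autoref{alg:group-stepwise} deletes the two edges $(p^*, t)$ and $(p^{**}, t)$ and adds the single edge $(p^+, t)$ for each $t \in p^*\bullet$, so every transition in $p^*\bullet$ loses exactly one ingoing edge, while the new transition $t^+$ has exactly two ingoing edges and label $\tau$ and so contributes $0$ to $\Phi$; hence $\Phi$ strictly decreases by $|p^*\bullet| \geq 1$ at each iteration. Since $\Phi(N) \leq |F|$ initially, \autoref{alg:group-stepwise} runs at most $|F|$ times, and on exit the result is a $2$-$\tau$-synchronisation net by \autoref{lem:appendix-group-full-correctness}. A small point that needs care: the selection on line~3 of \autoref{alg:group-full} may pick a place created by an earlier call to \autoref{alg:group-stepwise}; but $\Phi$ counts excess edges regardless of when a transition appeared, and the freshly created $\tau$-transitions never contribute to $\Phi$, so the decreasing-potential argument is unaffected.

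For the complexity bounds I would reuse the pre-allocation argument from \autoref{thm:appendix-free-wf-full-weak-bisimulation}: before the transformation, compute $\Phi(N)$ in $\O(|N|)$ time and reserve that many spare places and transitions (and matching slots in the adjacency-list representation of $F$ and in $M_0$) in the underlying doubling arrays. This removes the amortisation in \autoref{thm:appendix-group-stepwise-weak-bisimulation}, so each of the at-most-$|F|$ calls to \autoref{alg:group-stepwise} runs in true $\O(1)$ time and adds $\O(1)$ space. Hence the total transformation time and the size of $(N', M_0')$ are $\O(|N|) + |F|\cdot\O(1) = \O(|N|)$, as claimed. The remaining statements of \autoref{thm:appendix-group-full-weak-bisimulation} then follow by assembling the weak-bisimilarity induction, the divergent-path bi-implication, and these complexity bounds.
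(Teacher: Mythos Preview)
Your proposal is correct and follows essentially the same approach as the paper: induction on the number of applications of \autoref{alg:group-stepwise}, using \autoref{thm:appendix-group-stepwise-weak-bisimulation} at each step together with transitivity of weak bisimulation, and then delegating the complexity bounds to the same pre-allocation argument as in \autoref{thm:appendix-free-wf-full-weak-bisimulation}; in fact your treatment of the complexity via the explicit potential $\Phi$ is more detailed than the paper's, which simply says ``same arguments as in \autoref{thm:free-wf-full-weak-bisimulation}''. One small imprecision worth fixing: $\Phi$ need not drop by exactly $|p^*\bullet|$, since a $\tau$-transition in $p^*\bullet$ with exactly two ingoing edges contributes $0$ to $\Phi$ both before and after the step; however $\Phi$ does strictly decrease by at least $1$ (the selected $t^*$ satisfies the loop guard, so its own contribution is $\geq 1$ and drops by $1$), and that is all the termination and $\O(|F|)$-iteration bound require.
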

\begin{proof}
    $LTS(N, M_0) \wbisimilar LTS(N', M_0')$ will be shown by induction on the number of applications of \autoref{alg:group-stepwise}.
    
    The base case is that the original group-choice net and marking are returned by \autoref{alg:group-full} since the loop is never entered. Hence, $N = N'$ and $M_0 = M_0'$ such that $LTS(N, M_0) = LTS(N', M_0')$. These are trivially strongly bisimilar and thus also weakly bisimilar \cite[Exercise 2.61]{lts-book}. 
    
    Assume that \autoref{alg:group-stepwise} has been applied $n$ times to obtain $N'$ and $M_0'$. The induction hypothesis gives that $LTS(N, M_0) \wbisimilar LTS(N', M_0')$. If \autoref{alg:group-stepwise} is not applied on $(N', M_0')$, then $(N', M_0')$ is returned such that $LTS(N, M_0) \wbisimilar LTS(N', M_0')$ as required. Otherwise, \autoref{alg:group-stepwise} is applied on $(N', M_0')$ to obtain $(N'', M_0'')$. \autoref{thm:appendix-group-stepwise-weak-bisimulation} gives that $LTS(N', M_0') \wbisimilar LTS(N'', M_0'')$. Transitivity of weak bisimulation \cite[Exercise 2.61 \& 2.62]{lts-book} shows that $LTS(N, M_0) \wbisimilar LTS(N'', M_0'')$ as required.

    \paragraph{Divergent paths} The fact about divergent paths follows from transitivity of (bi-)implication in \autoref{thm:appendix-group-stepwise-weak-bisimulation}.

    \paragraph{Transformation time} Using the same arguments as in \autoref{thm:free-wf-full-weak-bisimulation} (free-choice nets are also group-choice nets), the time usage is $\O(|N|)$.

    \paragraph{Size of $(N', M_0')$} Using the same arguments as in \autoref{thm:free-wf-full-weak-bisimulation}, the size of $N'$ is $\O(|N|)$.
    \qed
\end{proof}

\begin{restate}{theorem}{thm:group}
\begin{theorem}[Correctness of group-choice net to CCS encoding]\label{thm:appendix-group}
    A group-choice net $N = (P, T, F, A, \sigma)$ and an initial marking $M_0: P \to \mathbb{N}$ can be encoded into a weakly bisimilar CCS process $Q$ with defining equations $\mathcal{D}$ s.t. $LTS(N, M_0)$ has a divergent path iff $LTS(Q, \mathcal{D})$ has. The encoding time and the size of $(Q, \mathcal{D})$ are $\O(|N| + \sum_{p \in P} M_0(p))$.
\end{theorem}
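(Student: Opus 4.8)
\emph{(Sketch / proof plan; detailed proof follows the pattern of the free-choice case.)}
The plan is to obtain the encoding by composing \autoref{alg:group-full} with \autoref{alg:free}, mirroring the structure of the free-choice case (\autoref{thm:free}). Given a group-choice net $N = (P, T, F, A, \sigma)$ and marking $M_0$, I would first run \autoref{alg:group-full} to get a Petri net $N'$ and marking $M_0'$; by \autoref{lem:appendix-group-full-correctness}, $N'$ is a 2-$\tau$-synchronisation net, so \autoref{alg:free} applies to $(N', M_0')$ and produces the target CCS process $Q$ with defining equations $\mathcal{D}$.

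For correctness I would chain the two ingredient results via transitivity of bisimulation. \autoref{thm:appendix-group-full-weak-bisimulation} gives $LTS(N, M_0) \wbisimilar LTS(N', M_0')$ together with the bi-implication that $LTS(N,M_0)$ has a divergent path iff $LTS(N',M_0')$ has. \autoref{thm:appendix-free-strong-bisimulation} gives $LTS(N', M_0') \bisimilar LTS(Q, \mathcal{D})$; since a strong bisimulation matches $\tau$-steps one-to-one it preserves the existence of divergent paths, and strong bisimilarity implies weak bisimilarity~\cite{lts-book}. Transitivity of weak bisimilarity~\cite{lts-book} then yields $LTS(N, M_0) \wbisimilar LTS(Q, \mathcal{D})$ and the required divergent-path bi-implication. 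For the complexity, \autoref{thm:appendix-group-full-weak-bisimulation} bounds the time and size of $(N', M_0')$ by $\O(|N|)$, and the fresh places introduced by \autoref{alg:group-stepwise} start empty so $\sum_{p} M_0'(p) = \sum_{p \in P} M_0(p)$; \autoref{thm:appendix-free-strong-bisimulation} then bounds the time and size of $(Q, \mathcal{D})$ from $(N', M_0')$ by $\O(|N'| + \sum_{p} M_0'(p)) = \O(|N| + \sum_{p \in P} M_0(p))$, and summing the two stages gives the stated bound.

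Because this final theorem is pure composition, the substance lies in the ingredients, and the main obstacle is \autoref{thm:appendix-group-stepwise-weak-bisimulation}: its weak-bisimulation relation must track, for the two merged places $p^*, p^{**}$ and the fresh place $p^+$, the number $i$ of times the new $\tau$-transition $t^+$ has fired without $t^*$ following afterwards --- the same idea as in \autoref{thm:appendix-free-wf-stepwise-weak-bisimulation}, but now each reduction step shrinks the preset of the \emph{whole} postset $p^*\bullet$ at once rather than of a single transition, so the case analysis on which transition fires (and on whether $M'(p^+)$ is zero or positive when replying to a firing of some $t \in p^*\bullet$) must be carried out uniformly for all $t \in p^*\bullet$. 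Together with the invariant that \autoref{alg:group-stepwise} preserves the group-choice property (\autoref{lem:appendix-group-stepwise-correctness}), this makes \autoref{thm:appendix-group-full-weak-bisimulation} go through by induction on the number of \autoref{alg:group-stepwise} applications, after which the present statement is immediate.
\qed
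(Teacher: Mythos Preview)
Your proposal is correct and matches the paper's approach: compose \autoref{alg:group-full} with \autoref{alg:free}, use \autoref{lem:appendix-group-full-correctness} to justify applying \autoref{alg:free}, chain \autoref{thm:appendix-group-full-weak-bisimulation} with \autoref{thm:appendix-free-strong-bisimulation} via transitivity, and carry the divergence bi-implication and complexity bounds through both stages. If anything, your write-up is more complete than the paper's one-line proof, which cites the group-choice lemmas and theorems plus transitivity but leaves the invocation of \autoref{thm:appendix-free-strong-bisimulation} (and the observation that strong bisimilarity preserves divergent paths) implicit.
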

\label{sec:appendix-group}
\begin{proof}
    By \autoref{lem:appendix-group-stepwise-correctness}+\ref{lem:appendix-group-full-correctness}, \autoref{thm:appendix-group-stepwise-weak-bisimulation}+\ref{thm:appendix-group-full-weak-bisimulation} and transitivity of bisimulation~\cite{lts-book}.
    \qed
\end{proof}
\end{restate}

\end{document}